\newcommand{\therunningauthor}{}
\newcommand{\therunningtitle}{}
\newcommand{\theinstitute}{}
\renewcommand{\and}{\unskip,\xspace}
\newcommand{\authorrunning}[1]{\renewcommand\therunningauthor{#1}}
\newcommand{\titlerunning}[1]{\renewcommand\therunningtitle{#1}}
\newcommand{\institute}[1]{\renewcommand\theinstitute{#1}}
\newcommand{\inst}[1]{$^{\text{#1}}$}
\newcommand{\orcidID}[1]{}
\newcommand{\email}[1]{\url{#1}}
\newcommand{\keywords}[1]{\newcommand{\and}{\unskip,\xspace}\textbf{Keywords:} #1}
\newtheorem{definition}{\textbf{Definition}}
\newtheorem{lemma}{\textbf{Lemma}}
\let\svthefootnote\thefootnote
\newcommand\blankfootnote[1]{%
  \let\thefootnote\relax\footnotetext{#1}%
  \let\thefootnote\svthefootnote%
}
\let\oldmaketitle\maketitle
\renewcommand{\maketitle}{
  \oldmaketitle
  \blankfootnote{$^\text{1}$\theinstitute}
  \pagestyle{fancy}\rhead{\therunningauthor}\lhead{\therunningtitle}
}
\newcommand{\kmer}{$k$-mer\xspace}
\newcommand{\kmers}{\kmer{}s\xspace}
\newcommand{\wmer}{$w$-mer\xspace}
\newcommand{\wmers}{\wmer{}s\xspace}
\newcommand{\db}{de~Bruijn\xspace}
\newcommand{\dbg}{B}
\newcommand{\uhs}[2]{\ensuremath{(#1,#2)}-UHS\xspace}
\newcommand{\mks}{\mathcal{M}_{\sigma, w}}
\newcommand{\fbs}{\mathcal{F}_{\sigma, w}}
\newcommand{\re}{\text{Re}}
\newcommand{\im}{\text{Im}}
\newcommand{\wmax}{W_{\text{max}}}
\DeclarePairedDelimiter\norm{\lVert}{\rVert}
\newcommand{\CR}[2]{\ifthenelse{\boolean{camera_ready}}{#1}{#2}}
\newcommand{\cqed}{\ifthenelse{\boolean{camera_ready}}{\qed}{}}
\newcommand{\CRhide}[1]{\CR{}{#1}}
\begin{document}
\title{Lower density selection schemes via small universal hitting sets with short remaining path length}
\titlerunning{Lower density selection schemes}
%
\author{Hongyu~Zheng\inst{1}\orcidID{0000-0002-7668-2090} \and
Carl~Kingsford\inst{1}\orcidID{0000-0002-0118-5516} \and
Guillaume~Mar\c{c}ais\inst{1}\orcidID{0000-0002-5083-5925}}
\authorrunning{H.\ Zheng et al.}
%
\institute{Computational Biology Department, Carnegie Mellon University, Pittsburgh PA 15213, USA\\
  \email{gmarcais@cs.cmu.edu}\\
\url{http://www.cs.cmu.edu/~gmarcais}}

\maketitle              
\begin{abstract}
  Universal hitting sets are sets of words that are unavoidable: every long enough sequence is hit by the set (i.e., it contains a word from the set).
  There is a tight relationship between universal hitting sets and minimizers schemes, where minimizers schemes with low density (i.e., efficient schemes) correspond to universal hitting sets of small size.
  Local schemes are a generalization of minimizers schemes which can be used as replacement for minimizers scheme with the possibility of being much more efficient.
  We establish the link between efficient local schemes and the minimum length of a string that must be hit by a universal hitting set.
  We give bounds for the remaining path length of the Mykkeltveit universal hitting set.
  Additionally, we create a local scheme with the lowest known density that is only a log factor away from the theoretical lower bound.
  \keywords{\db graph \and minimizers \and universal hitting set \and depathing set.}
\end{abstract}

\section{Introduction}

We study the problem of finding \emph{Universal Hitting Sets}~\cite{DOCKS} (UHS).
A UHS is a set of words, each of length $k$, such that every long enough string (say of length $L$ or longer) contains as a substring an element from the set.
We call such a set a universal hitting set for parameters $k$ and $L$.
They are sets of unavoidable words, i.e., words that must be contained in any long strings, and we are interested in the relationship between the size of these sets and the length $L$.

More precisely, we say that a \kmer $a$ (a string of length $k$) \emph{hits} a string $S$ if $a$ appears as a substring of $S$.
A set $A$ of \kmers hits $S$ if at least one \kmer of $A$ hits $S$.
A universal hitting set for length $L$ is a set of \kmers that hits every string of length $L$.
Equivalently, the \emph{remaining path length} of a universal set is the length of the longest string that is not hit by the set ($L-1$ here).


The study of universal hitting sets is motivated in part by the link between UHS and the common method of \emph{minimizers}~\cite{minimizers1,minimizers2,winnowing}.
The minimizers method is a way to sample a string for representative \kmers in a deterministic way by breaking a string into windows, each window containing $w$ \kmers, and selecting in each window a particular \kmer (the ``minimum \kmer'', as defined by a preset order on the \kmers).
This method is used in many bioinformatics software programs (e.g., \cite{kmc2,Mashmap,bcalm2,samsami,sparseassembler}) to reduce the amount of computation and improve run time (see~\cite{sketching-review} for usage examples).
The minimizers method is a family of methods parameterized by the order on the \kmers used to find the minimum.
The \emph{density} is defined as the expected number of sampled \kmers per unit length of sequence.
Depending on the order used, the density varies.

In general, a lower density (i.e., fewer sampled \kmers) leads to greater computational improvements, and is therefore desirable.
For example, a read aligner such a Minimap2~\cite{minimap2} stores all the locations of minimizers in the reference sequence in a database.
It then finds all the minimizers in a read and searches in the database for these minimizers.
The locations of these minimizers are used as seeds for the alignment.
Using a minimizers scheme with a reduced density leads to a smaller database and fewer locations to consider, hence an increased efficiency, while preserving the accuracy.

There is a two-way correspondence between minimizers methods and universal hitting sets: each minimizers method has a corresponding UHS, and a UHS defines a family of \emph{compatible} minimizers methods~\cite{asymptotic-minimizers,improving-minimizers}.
The remaining path length of the UHS is upper-bounded by the number of bases in each window in the minimizers scheme ($L \le w+k-1$).
Moreover, the relative size of the UHS, defined as the size of UHS over the number of possible \kmers, provides an upper-bound on the density of the corresponding minimizers methods: the density is no more than the relative size of the universal hitting set.
Precisely, $\frac{1}{w} \le d \le \frac{|U|}{\sigma^k}$, where $d$ is the density, $U$ is the universal hitting set, $\sigma^k$ is the total number of \kmers on an alphabet of size $\sigma$, and $w$ is the window length.
In other words, the study of universal hitting sets with small size leads to the creation of minimizers methods with provably low density.

Local schemes~\cite{mykkeltveit} and forward schemes are generalizations of minimizers schemes.
These extensions are of interest because they can be used in place of minimizers schemes while sampling \kmers with lower density.
In particular, minimizers schemes cannot have density close to the theoretical lower bound of $1/w$ when $w$ becomes large, while local and forward schemes do not suffer from this limitation~\cite{asymptotic-minimizers}.
Understanding how to design local and forward schemes with low density will allow us to further improve the computation efficiency of many bioinformatics algorithms.

The previously known link between minimizers schemes and UHS relied on the definition of an ordering between \kmers, and therefore is not valid for local and forward scheme that are not based on any ordering.
Nevertheless, UHSs play a central role in understanding the density of local and forward schemes.

Our first contribution is to describe the connection between UHSs, local and forward schemes.
More precisely, there are two connections: first between the density of the schemes and the relative size of the UHS, and second between the window size $w$ of the scheme and the \emph{remaining path length} of the UHS (i.e., the maximum length $L$ of a string that does not contain a word from the UHS).
This motivates our study of the relationship between the size of a universal hitting set $U$ and the remaining path length of $U$.

There is a rich literature on unavoidable word sets (e.g., see~\cite{lothaire2002algebraic}).
The setting for UHS is slightly different for two reasons.
First, we impose that all the words in the set $U$ have the same length $k$, as a \kmer is a natural unit in bioinformatics applications.
Second, the set $U$ must hit any string of a given finite length $L$, rather than being unavoidable only by infinitely long strings.


Mykkeltveit~\cite{mykkeltveit} answered the question of what is the size of a minimum unavoidable set with \kmers by giving an explicit construction for such a set.
The \kmers in the Mykkeltveit set are guaranteed to be present in any infinitely long sequence, and the size of the Mykkeltveit set is minimum in the sense that for any set $\mathcal{S}$ with fewer \kmers there is an infinitely long sequence that avoids $\mathcal{S}$.
On the other hand, the construction gives no indication on the remaining path length.


The DOCKS~\cite{DOCKS} and ReMuVal~\cite{remuval} algorithms are heuristics to generate unavoidable sets for parameters $k$ and $L$.
Both of these algorithms use the Mykkeltveit set as a starting point.
In many practical cases, the longest sequence that does not contain any \kmer from the Mykkeltveit set is much larger than the parameter $L$ of interest (which for a compatible minimizers scheme correspond to the window length).
Therefore, the two heuristics extend the Mykkeltveit set in order to cover every $L$-long sequence.
These greedy heuristics do not provide any guarantee on the size of the unavoidable set generated compared to the theoretical minimum size and are only computationally tractable for limited ranges of $k$ and $L$.

Our second contribution is to give upper and lower bounds on the remaining path length of the Mykkeltveit sets.
These are the first bounds on the remaining path length for minimum size sets of unavoidable \kmers. 



Defining local or forward schemes with density of $O(1/w)$ (that is, within a constant factor of the theoretical lower bound) is not only of practical interest to improve the efficiency of existing algorithms, but it is also interesting for a historical reason.
Both Roberts \emph{et al.}~\cite{minimizers1} and Schleimer \emph{et al.}~\cite{winnowing} used a probabilistic model to suggest that minimizers schemes have an expected density of $2/w$.
Unfortunately, this simple probabilistic model does not correctly model the minimizers schemes outside of a small range of values for parameters $k$ and $w$, and minimizers do not have an $O(1/w)$ density in general.
Although the general question of whether a local scheme with $O(1/w)$ exists is still open, our third contribution is an almost optimal forward scheme with density of $O(\ln(w)/w)$ density.
This is the lowest known density for a forward scheme, beating the previous best density of $O(\sqrt{w}/w)$~\cite{asymptotic-minimizers}, and hinting that $O(1/w)$ might be achievable.

Understanding the properties of universal hitting sets and their many interactions with selection schemes (minimizers, forward and local schemes) is a crucial step toward designing schemes with lower density and improving the many algorithms using these schemes.
In Section~\ref{sec:results}, we give an overview of the results, and in Section~\ref{sec:method} we give \CR{proofs sketches. Full proofs are available in the extended version of the paper on arXiv.}{detailed proofs.}
Further research directions are discussed in Section~\ref{sec:discussion}.

\section{Results}\label{sec:results}

\subsection{Notation}\label{sec:notations}

\paragraph{Universal hitting sets.}

Consider a finite alphabet $\Sigma=\{ 0, \ldots, \sigma-1 \}$ with $\sigma=|\Sigma|$ elements.
If $a\in\Sigma$, $a^k$ denotes the letter $a$ repeated $k$ times.
We use $\Sigma^k$ to denote the set of strings of length $k$ on alphabet $\Sigma$, and call them \kmers.
If $S$ is a string, $S[n,l]$ denotes the substring starting at position $n$ and of length $l$.
For a \kmer $a \in \Sigma^k$ and an $l$-long string $S \in \Sigma^l$, we say ``$a$ hits $S$'' if $a$ appears as substring of $S$ ($a = S[i,k]$ for some $i$).
For a set of \kmers $A\subseteq \Sigma^k$ and $S\in \Sigma^l$, we say ``$A$ hits $S$'' if there exists at least one \kmer in $A$ that hits $S$.
A set $A \subseteq \Sigma^k$ is a universal hitting set for length $L$ if $A$ hits every string of length $L$. 

\paragraph{\db graphs.}
Many questions regarding strings have an equivalent formulation with graph terminology using \emph{\db graphs}.
The \db graph $\dbg_{\Sigma,k}$ on alphabet $\Sigma$ and of order $k$ has a node for every \kmer, and an edge $(u,v)$ for every string of length $k+1$ with prefix $u$ and suffix is $v$.
There are $\sigma^k$ vertices and $\sigma^{k+1}$ edges in the \db graph of order $k$.

There is a one-to-one correspondence between strings and paths in $\dbg_{\Sigma,k}$: a path with $w$ nodes corresponds to a string of $L=w+k-1$ characters.
A universal hitting set $A$ corresponds to a \emph{depathing set} of the \db graph: a universal hitting set for $k$ and $L$ intersects with every path in the \db graph with $w=L-k+1$ vertices.
We say ``$A$ is a $(\alpha,l)$-UHS'' if $A$ is a set of \kmers that is a universal hitting set, with relative size $\alpha = |A|/\sigma^k$ and hits every walk of $l$ vertices (and therefore every string of length $L=l+k-1$).

A \emph{\db sequence} is a particular sequence of length $\sigma^k+k-1$ that contains every possible \kmer once and only once.
Every \db graph is Hamiltonian and the sequence spelled out by a Hamiltonian tour is a \db sequence.

\paragraph{Selection schemes.}

\begin{figure}[h]
  \newcommand{\M}[1]{\textcolor{red}{#1}}

  \begin{center}
    (a)~\raisebox{-.5\height}{
      \scriptsize
      \begin{tabular}{l}
        \texttt{CACTGCTGTACCTCTTCT}                         \\\midrule
        \texttt{C\M{ACT}GCT-----------}                     \\
        \texttt{-\M{ACT}GCTG----------}                     \\
        \texttt{--\M{CTG}CTGT---------}                     \\
        \texttt{---TG\M{CTG}TA--------}                     \\
        \texttt{----G\M{CTG}TAC-------}                     \\
        \texttt{-----CTGT\M{ACC}------}                     \\
        \texttt{------TGT\M{ACC}T-----}                     \\
        \texttt{-------GT\M{ACC}TC----}                     \\
        \texttt{--------T\M{ACC}TCT---}                     \\
        \texttt{---------\M{ACC}TCTT--}                     \\
        \texttt{----------\M{CCT}CTTC-}                     \\
        \texttt{-----------\M{CTC}TTCT}                     \\\midrule
      \end{tabular}
    } (b)~\raisebox{-.5\height}{
      \scriptsize
      \begin{tabular}{l}
        \texttt{CACTGCTGTACCTCTTCT}                         \\\midrule
        \texttt{C\M{A}CTGCT-----------}                     \\
        \texttt{-ACTGCT\M{G}----------}                     \\
        \texttt{--CTGCTG\M{T}---------}                     \\
        \texttt{---TGCT\M{G}TA--------}                     \\
        \texttt{----GCT\M{G}TAC-------}                     \\
        \texttt{-----CT\M{G}TACC------}                     \\
        \texttt{------TG\M{T}ACCT-----}                     \\
        \texttt{-------G\M{T}ACCTC----}                     \\
        \texttt{--------\M{T}ACCTCT---}                     \\
        \texttt{---------ACCT\M{C}TT--}                     \\
        \texttt{----------C\M{C}TCTTC-}                     \\
        \texttt{-----------CTC\M{T}TCT}                     \\\midrule
      \end{tabular}
    } (c)~\raisebox{-.5\height}{ 
      \scriptsize
      \begin{tabular}{l}
        \texttt{CACTGCTGTACCTCTTCT}                         \\\midrule
        \texttt{C\M{A}CTGCT-----------}                     \\
        \texttt{-ACTGCT\M{G}----------}                     \\
        \texttt{--CTGCTG\M{T}---------}                     \\
        \texttt{---TGCTG\M{T}A--------}                     \\
        \texttt{----GCTG\M{T}AC-------}                     \\
        \texttt{-----CTG\M{T}ACC------}                     \\
        \texttt{------TG\M{T}ACCT-----}                     \\
        \texttt{-------G\M{T}ACCTC----}                     \\
        \texttt{--------\M{T}ACCTCT---}                     \\
        \texttt{---------ACCT\M{C}TT--}                     \\
        \texttt{----------CCT\M{C}TTC-}                     \\
        \texttt{-----------CTC\M{T}TCT}                     \\\midrule
      \end{tabular}
    }
  \end{center}

  \caption{(a)~Example of selecting minimizers with $k=3$, $w=5$ and the lexicographic order (i.e., $\texttt{AAA} < \texttt{AAC} < \texttt{AAG} < \ldots < \texttt{TTT}$).
    The top line is the input sequence, each subsequent line is a $7$-bases long window (the number of bases in a window is $w+k-1=7$) with the minimum $3$-mer highlighted.
    The positions $\{1, 2, 5, 9, 10, 11\}$ are selected for a density $d=6/(18-3+1)=0.375$.
    (b)~On the same sequence, an example of a selection scheme for $w=7$ (and $k=1$ because it is a selection scheme, hence the number of bases in a window is also $w$).
    The set of positions selected is $\{ 1, 6, 7, 8, 11, 13, 14 \}$.
    This is not a forward scheme as the sequence of selected position is not non-decreasing.
    (c)~A forward selection scheme for $w=7$ with selected positions $\{ 1, 7, 8, 12, 13\}$.
    Like the minimizers scheme, the sequence of selected positions is non-decreasing.
  }
  \label{fig:minimizer-ex}
\end{figure}

A \emph{local scheme}~\cite{winnowing} is a method to select positions in a string.
A local scheme is parameterized by a \emph{selection function} $f$.
It works by looking at every \wmer of the input sequence $S$:  $S[0,w], S[1,w], \ldots$, and selecting in each window a position according to the selection function $f$.
The selection function selects a position in a window of length $w$, i.e., it is a function $f: \Sigma^{w} \rightarrow [0:w-1]$.
The output of a forward scheme is a set of selected positions: $\{ i + f(S[i,w]) \mid 0 \le i < |S|-w \}$.

A \emph{forward scheme} is a local scheme with a selection function such that the selected positions form a non-decreasing sequence.
That is, if $\omega_1$ and $\omega_2$ are two consecutive windows in a sequence $S$, then $f(\omega_2) \ge f(\omega_1)-1$.

A \emph{minimizers scheme} is scheme where the selection function takes in the sequence of $w$ consecutive \kmers and returns the ``minimum'' \kmer in the window (hence the name minimizers).
The minimum is defined by a predefined order on the \kmers (e.g., lexicographic order) and the selection function is $f: \Sigma^{w+k-1}\rightarrow [0:w-1]$.

See Figure~\ref{fig:minimizer-ex} for examples of all 3 schemes.
The local scheme concept is the most general as it imposes no constraint on the selection function, while a forward scheme must select positions in a non-decreasing way.
A minimizers scheme is the least general and also selects positions in a non-decreasing way.

Local and forward schemes were originally defined with a function defined on a window of $w$ \kmers, $f: \Sigma^{w+k-1} \rightarrow [0:w-1]$, similarly to minimizers.
Selection schemes are schemes with $k=1$, and have a single parameter $w$ as the word length.
While the notion of \kmer is central to the definition of the minimizers schemes, it has no particular meaning for a local or forward scheme: these schemes select positions within each window of a string $S$, and the sequence of the \kmers at these positions is no more relevant than sequence elsewhere in the window to the selection function.

There are multiple reasons to consider selection schemes.
First, they are slightly simpler as they have only one parameter, namely the window length $w$.
Second, in our analysis we consider the case where $w$ is asymptotically large, therefore $w \gg k$ and the setting is similar to having $k=1$.
Finally, this simplified problem still provides information about the general problem of local schemes.
Suppose that $f$ is the selection function of a selection scheme, for any $k>1$ we can define $g_k:\Sigma^{w+k-1}\rightarrow [0,w-1]$ as $g_k(\omega)=f(\omega[0,w])$.
That is, $g_k$ is defined from the function $f$ by ignoring the last $k-1$ characters in a window.
The functions $g_k$ define proper selection functions for local schemes with parameter $w$ and $k$, and because exactly the same positions are selected, the density of $g_k$ is equal to the density of $f$.
In the following sections, unless noted otherwise, we use forward and local schemes to denote forward and local selection schemes. 

\paragraph{Density.}
Because a local scheme on string $S$ may pick the same location in two different windows, the number of selected positions is usually less than $|S|-w+1$.
The \emph{particular density} of a scheme is defined as the number of distinct selected positions divided by $|S|-w+1$ (see Figure~\ref{fig:minimizer-ex}).
The \emph{expected density}, or simply the \emph{density}, of a scheme is the expected density on an infinitely long random sequence.
Alternatively, the expected density is computed exactly by computing the particular density on any \db sequence of order $\ge 2w-1$.
In other words, a \db sequence of large enough order ``looks like'' a random infinite sequence with respect to a local scheme (see \cite{improving-minimizers} and Section~\ref{sec:uhs-from-selection}).

\subsection{Main Results}\label{sec:result_overview}

The density of a local scheme is in the range $[1/w,1]$, as $1/w$ corresponds to selecting exactly one position per window, and $1$ corresponds to selecting every position.
Therefore, the density goes from a low value with a constant number of positions per window (density is $O(1/w)$, which goes to $0$ when $w$ gets large), to a high with constant value (density is $\Omega(1)$) where the number of positions per window is proportional to $w$.
When the minimizers and winnowing schemes were introduced, both papers used a simple probabilistic model to estimate the expected density to $2/(w+1)$, or about $2$ positions per window.
Under this model, this estimate is within a constant factor of the optimal, it is $O(1/w)$.

Unfortunately, this simple model properly accounts for the minimizers behavior only when $k$ and $w$ are small.
For large $k$ ---i.e., $k \gg w$--- it is possible to create almost optimal minimizers scheme with density $\sim 1/w$.
More problematically, for large $w$ ---i.e., $w \gg k$--- and for all minimizer schemes the density becomes constant ($\Omega(1)$)~\cite{asymptotic-minimizers}.
In other words, minimizers schemes cannot be optimal or within a constant factor of optimal for large $w$, and the estimate of $2/(w+1)$ is very inaccurate in this regime.

This motivates the study of forward schemes and local schemes.
It is known that there exists forward schemes with density of $O(1/\sqrt{w})$~\cite{asymptotic-minimizers}.
This density is not within a constant factor of the optimal density but at least shows that forward and local schemes do not have constant density like minimizers schemes for large $w$ and that they can have much lower density.

\paragraph{Connection between UHS and selection schemes.}

In the study of selection schemes, as for minimizers schemes, universal hitting sets play a central role.
We describe the link between selection schemes and UHS, and show that the existence of a selection scheme with low density implies the existence of a UHS with small relative size.

\begin{restatable}{theorem}{uhslocal} \label{tm:uhslocal}
  Given a local scheme $f$ on \wmers with density $d_f$, we can construct a $\uhs{d_f}{w}$ on $(2w-1)$-mers.
  If $f$ is a forward scheme, we can construct a $\uhs{d_f}{w}$ on $(w+1)$-mers.
\end{restatable}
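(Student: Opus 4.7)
The plan is to construct the hitting set $U$ explicitly from $f$ and verify separately that $U$ hits every walk of $w$ vertices in the appropriate \db graph and that it has relative size $d_f$. In both cases the construction is local, in the sense that the defining condition for an element $x$ of $U$ depends only on the values of $f$ on the windows contained in $x$, which makes the size computation a short stationarity argument.

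For the local scheme, observe that whether a position $p$ of a long string $S$ is selected depends only on the windows $S[p - w + 1, w], \ldots, S[p, w]$, hence only on the $(2w - 1)$-mer $S[p - w + 1, 2w - 1]$ centered at $p$. Accordingly I would define
\[ U = \{\, x \in \Sigma^{2w - 1} : \exists\, j \in [0, w - 1],\ f(x[j, w]) = w - 1 - j \,\}, \]
so that $x \in U$ iff the middle position $w - 1$ of $x$ is selected by $f$ when $x$ is read on its own. For any $S$ of length $3w - 2$, the middle window $S[w - 1, w]$ selects some position $p = (w - 1) + f(S[w - 1, w]) \in [w - 1, 2w - 2]$, and $x := S[p - w + 1, 2w - 1]$ is a $(2w - 1)$-mer substring of $S$ whose middle in absolute coordinates is exactly $p$. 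A direct reindexing identifies ``$p$ is selected in $S$'' with ``the middle of $x$ is selected by $f$'', establishing $x \in U$.

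For the forward scheme I would exploit monotonicity: consecutive windows select non-decreasing absolute positions. Viewing a $(w+1)$-mer $x$ as a pair of consecutive windows $x[0, w]$ and $x[1, w]$ selecting absolute positions $f(x[0, w])$ and $1 + f(x[1, w])$, the two agree iff $1 + f(x[1, w]) = f(x[0, w])$, which motivates
\[ U = \{\, x \in \Sigma^{w + 1} : 1 + f(x[1, w]) \neq f(x[0, w]) \,\}. \]
For any $S$ of length $2w$ the first window selects a position in $[0, w - 1]$ while the last window selects a position in $[w, 2w - 1]$, so by monotonicity some consecutive pair of windows of $S$ must disagree, yielding a $(w + 1)$-mer substring of $S$ that lies in $U$.

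The relative-size claim in both cases reduces to a single stationarity observation: on a uniformly random infinite sequence the density of $f$ equals the probability that a fixed position is selected (local case), or that a fixed consecutive pair of windows selects distinct absolute positions (forward case). By the locality of the defining condition for $U$, this probability is $|U|/\sigma^{2w - 1}$ and $|U|/\sigma^{w + 1}$ respectively, giving relative size exactly $d_f$. I expect the main obstacle to be the bookkeeping in the local case: the identification between ``$p$ is selected in $S$'' and ``the middle of $x$ is selected by $f$'' requires matching the $w$ windows of $x$ with the $w$ windows of $S$ that cover $p$, and checking that $f(x[j, w]) = w - 1 - j$ corresponds to $f(S[i, w]) = p - i$ under the substitution $i = p - w + 1 + j$. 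Once this correspondence is pinned down, both the hitting property and the size computation follow routinely.
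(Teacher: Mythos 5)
Your proof is correct, and it takes a genuinely different route from the paper. The paper builds its UHS $\mathcal{C}_f$ from the criterion ``the \emph{last} window of the context makes a \emph{new} selection,'' i.e.\ $\mathcal{C}_f=\{c\in\Sigma^{2w-1}\mid\forall\,0\le i\le w-2,\ f(c[w-1,w])+(w-1)\ne f(c[i,w])+i\}$, and proves the hitting property by contradiction: if a $w$-vertex path in the order-$(2w-1)$ \db graph avoided $\mathcal{C}_f$, tracing the last window's selection backwards produces a context that must be in $\mathcal{C}_f$. You instead use the criterion ``the \emph{middle} position of the context is selected by \emph{some} window,'' and your hitting proof is constructive: the middle window of any length-$(3w-2)$ string selects some absolute position $p$, and the $(2w-1)$-mer centered at $p$ is by construction in $U$. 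These are generally \emph{different} sets of the same cardinality --- for $w=2$, $\sigma=2$ with $f(00)=0,f(01)=1,f(10)=0,f(11)=1$ one gets $U=\{000,010,011,100,110,111\}$ but $\mathcal{C}_f=\{000,001,011,100,101,111\}$ --- and correspondingly the two size arguments compute the density differently: the paper counts windows that make a new selection, while you count positions that are ever selected. (Both equal the density by a one-to-one correspondence between selected positions and their first selecting windows, so the sets have equal size, though neither argument needs to invoke that correspondence explicitly.) The constructive hitting argument is arguably cleaner than the paper's contradiction. Your forward-scheme set coincides with the paper's (up to rephrasing), and that argument is fine.
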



\paragraph{Almost-optimal relative size UHS for linear path length.}

Conversely, because of their link to forward and local selection schemes, we are interested in universal hitting set with remaining path length $O(w)$.
Necessarily a universal hitting hits any infinitely long sequences.
On \db graphs, a set hitting every infinitely long sequences is a \emph{decycling set}: a set that intersects with every cycle in the graph.
In particular, a decycling set must contain an element in each of the cycles obtained by the rotation of the \wmers (e.g., cycle of the type $001 \rightarrow 010 \rightarrow 100 \rightarrow 001$).
The number of these rotation cycles is known as the ``necklace number'' $N_{\sigma, w} = \frac{1}{n}\sum_{d|w} \varphi(d)\sigma^{w/d} = O(\sigma^w/w)$~\cite{golomb}, where $\varphi(d)$ is the Euler's totient function.

Consequently, the relative size of a UHS, which contains at least one element from each of these cycles, is lower-bounded by $O(1/w)$.
The smallest previously known UHS with $O(w)$ remaining path length has a relative size of $O(\sqrt{w}/w)$~\cite{asymptotic-minimizers}.
We construct a smaller universal hitting set with relative size $O(\ln(w)/w)$:
\begin{restatable}{theorem}{uhslinear}
  For every sufficiently large $w$, there is a forward scheme with density of $O(\ln(w)/w)$ and a corresponding \uhs{O(\ln(w)/w)}{w}.
\end{restatable}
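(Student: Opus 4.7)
The plan is to reduce the statement to constructing a forward scheme: by Theorem~\ref{tm:uhslocal}, any forward scheme $f$ on \wmers with density $O(\ln(w)/w)$ immediately produces an \uhs{O(\ln(w)/w)}{w} on $(w{+}1)$-mers. So the task is to exhibit such an $f$.

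I would fix an integer parameter $k = k(w)$ and construct a set of anchor \kmers $A \subseteq \Sigma^k$ with two properties: (i) $A$ hits every string of length $w$, and (ii) $|A|/\sigma^k = O(\ln(w)/w)$. Given such an $A$, define $f(\omega) = \min\{i : \omega[i,k] \in A\}$. This selection function is well defined by (i) and is forward: sliding the window right by one character can only cause the anchor set inside the window to lose its leftmost element, so the absolute position of the chosen anchor is non-decreasing. A routine computation on a \db sequence of sufficiently large order (which realizes the expected density, by the density paragraph of Section~\ref{sec:notations}) shows that each occurrence of an anchor is the leftmost anchor of some window and each window selects exactly one anchor, so the density of $f$ equals the relative size of $A$, namely $O(\ln(w)/w)$.

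The combinatorial core is the construction of $A$. I would proceed by layers across $O(\ln w)$ geometric scales. The base layer $A_0$ is the Mykkeltveit decycling set, which has relative size $O(1/k)$ and breaks every cycle in $\dbg_{\Sigma,k}$ but whose remaining path length can exceed $w$. Each subsequent layer $A_j$ adds anchors targeted at paths that survive $A_0 \cup \cdots \cup A_{j-1}$ and whose length lies in a dyadic scale $[w/2^j,\, w/2^{j-1})$. A Mykkeltveit-style counting argument bounds the number of such surviving paths, and I would show that an extra anchor budget of relative size $O(1/w)$ per layer suffices to hit them all. Summing the $O(\ln w)$ contributions yields total relative size $O(\ln(w)/w)$.

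The main obstacle is this layered counting: at each layer, one must control the number of paths of a given dyadic length that avoid all previously chosen anchors and argue that a sparse additional anchor set breaks them all. This is a nontrivial refinement of Mykkeltveit's original argument, since surviving paths at layer $j$ inherit structure from the depathing choices at layers $0,\ldots,j-1$. A secondary calibration issue is choosing $k$ so that the base-layer cost $O(1/k)$ is already $O(\ln(w)/w)$ (forcing $k = \Omega(w/\ln w)$) while keeping $\sigma^k$ large enough that each patching layer can realize its $O(1/w)$ contribution without integer-rounding overhead.
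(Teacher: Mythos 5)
Your reduction via Theorem~\ref{tm:uhslocal} and the ``leftmost-anchor'' forward scheme are sound: if $A\subseteq\Sigma^k$ hits every $w$-long string and has relative size $O(\ln(w)/w)$, then $f(\omega)=\min\{i:\omega[i,k]\in A\}$ is well defined, is forward, and has density exactly $|A|/\sigma^k$ (every anchor occurrence is the leftmost anchor of the window beginning at it, and conversely every selected position is an anchor). The gap is the construction of $A$, and it is not a minor one. You propose the Mykkeltveit decycling set on $k$-mers with $k=\Omega(w/\ln w)$ as a base layer, followed by $O(\ln w)$ patch layers of relative size $O(1/w)$ each. This is essentially the DOCKS/ReMuVal strategy, and the paper explicitly notes these heuristics come with no bound on the number of extra $k$-mers needed. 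Worse, the scales do not match: the paper separately proves the Mykkeltveit set on $k$-mers has remaining path length $\Theta(k^3)$, which with $k=\Omega(w/\ln w)$ is $\Omega(w^3/\ln^3 w)\gg w$. The surviving paths you must cut are polynomially longer than $w$, so your dyadic buckets $[w/2^j,w/2^{j-1})$, all of length below $w$, never reach them. The ``Mykkeltveit-style counting of surviving paths'' you invoke is not a routine refinement of the original argument; it is the open part, and you correctly flag it. As written there is no route to a proof.

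The paper's construction bypasses decycling sets entirely. Put $d=\lfloor\log_\sigma(w/\ln w)\rfloor-1$ and let $\fbs$ be the set of $w$-mers that either begin with $0^d$ or contain no occurrence of $0^d$. Any walk of $w$ consecutive $w$-mers either starts with a $w$-mer that avoids $0^d$ (hitting the second clause) or, shifting to the first occurrence of $0^d$, reaches within $w-d$ steps a $w$-mer that begins with $0^d$ (hitting the first clause); so the remaining path length is $w-d<w$. The first clause has relative size $\sigma^{-d}=\Theta(\ln w/w)$. For the second clause, the paper expresses the probability that a random $w$-mer avoids $0^d$ as $\norm{A_d^w p_0}_1$, where $A_d$ is the $d\times d$ transition matrix of the pattern-matching automaton for $0^d$, shows its dominant eigenvalue is at most $1-\mu^{d+1}\le 1-\ln(w)/w$ with $\mu=1/\sigma$, and concludes $\norm{A_d^w p_0}_1\le 3(1-\ln w/w)^w=O(1/w)$. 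The corresponding forward scheme is ``leftmost start of $0^d$, else position $0$.'' That fallback is the idea your plan is missing: by insisting that $A$ alone hit every $w$-long string you are forced into heavy depathing machinery, whereas allowing a default choice when the pattern is absent lets one short pattern do the entire job and reduces the size bound to an elementary eigenvalue estimate.
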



\paragraph{Remaining path length bounds for the Mykkeltveit sets.}

Mykkeltveit~\cite{mykkeltveit} gave an explicit construction for a decycling set with exactly one element from each of the rotation cycles, and thereby proved a long standing conjecture~\cite{golomb} that the minimal size of decycling sets is equal to the necklace number.
Under the UHS framework, it is natural to ask what the remaining path length for Mykkeltveit sets is.
Given that the \db graph is Hamiltonian, there exists paths of length exponential in $w$: the Hamiltonian tours have $\sigma^w$ vertices.
Nevertheless, we show that the remaining path length for Mykkeltveit sets is upper- and lower-bounded by polynomials of $w$:

\begin{restatable}{theorem}{mkbounds}
For sufficiently large $w$, the Mykkeltveit set is a \uhs{N_{\sigma, w}/\sigma^w}{g(w)}, having the same size as minimal decycling sets, while $g(w)=O(w^3)$ and $g(w) > cw^2$ for some constant $c$.
\end{restatable}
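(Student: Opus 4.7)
The proof plan is to use Mykkeltveit's complex-number representation. For a \wmer $a = a_0 \ldots a_{w-1} \in \Sigma^w$, set $\phi(a) = \sum_{j=0}^{w-1} a_j \zeta^j$ where $\zeta = e^{2\pi i/w}$; the Mykkeltveit set $M$ consists of the \wmers with $\arg \phi(a) \in [0, 2\pi/w)$, augmented by a tie-breaking rule when $\phi(a) = 0$. A direct computation shows that along an edge of $\dbg_{\Sigma, w}$ from $a_0 \ldots a_{w-1}$ to $a_1 \ldots a_{w-1} b$, the embedding evolves as $\phi' = \zeta^{-1}(\phi + (b - a_0))$: one perturbs $\phi$ by the real number $\delta = b - a_0 \in [-(\sigma-1), \sigma-1]$ and then rotates by angle $-2\pi/w$. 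This rotation-with-drift dynamic is the common engine of both bounds.

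For the upper bound $g(w) = O(w^3)$, the plan is to follow the trajectory $\phi_0, \phi_1, \ldots, \phi_T$ along a putative $M$-avoiding walk and bound $T$ by cases on $|\phi_t|$. In the large-radius regime $|\phi_t| \ge R = \Theta(\sigma w)$, each step changes $\arg \phi_t$ by $-2\pi/w + O(\sigma/R)$, so $\arg \phi_t$ sweeps across the width-$2\pi/w$ Mykkeltveit sector within $O(w)$ steps and forces a hit. In the small-radius regime $|\phi_t| < R$, the rotation is weak, but the admissible real perturbations combined with the systematic rotation produce a drift that expels $|\phi|$ from the disk within a polynomial number of steps. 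A careful bookkeeping of how the trajectory alternates between these regimes yields the claimed $O(w^3)$ bound.

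For the lower bound $g(w) \ge cw^2$, the plan is to exhibit an explicit avoiding walk built by chaining partial traversals of rotation cycles in $\dbg_{\Sigma, w}$. Every aperiodic necklace contributes exactly one Mykkeltveit representative, so starting at the \wmer immediately after this representative and shifting $w - 1$ times visits all non-$M$ vertices of the cycle. At the end one refuses to close the cycle, instead appending a character $b \ne a_0$ to land in a different rotation cycle at a position still admitting $\Omega(w)$ further non-$M$ steps; since $M$ occupies only a $\Theta(1/w)$ fraction of vertices while the out-degree is $\sigma \ge 2$, such escape moves remain feasible throughout, and iterating $\Omega(w)$ times produces an avoiding walk of length $\Omega(w^2)$.

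The main technical obstacle is the upper bound in the small-radius regime, where rotation is too weak to dominate the bounded real perturbations and $\arg \phi$ can swing rapidly; bounding the time $\phi$ can linger near the origin, while carefully handling the tie-breaking rule at $\phi = 0$, is what introduces the extra factor of $w$ compared with the lower bound. Closing the gap between the $\Omega(w^2)$ construction and the $O(w^3)$ analysis is exactly the open direction the paper's discussion flags.
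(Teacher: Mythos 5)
You have the right framework — the complex embedding and the transition rule $\phi \mapsto \zeta^{-1}\phi + \delta$ with $\delta = b - a_0$ are indeed the engine of both bounds — but both halves of the plan have genuine gaps at exactly the points you wave past.

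For the upper bound, the large-radius regime is fine, but all the difficulty is in the small-radius regime, and the claim that ``the admissible real perturbations combined with the systematic rotation produce a drift that expels $|\phi|$'' is not true as stated: a path can repeatedly choose $\delta=0$ (pure rotation) to keep $|\phi|$ constant, or choose $\delta$ to cancel earlier perturbations and orbit near the origin, and distance-from-origin simply is not a quantity that must grow. The paper's proof introduces two tools you are missing. First, a one-way-passage lemma: because the representative chosen in each conjugacy class sits just below the negative real axis, an $\mks$-avoiding path that once has $\im(P(x)) \leq 0$ can never return to $\im(P(x)) > 0$; this splits the analysis by symmetry into the upper halfplane alone. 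Second, the weight-in embedding $Q(x)=P(x)-W(x)$, where $W(x)=\sum_i x_i$: in this coordinate every edge, pure or not, is a clockwise rotation by $2\pi/w$ around the integer point $(-W(x),0)$, and hence $\re(Q)$ is \emph{strictly increasing} throughout the upper halfplane unless $Q$ coincides with a rotation center. That monotone progress is what replaces your unjustified ``drift.'' A potential-function argument (on representatives modulo $u=1-r_w^{-1}$) then caps the trajectory in each unit vertical strip at $O(w^2)$, and since there are $O(w)$ rotation centers and hence $O(w)$ strips, the total is $O(w^3)$. Your guess that the extra factor of $w$ comes from lingering near the origin is not what happens; it comes from the $O(w)$ strips.

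For the lower bound, chaining $\Omega(w)$ near-complete rotation cycles is the right shape, but the escape step is uncontrolled: ``$M$ is small, so some out-edge avoids $M$'' does not guarantee you land far from the representative of the \emph{new} class, and with $\sigma=2$ you only have a single escape edge, so you cannot even choose. Landing one or two steps before the new representative would waste an entire $\Theta(w)$ round. The paper fixes this by explicitly engineering the landing: it picks quadruples of tag positions $\{a-j, a+j, b-j, b+j\}$ whose four associated roots of unity sum to zero, so writing a $0$ into all four slots of a quadruple returns the embedding (though not the \wmer, whose weight dropped by $4$) to its starting point. Each such pseudo-loop costs about $w$ steps, the construction supports $\Theta(w)$ disjoint quadruples while keeping $\im(P(x))>0$ throughout, and that gives the $\Omega(w^2)$ path. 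Without such a cancellation mechanism your chaining argument does not close.

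You are right that the $\Omega(w^2)$--$O(w^3)$ gap is acknowledged as open in the paper; experiments there suggest $\Theta(w^3)$ is the true answer for the Mykkeltveit set itself.
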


\section{Methods and Proofs}\label{sec:method}
\CR{
Due to page limits, we provide proof sketches of the results.
Full proofs are available in the extended paper on arXiv: [link].
}{
For simplicity, several parts of the proof are found in the Supplementary materials.}

\subsection{UHS from Selection Schemes} \label{sec:uhs-from-selection}

\subsubsection{Contexts and densities of selection schemes}
In this section, we derive another way of calculating densities of selection schemes based on the idea of \emph{contexts}.

Recall a local scheme is defined as a function $f: \Sigma^w \rightarrow [0, w-1]$.
For any sequence $S$ and scheme $f$, the set of selected locations are $\{f(S[i, w])+i\}$ and the density of $f$ on the sequence is the number of selected locations divided by $|S|-w+1$.
Counting the number of distinct selected locations is the same as counting the number of \wmers $S[i, w]$ such that $f$ picks a new location from all previous \wmers.
$f$ can pick identical locations on two \wmers only if they overlap, so intuitively, we only need to look back $(w-1)$ windows to check if the position is already picked.
Formally, $f$ picks a new position in window $S[i, w]$ if and only if $f(S[i, w]) + i \neq f(S[i-d, w]) + (i-d)$ for all $1\leq d \leq w-1$.

For a location $i$ in sequence $S$, the context at this location is defined as $c_i = S[i-w+1, 2w-1]$, a $(2w-1)$-mer whose last \wmer starts at $i$.
Whether $f$ picks a new position in $S[i, w]$ is entirely determined by its context, as the conditions only involve \wmers as far back as $S[i-w+1, w]$, which are included in the context.
This means that instead of counting selected positions in $S$, we can count the contexts $c$ satisfying $f(c[w-1, w]) + w-1 \neq f(c[j, w]) + j$ for all $0\leq j \leq w-2$, which are the contexts such that $f$ on the last \wmer of $c$ picks a new location.
We define $\mathcal{C}_f \subset \Sigma^{2w-1}$ the set of contexts that satisfy this condition, formally:

\begin{definition} For given $w$ and local selection scheme $f:\Sigma^w\rightarrow [0, w-1]$, $\mathcal{C}_f=\{c\in \Sigma^{2w-1} \mid \forall 0\leq i\leq w-2, f(c[w-1,w]) + (w-1) \neq f(c[i, w]) + i \}$ is a subset of $\Sigma^{2w-1}$.
\end{definition}

The expected density of $f$ is computed as the number of selected positions over the length of the sequence for a random sequence, as the sequence becomes infinitely long.
For a sufficiently long random sequence ($|S| \gg w$), the distribution of its contexts converges to a uniform random distribution over $(2w-1)$-mers.
Because the distribution of these contexts is exactly equal to the uniform distribution on a circular \db $S$ sequence of order at least $2w-1$, we can calculate the expected density of $f$ as the density of $f$ on $S$, or as $|\mathcal{C}_f|/\sigma^{2w-1}$.

\subsubsection{UHS from local selection schemes}

We now prove that $\mathcal{C}_f$ over $(2w-1)$-mers is the UHS we need for Theorem~\ref{tm:uhslocal}.
\begin{lemma} \label{lm:uhslocal}
$\mathcal{C}_f$ is a UHS with remaining path length of at most $w-1$.
\end{lemma}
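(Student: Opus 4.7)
The goal is to show that for every string $T$ of length $3w-2$ (equivalently, every walk of $w$ vertices in $\dbg_{\Sigma,2w-1}$), at least one of the $w$ consecutive $(2w-1)$-mers $T[0,2w-1],\,T[1,2w-1],\,\ldots,\,T[w-1,2w-1]$ lies in $\mathcal{C}_f$. In the context language, we need at least one of the $w$ contexts $c_{w-1}, c_w, \ldots, c_{2w-2}$ at locations $i = w-1,\ldots,2w-2$ inside $T$ to belong to $\mathcal{C}_f$.

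My plan is a short chain argument by contradiction. For each window index $i \in \{0,1,\ldots,2w-2\}$ in $T$, let $p(i) = f(T[i,w]) + i$ be the absolute position that $f$ picks in that window. Since $f$ outputs a value in $[0,w-1]$, we have $p(i) \in [i,\,i+w-1]$, or equivalently, if $p(i)=m$ then $i \in [m-w+1,\,m]$. Assume for contradiction that none of $c_{w-1},\ldots,c_{2w-2}$ lies in $\mathcal{C}_f$; by the definition of $\mathcal{C}_f$, for each such location $i$ there exists some $j \in [i-w+1,\,i-1]$ with $p(j)=p(i)$.

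Starting from $i_0 = 2w-2$, iteratively pick such a $j$ to build a strictly decreasing chain $i_0 > i_1 > i_2 > \cdots$, all sharing the common value $m := p(i_0) = p(i_1) = \cdots$. The key invariant is that every chain index lies in $[m-w+1,\,m]$; and since $m \geq p(i_0) \geq i_0 = 2w-2$, this interval is contained in $[w-1,\,\infty)$. Combined with $i_k \leq i_0 = 2w-2$, every $i_k$ is a valid context location in $\{w-1,\ldots,2w-2\}$, so the hypothesis $c_{i_k}\notin\mathcal{C}_f$ applies and lets us extend the chain. A strictly decreasing sequence of integers bounded below cannot continue indefinitely, producing a contradiction. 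Therefore at least one $c_i$ is in $\mathcal{C}_f$, so $\mathcal{C}_f$ hits the walk.

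The main technical point is recognizing the right invariant: because all elements of the chain share the same $p$-value $m$, and $m \geq 2w-2$, every $i_k$ is pinned inside the window $[w-1,\,2w-2]$. This is exactly what is needed both to keep the contexts well-defined (no underflow off the left end of $T$) and to guarantee that the strictly decreasing chain must terminate. Once this invariant is identified, the rest is essentially bookkeeping and the proof is short.
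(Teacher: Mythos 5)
Your proof is correct and takes essentially the same approach as the paper's: both arguments observe that the position selected in the last window is at least $2w-2$, and that any window picking that position must therefore have starting index at least $w-1$, so that its context lies inside $T$ and, by hypothesis, outside $\mathcal{C}_f$. The only cosmetic difference is that you package the argument as a strictly decreasing chain terminated by well-foundedness of the integers, whereas the paper jumps directly to the first (minimum-index) window selecting that position and shows its context must be in $\mathcal{C}_f$; these are the same idea.
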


\begin{proof}
By contradiction, assume there is a path of length $w$ in the \db graph of order $(2w-1)$, say $\{c_0, c_1, \cdots, c_{w-1}\}$, that avoids $\mathcal{C}$.
We construct the sequence $S'$ corresponding to the path: $S'\in \Sigma^{3w-2}$ such that $S'[i, 2w-1] = c_i$.

Since $c_{w-1}\notin \mathcal{C}$ and $S'$ includes $c_{w-1}$, it means $f$ on the last \wmer of $c_{w-1}$ (which is $S'[2w-2,w]$) picks a location that has been picked before on $S'$.
The coordinate $l$ of this selection in $S'$ satisfies $l \geq 2w-2$.
As $0 \leq f(x) \leq w-1$, the first \wmer $S'[m,w]$ in $S'$ such that $f$ picks $S'[l]$ (that is, $m+f(S'[m,w]) = l$) satisfies $m\geq w-1$.
The context $c_{m-w+1} = S'[m-(w-1), 2w-1]$ then satisfies that a new location $l$ is picked when $f$ is applied to its last \wmer, and by definition $c_{m-w+1} \in \mathcal{C}$, contradiction.
\cqed \end{proof}

This results is also a direct consequence of the definition of $\mathcal{C}$.
Details can be found in \CR{full version of this paper.}{Supplementary Section~\ref{supp:uhslocal}.}

\subsubsection{UHS from forward selection schemes}

When $f$ is a forward scheme, to determine if a new location is picked in a window, looking back one window is sufficient.
This is because if we do not pick a new location, we have to pick the same location as in last window.
This means context with two \wmers, or as a $(w+1)$-mer, is sufficient, and our other arguments involving contexts still hold.
Combining the pieces, we prove the following theorem:
\uhslocal*

\subsection{Forbidden Word Depathing Set} \label{ssc:forbidden_word_set}
\subsubsection{Construction and path length}
In this section, we prove the following set is a $\uhs{O(\ln(w)/w)}{w}$.
\begin{restatable}[Forbidden Word UHS]{definition}{dfforbiddenword} \label{df:forbidden_set}
  Let $d=\lfloor \log_\sigma (w/\ln(w)) \rfloor-1$.
  Define $\fbs$ as the set of \wmers that satisfies either of the following clauses: (1)~$0^d$ is the prefix of $x$ (2)~$0^d$ is not a substring of $x$.
\end{restatable}

Note that in this theorem we treat $\sigma$ as a constant and assume $\sigma \ge 2$.
We also assume that $w$ is sufficiently large such that $d\geq 1$.

\begin{lemma}
  The longest remaining path in the \db graph of order $w$ after removing $\fbs$ is $w-d$.
\end{lemma}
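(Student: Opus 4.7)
The plan is to translate the graph question into a string question and analyze where the block $0^d$ can occur. A path of $n$ vertices in $\dbg_{\Sigma,w}$ corresponds bijectively to a string $S$ of length $n+w-1$ whose consecutive $w$-mer substrings are exactly the vertices of the path. Avoiding $\fbs$ means every $w$-mer substring of $S$ contains $0^d$ as a substring but \emph{not} as a prefix. The goal is thus to determine the maximum length $L$ of such an $S$ and to show it equals $2w-d-1$, which corresponds to a path of $w-d$ vertices.

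For the upper bound, I would focus entirely on the first window. Let $S$ be any admissible string of length $L$. The $w$-mer $S[0,w]$ must contain $0^d$ at some position $p$ internal to the window, with $1 \le p \le w-d$ (at least $1$ because $0^d$ cannot be the prefix, at most $w-d$ so that $0^d$ still fits). Thus $0^d$ occurs in $S$ at absolute position $p$. If $p \le L-w$, then $S[p,w]$ is itself a valid $w$-mer substring of $S$, and by construction it begins with $0^d$, contradicting the requirement that $0^d$ is never a prefix of any window. Hence we must have $p > L-w$, which combined with $p \le w-d$ forces $L < 2w-d$, i.e., $L \le 2w-d-1$. This bounds the number of vertices on the path by $L-w+1 \le w-d$.

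For the matching lower bound, I would exhibit an explicit extremal string. Using any non-zero letter (which exists since $\sigma \ge 2$), set
\[
S \;=\; 1^{w-d}\,0^d\,1^{w-d-1},
\]
which has length $2w-d-1$. Since $S$ contains only $d$ consecutive zeros, the block $0^d$ appears in $S$ exactly once, at position $w-d$. For each valid starting position $i \in [0, w-d-1]$, the window $S[i,w]$ contains this occurrence at local position $w-d-i$, which lies in $[1, w-d]$, so $0^d$ is a substring but not a prefix of $S[i,w]$. Hence every $w$-mer of $S$ lies outside $\fbs$, yielding a path of $w-d$ vertices that matches the upper bound.

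The whole argument is short; the only non-routine observation is that any occurrence of $0^d$ inside the first window (forced to be at least at local position $1$) immediately produces a later window with $0^d$ as a prefix unless $L$ is small. Once this observation is in hand, both directions are essentially bookkeeping, and the extremal string $S$ above makes the bound tight.
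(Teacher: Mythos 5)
Your proof is correct and takes essentially the same route as the paper: both identify the position of the forced occurrence of $0^d$ in the first window (which cannot be $0$ by clause 1 nor absent by clause 2) and observe that a sufficiently long path would expose that occurrence as a prefix of a later window, and both use the identical extremal string $1^{w-d}0^{d}1^{w-d-1}$ for the lower bound. The only difference is cosmetic — you argue with string lengths while the paper argues with vertex indices $x_c$.
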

\CRhide{\begin{proof}
Let $\{x_0, x_1,\cdots, x_{w-d}\}$ be a path of length $w-d+1$ in the \db graph.
If $x_0$ does not have a substring equal to $0^d$, it is in $\fbs$.
Otherwise, let $c$ be the index such that $x_0[c,d]=0^d$.
Since $c\leq w-d$, $x_c[0,d]=0^d$ and $x_c$ is in $\fbs$.

On the other hand, let $S=1^{w-d}0^{d}1^{w-d-1} \in \Sigma^{2w-d-1}$ and $x_i = S[i, w]$ for $0\leq i < w-d$.
None of $\{x_i\}$ is in $\fbs$, meaning there is a path of length $w-d$ in the remaining graph.
\cqed \end{proof}}

\begin{lemma}
  The relative size of the set of \wmers satisfying clause~1 of Definition~\ref{df:forbidden_set} is $O(\ln(w)/w)$.
\end{lemma}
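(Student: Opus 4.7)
The plan is to do a direct enumeration. A \wmer satisfying clause~1 has its first $d$ characters fixed as $0$, while the remaining $w-d$ characters are free. Thus, the number of such \wmers is exactly $\sigma^{w-d}$, and the relative size (dividing by the total number $\sigma^w$ of \wmers) is $\sigma^{-d}$.

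Next, I would translate the definition of $d$ into a clean lower bound. Since $d = \lfloor \log_\sigma(w/\ln w) \rfloor - 1$, the floor causes a loss of at most $1$, giving $d \geq \log_\sigma(w/\ln w) - 2$. Exponentiating yields
\[
\sigma^{-d} \leq \sigma^{2} \cdot \frac{\ln w}{w}.
\]
Since $\sigma$ is treated as a constant (as stated in the assumptions just before the lemma), this bound is $O(\ln(w)/w)$, completing the proof.

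There is essentially no obstacle here: the lemma is a direct consequence of counting and the choice of $d$. The only care needed is to handle the floor and the $-1$ in the definition of $d$ so that the absorbed constants land correctly inside the $O(\cdot)$ notation. The substantive work of the construction lies elsewhere (in bounding the path length and in the more delicate argument for clause~2, if needed), while this lemma merely verifies that clause~1 contributes at most the claimed $O(\ln(w)/w)$ fraction of all \wmers.
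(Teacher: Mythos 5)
Your proof is correct and is essentially the same as the paper's one-line argument, which simply states that the number of such $w$-mers is $\sigma^{w-d} = O(\ln(w)\sigma^w/w)$. You have merely spelled out the arithmetic (handling the floor and the $-1$ in the definition of $d$) that the paper leaves implicit.
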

\begin{proof}
  The number of \wmer satisfying clause~1 is $\sigma^{w-d} = O(\ln(w)\sigma^w/w)$.
\cqed \end{proof}

For the rest of this section, we focus on counting \wmers satisfying clause 2 in Definition~\ref{df:forbidden_set}, that is, the number of \wmers not containing $0^d$.
We employ a finite state machine based approach.

\subsubsection{Number of \wmers not containing $0^d$}
We construct a finite state machine (FSM) that recognizes $0^{d}$ as follows.
The FSM consists of $d+1$ states labeled ``$0$'' to ``$d$'', where ``$0$'' is the initial state and ``$d$'' is the terminal state.
The state ``$i$'' with $0 \le i \le d-1$ means that the last $i$ characters were $0$ and $d-i$ more zeroes are expected to match $0^d$.
The terminal state ``$d$'' means that we have seen a substring of $d$ consecutive zeroes.
If the machine is at non-terminal state ``$i$'' and receives the character $0$, it moves to state ``$i+1$'', otherwise it moves to state ``$0$''; once the machine reaches state ``$d$'', it remains in that state forever.

Now, assume we feed a random \wmer to the finite state machine.
The probability that the machine does not reach state ``$d$'' for the input \wmer is the relative size of the set of \wmer satisfying clause 2.
Denote $p_k\in\mathbb{R}^d$ such that $p_k(j)$ is the probability of feeding a random $k$-mer to the machine and ending up in state ``$j$'', for $0\leq j<d$ (note that the vector does not contain the probability for the terminal state ``$d$'').
The answer to our problem is then $\norm{p_w}_1 = \sum_{i=0}^{d-1} p_w(i)$, that is, the sum of the probabilities of ending at a non-terminal state. 

Define $\mu = 1/\sigma$.
Given that a randomly chosen \wmer is fed into the FSM, i.e., each base is chosen independently and uniformly from $\{0, 1, \cdots, \sigma-1\}$, the probabilities of transition in the FSM are: ``$i$'' $\rightarrow$ ``$i+1$'' with probability $1/\sigma=\mu$, ``$i$'' $\rightarrow$ ``$0$'' with probability $1-\mu$.
The (partial) probability matrix to recognize $0^d$ is a $d \times d$ matrix, as we discard the row and column associated with terminal state ``$d$'':
\begin{equation*}
  A_d =
  \begin{bmatrix}
    1-\mu  & 1-\mu  & \dots  & 1-\mu  & 1-\mu  \\
    \mu    & 0      & \dots  & 0      & 0      \\
    0      & \mu    & \dots  & 0      & 0      \\
    \vdots & \vdots & \ddots & \vdots & \vdots \\
    0      & 0      & \dots  & \mu    & 0
  \end{bmatrix}_{d\times d} =
  \begin{bmatrix}
    (1-\mu)\mathbf{1}^T_{d-1} & 1-\mu \\ 
    \mu \mathbf{\mathrm{I}}_{d-1}                & \mathbf{0}_{d-1} 
  \end{bmatrix}
\end{equation*}

Starting with $p_0=(1, 0, \ldots, 0) \in \mathbb{R}^d$ as initially no sequence has been parsed and the machine is at state ``$0$'' with probability 1, we can compute the probability vector $p_w$ as $p_w=A_dp_{w-1}=A_d^wp_0$.

\subsubsection{Bounding $\norm{p_w}_1$} \label{ssc:proof_fbseigenvalues}
We start by deriving the characteristic polynomial $p_{A_d}(\lambda)$ of $A_d$ and its set of roots (which are the eigenvalues of $A_d$):
\begin{restatable}{lemma}{lmcharpolynomial}\label{lem:char-polynomial}
  \begin{equation*}
    p_{A_d}(\lambda)=\det(A_d-\lambda I) = 
    \begin{cases}
        (-1)^d\frac{\lambda^{d+1}-\lambda^d-\mu^{d+1}+\mu^d}{\lambda-\mu} & \lambda \neq \mu \\ 
        (-\mu)^{d-1}((1-\mu)d-\mu)                                        & \lambda = \mu
    \end{cases}
  \end{equation*}
\end{restatable}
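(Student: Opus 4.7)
My plan is to derive $D_d(\lambda) := \det(A_d - \lambda I)$ via a first-order recurrence obtained by Laplace expansion along the last column of $A_d - \lambda I$. That column has only two nonzero entries: $1-\mu$ at the top and $-\lambda$ at the bottom. The minor for the bottom entry is exactly $A_{d-1} - \lambda I$, contributing $D_{d-1}(\lambda)$; the minor for the top entry is the $(d-1)\times(d-1)$ lower-triangular matrix obtained by dropping the first row, whose main diagonal is $\mu, \mu, \ldots, \mu$, so its determinant is $\mu^{d-1}$. Combining with the cofactor signs yields
\begin{equation*}
D_d(\lambda) \;=\; -\lambda\, D_{d-1}(\lambda) \;+\; (-1)^{d+1}(1-\mu)\mu^{d-1},
\end{equation*}
with base case $D_1(\lambda) = 1 - \mu - \lambda$.

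Next I would prove the closed form for $\lambda \ne \mu$ by induction on $d$. The base case follows from the factorization $\lambda^2 - \lambda - (\mu^2 - \mu) = (\lambda - \mu)(\lambda + \mu - 1)$, which reduces the stated formula at $d=1$ to $-(\lambda + \mu - 1) = 1 - \mu - \lambda$. For the inductive step, I substitute the hypothesis for $D_{d-1}$ into the recurrence, place the constant term $(-1)^{d+1}(1-\mu)\mu^{d-1}$ over the common denominator $\lambda - \mu$, and expand. A short computation shows the cross-terms $-\lambda\mu^d$ and $+\lambda\mu^{d-1}$ cancel against the contribution of $(1-\mu)\mu^{d-1}(\lambda - \mu)$, leaving a numerator of $(-1)^d(\lambda^{d+1} - \lambda^d - \mu^{d+1} + \mu^d)$, as claimed.

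The case $\lambda = \mu$ is handled by taking the limit: since $D_d$ is a polynomial and both numerator and denominator of the closed form vanish at $\lambda = \mu$, the polynomial value $D_d(\mu)$ equals that limit. Applying L'H\^opital's rule to $(\lambda^{d+1} - \lambda^d - \mu^{d+1} + \mu^d)/(\lambda - \mu)$, the derivative of the numerator is $(d+1)\lambda^d - d\lambda^{d-1}$, which at $\lambda = \mu$ equals $\mu^{d-1}((d+1)\mu - d) = -\mu^{d-1}(d(1-\mu) - \mu)$. Multiplying by the prefactor $(-1)^d$ rewrites this as $(-1)^{d-1}\mu^{d-1}((1-\mu)d - \mu) = (-\mu)^{d-1}((1-\mu)d - \mu)$, matching the formula.

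The only real obstacle is bookkeeping: carefully tracking the $(-1)^d$ factors across the recurrence and confirming the cancellations in the inductive step's numerator. Nothing deeper is required, since the sparsity of $A_d$ forces the column expansion to be essentially one-dimensional, and an alternative route via the Leibniz formula (the nonzero permutations are exactly the cyclic shifts of an initial segment $(1, 2, \ldots, k)$, giving a geometric-type sum in $\mu/\lambda$) would produce the same closed form with slightly more combinatorial overhead.
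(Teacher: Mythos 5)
Your argument is correct, and it shares the paper's broad strategy—obtain a first-order recurrence for $D_d := \det(A_d - \lambda I)$ via cofactor expansion and then derive the closed form—but the execution differs in each of the three stages. You expand along the last column, which is the cleaner choice: the $(d,d)$ minor is literally $A_{d-1}-\lambda I$ and the $(1,d)$ minor is bidiagonal with diagonal $\mu$, giving $D_d = -\lambda D_{d-1} + (-1)^{d+1}(1-\mu)\mu^{d-1}$ directly. The paper expands along the first column; there the $(2,1)$ minor is \emph{not} $A_{d-1}-\lambda I$ (its top-left entry is $1-\mu$, not $1-\mu-\lambda$), so an extra linearity-of-determinant step is needed to repair this, yielding the recurrence $D_d = (1-\lambda)(-\lambda)^{d-1} - \mu D_{d-1}$. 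From its recurrence the paper unrolls explicitly to the geometric-type sum $(1-\lambda)\sum_{j=0}^{d-1}(-\mu)^j(-\lambda)^{d-1-j} + (-\mu)^d$ and then telescopes against $\lambda-\mu$; you instead prove the closed form by induction, which avoids writing out the sum at the cost of a somewhat opaque cancellation. For $\lambda=\mu$, the paper plugs $\lambda=\mu$ into the unrolled sum before telescoping, while you take the limit by L'H\^opital; both are legitimate since $D_d$ is a polynomial. Your route gets to the recurrence with less bookkeeping; the paper's makes the polynomial structure more visible. One cosmetic slip with no downstream effect: the $(d-1)\times(d-1)$ minor obtained by deleting row $1$ and column $d$ (keeping rows $2,\dots,d$ and columns $1,\dots,d-1$) is \emph{upper} bidiagonal, with $\mu$ on the main diagonal and $-\lambda$ on the superdiagonal, not lower-triangular as you wrote—its determinant is still $\mu^{d-1}$, so nothing changes.
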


\CRhide{\begin{proof}
The characteristic polynomial of $A_d$ satisfies the following recursive formula, obtained by expanding the determinant over the first column and using the linearity of the determinant:
\begin{align*}
  \det(A_d-\lambda I_d) =&\ 
                           \begin{vmatrix}
                             1-\mu-\lambda & 1-\mu & 1-\mu & \cdots & 1-\mu \\
                             \mu & -\lambda & 0 & \cdots & 0 \\
                             0 & \mu & -\lambda & \cdots & 0 \\
                             \vdots & \vdots & \vdots & \ddots & \vdots \\
                             0 & 0 & 0 &\cdots & -\lambda
                           \end{vmatrix}_{d\times d} \\
  =&  (1-\lambda) (-\lambda)^{d-1} -\mu p_{A_{d-1}}(\lambda).
\end{align*}


For $d=1$, we have $p_{A_1}(\lambda) = 1 - \mu - \lambda$.
Assuming $\lambda\neq \mu$ for now, we repeatedly expand the recursive formula to obtain a closed form formula for $p_{A_d}(\lambda)$:
\begin{align*}
p_{A_d}(\lambda) & = (1-\lambda) \left[ (-\lambda)^{d-1}+(-\mu)^1(-\lambda)^{d-2}+\cdots + (-\mu)^{d-2}(-\lambda)^1+(-\mu)^{d-1} \right] + (-\mu)^{d} & (*) \\
                 & = (-1)^d \left[ (\lambda-1)\frac{\lambda^d-\mu^d}{\lambda-\mu} + \mu^d \right]                                                                              \\
                 & = (-1)^d \frac{\lambda^{d+1}-\lambda^d-\mu^{d+1}+\mu^d}{\lambda-\mu}
\end{align*}
The value for the characteristic polynomial when $\lambda=\mu$ can be derived by plugging $\lambda=\mu$ in the line marked with $(*)$ to obtain $p_{A_d}(\lambda) = (1 - \mu)d\mu^{d-1}+(-\mu)^d$.
 \cqed \end{proof}}

Now we fix $d$ and focus on the polynomial $f_d(\lambda)=\lambda^{d+1}-\lambda^d-\mu^{d+1}+\mu^d$. 
Since this is a polynomial of degree $d+1$, it has $d+1$ roots and except for $\mu$, which is a root of $f_d$ but not of $p_{A_d}$, $f_d$ and $p_{A_d}$ have the same roots.

\begin{lemma} \label{slm:largeeigen}
For sufficiently large $d$, $f_d(\lambda)$ has a real root $\lambda_0$ satisfying $1-\mu^{d} < \lambda_0 < 1-\mu^{d+1}$.
\end{lemma}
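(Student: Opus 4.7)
The plan is to apply the Intermediate Value Theorem to the polynomial $f_d$, which is continuous on $\mathbb{R}$, on the closed interval $[1-\mu^d,\,1-\mu^{d+1}]$. I will evaluate $f_d$ at each endpoint and show that $f_d(1-\mu^d)<0$ while $f_d(1-\mu^{d+1})>0$ for sufficiently large $d$; existence of a real root strictly between them follows immediately.

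The key algebraic preparation is to rewrite $f_d(\lambda)=\lambda^d(\lambda-1)+\mu^d(1-\mu)$, which makes both endpoint evaluations clean. Substituting $\lambda=1-\mu^d$ gives
\begin{equation*}
  f_d(1-\mu^d)=\mu^d\bigl[(1-\mu)-(1-\mu^d)^d\bigr],
\end{equation*}
and substituting $\lambda=1-\mu^{d+1}$ gives
\begin{equation*}
  f_d(1-\mu^{d+1})=\mu^d\bigl[(1-\mu)-\mu(1-\mu^{d+1})^d\bigr].
\end{equation*}
The upper-endpoint inequality is easy: since the paper assumes $\sigma\ge 2$, we have $\mu=1/\sigma\le 1/2\le 1-\mu$, and $(1-\mu^{d+1})^d<1$, so $\mu(1-\mu^{d+1})^d<1-\mu$, giving $f_d(1-\mu^{d+1})>0$ for every $d\ge 1$.

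The lower-endpoint inequality is the main obstacle and requires the ``sufficiently large'' hypothesis. I need $(1-\mu^d)^d>1-\mu$. Taking logarithms, $d\ln(1-\mu^d)\sim -d\mu^d$ as $d\to\infty$, and since $\mu<1$ the quantity $d\mu^d$ tends to $0$ geometrically. Hence $(1-\mu^d)^d\to 1$, so it eventually exceeds the fixed constant $1-\mu<1$. This yields $f_d(1-\mu^d)<0$ for all sufficiently large $d$.

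Combining the two sign statements with continuity of $f_d$ and the IVT produces a real root $\lambda_0\in(1-\mu^d,\,1-\mu^{d+1})$, which is the claim. The delicate step is the quantitative control $(1-\mu^d)^d>1-\mu$; a single application of Bernoulli's inequality ($(1-\mu^d)^d\ge 1-d\mu^d$) reduces the problem to $d\mu^d<\mu$, i.e.\ $d<\mu^{-(d-1)}$, which holds once $d$ is large since the right side grows exponentially in $d$. Everything else is a straightforward sign computation.
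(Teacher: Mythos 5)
Your proof is correct and uses the same overall strategy as the paper: apply the Intermediate Value Theorem to $f_d$ on $[1-\mu^d,\,1-\mu^{d+1}]$ by showing the endpoint values have opposite signs. The computation, however, is organized differently and more cleanly. The paper evaluates $f_d$ at each endpoint by expanding $(1-\mu^d)^{d+1}$, $(1-\mu^d)^d$, etc.\ binomially and tracking leading terms with $O(\cdot)$ bookkeeping, which requires a small case split ($\sigma=2$ versus $\sigma>2$) to see that the upper-endpoint value is positive. Your rewriting $f_d(\lambda)=\lambda^d(\lambda-1)+\mu^d(1-\mu)$ collapses each endpoint evaluation to a single factored expression, the upper endpoint becomes a one-line inequality with no case split (using only $\mu\le 1-\mu$ and $(1-\mu^{d+1})^d<1$), and the lower endpoint reduces via Bernoulli's inequality to the elementary statement $d<\sigma^{d-1}$. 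This is genuinely tighter than the paper's argument: it isolates exactly where the ``sufficiently large $d$'' hypothesis is used (only in the lower endpoint, only to guarantee $d\mu^d<\mu$), whereas the paper's asymptotic expansions obscure this.
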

\CRhide{\begin{proof}
  We show $f_d$ has opposite signs on the lower and upper bound of this inequality for sufficiently large $d$.
  \begin{align*}
    f_d(1-\mu^{d})   & = (1-\mu^{d})^{d+1}-(1-\mu^{d})^{d}-\mu^{d+1}+\mu^d                   \\
                     & = 1-(d+1)\mu^{d}+O(\mu^{2d})-1+d\mu^d-O(\mu^{2d})-\mu^{d+1}+\mu^d     \\
                     & = -\mu^{d+1} + O(\mu^{2d}) < 0                                                                   \\
    f_d(1-\mu^{d+1}) & =(1-\mu^{d+1})^{d+1}-(1-\mu^{d+1})^{d}-\mu^{d+1}+\mu^d                \\
                     & = 1-(d+1)\mu^{d+1}+C(d+1, 2)\mu^{2d+2}+O(\mu^{3d+6})                  \\
                     & \ \ \ \ -1+d\mu^{d+1}-C(d, 2)\mu^{2d+2}-O(\mu^{3d+6})-\mu^{d+1}+\mu^d \\
                     & = -2\mu^{d+1}+\mu^{d}+d\mu^{2d+2}+O(\mu^{3d+6})  > 0
  \end{align*}
  For the last line, if $\sigma=2$ the first two terms cancel out and $d\mu^{2d+2}$ becomes dominant and positive, otherwise $\mu^{d}=\sigma \mu^{d+1}>2\mu^{d+1}$.
  Since $f_d$ is polynomial, $f_d$ is continuous and thus has a root between $1-\mu^{d}$ and $1-\mu^{d+1}$.
\cqed \end{proof}}

\begin{lemma}\label{lem:norm-eigenvector}
Let $s=\mu/\lambda_0$.
$\nu_0$ = $(1, s, s^2, \cdots, s^{d-1})$ is the right eigenvector of $A_d$ corresponding to eigenvalue $\lambda_0$, and $\norm{\nu_0}_1 < 3$ for sufficiently large $d$.
\end{lemma}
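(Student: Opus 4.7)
The plan is to verify the eigenvector claim by direct calculation and then bound the $\ell_1$ norm via the geometric series, using Lemma~\ref{slm:largeeigen} to control $s$.

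First, I would compute $A_d \nu_0$ entry by entry using the block structure of $A_d$. For rows $i \geq 2$, the $i$-th row of $A_d$ has a single nonzero entry $\mu$ in column $i-1$, so $(A_d \nu_0)_i = \mu s^{i-2}$. For this to equal $\lambda_0 \cdot s^{i-1}$, we need exactly $s = \mu/\lambda_0$, which matches the definition. For the first row, $(A_d \nu_0)_1 = (1-\mu)(1 + s + s^2 + \cdots + s^{d-1}) = (1-\mu)\frac{1-s^d}{1-s}$, and we need this to equal $\lambda_0$. Substituting $s = \mu/\lambda_0$ gives $1-s = (\lambda_0-\mu)/\lambda_0$, so the condition reduces to $(1-\mu)(1-s^d) = \lambda_0 - \mu$. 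This is exactly the eigenvalue equation: rearranging $f_d(\lambda_0) = 0$ yields $\lambda_0^d(1-\lambda_0) = \mu^d(1-\mu)$, i.e.\ $s^d(1-\mu) = 1 - \lambda_0$, so $(1-\mu)(1-s^d) = (1-\mu) - (1-\lambda_0) = \lambda_0 - \mu$, as required.

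For the norm bound, since $\nu_0$ has positive entries, $\norm{\nu_0}_1 = \sum_{i=0}^{d-1} s^i = \frac{1-s^d}{1-s} < \frac{1}{1-s}$, provided $s < 1$. By Lemma~\ref{slm:largeeigen}, $\lambda_0 > 1 - \mu^d$, so $s = \mu/\lambda_0 < \mu/(1-\mu^d)$. As $d \to \infty$, this upper bound converges to $\mu = 1/\sigma \leq 1/2$, so for $d$ large enough we have $s < 1/2 + \epsilon$ and hence $\norm{\nu_0}_1 < 1/(1 - 1/2 - \epsilon)$, which for small $\epsilon$ is strictly less than $3$. Concretely, picking $\epsilon$ such that $1/(1/2 - \epsilon) < 3$ (e.g.\ $\epsilon < 1/6$) and then choosing $d$ large enough that $\mu/(1-\mu^d) < \mu + \epsilon$ completes the bound for every $\sigma \geq 2$.

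The main obstacle is rather mild: the eigenvector verification hinges on rewriting the characteristic-polynomial identity $f_d(\lambda_0)=0$ in the form $s^d(1-\mu) = 1 - \lambda_0$, which is the right algebraic rearrangement to match the geometric-sum expression coming from the first row. Once that identity is in hand, both the eigenvector check and the norm bound follow from standard geometric-series manipulation together with the tight interval for $\lambda_0$ provided by Lemma~\ref{slm:largeeigen}. The $\sigma = 2$ case is the tightest, since the limit of $\norm{\nu_0}_1$ is $\sigma/(\sigma-1) = 2$ there, but the bound of $3$ still leaves comfortable slack.
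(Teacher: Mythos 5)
Your proposal is correct and follows essentially the same line as the paper's own proof: the off-diagonal rows of $A_d$ verify trivially once $s=\mu/\lambda_0$, the first row reduces to the characteristic equation $f_d(\lambda_0)=0$ via the geometric sum, and the norm bound comes from showing $s<2/3$ (using Lemma~\ref{slm:largeeigen} and $\mu\le 1/2$) so that $\sum_{i\ge 0}s^i<3$. Your rearrangement $s^d(1-\mu)=1-\lambda_0$ is a slightly tidier way to package the algebra than the paper's direct substitution, but the underlying argument is identical.
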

\CRhide{\begin{proof}
  For the first part, we need to verify $A_d\nu_0 = \lambda_0\nu_0$.
  For indices $1\leq i < d$, $(A_d\nu_0)_i = \mu(\nu_0)_{i-1} = \mu s^{i-1} = \lambda_0 s^i = (\lambda_0 \nu_0)_i$.
  For the first element in the vector, we have:
  \begin{align*}
    (A_d\nu_0)_0 - (\lambda_0\nu_0)_0 &= (1-\mu)(1+s+\cdots s^{d-1}) - \lambda_0 \\
                                      &= \frac{(1-\mu)(s^d-1)-\lambda_0(s-1)}{s-1} \\
                                      &= \frac{\lambda_0^d(\mu^d - \lambda_0^d - \mu^{d+1} + \lambda_0^{d+1})}{s-1} \\
                                      &= \frac{\lambda_0^d  f_d(\lambda_0)}{s-1} = 0.
  \end{align*}
  This verifies $A_d\nu_0 = \lambda_0\nu_0$.
  For the second part, note that for sufficiently large $d$ we have $\lambda_0 > 1- \mu^d > 0.9$ and since $\mu \leq 0.5$, we have $s = \mu/\lambda_0 < 2/3$.
  Every element of $\nu_0$ is positive, so $\norm{\nu_0}_1 = \sum_{i=0}^{d-1} s^i < \sum_{i=0}^{\infty} s^i = 1/(1-s) < 3$.
\cqed \end{proof}}

\begin{restatable}{lemma}{lmnormbounds}
  $\norm{p_w}_1 = \norm{A_d^wp_0}_1 = O(1/w)$.
\end{restatable}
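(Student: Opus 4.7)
The plan is to dominate $p_0$ componentwise by the eigenvector $\nu_0$, propagate this domination through $w$ applications of $A_d$, and then control the rate of decay using the bound on $\lambda_0$ already established. Three structural facts make this work cleanly: (i) every entry of $A_d$ is non-negative, so non-strict componentwise inequalities are preserved under multiplication by $A_d$; (ii) $\nu_0 = (1, s, s^2, \ldots, s^{d-1})$ has strictly positive entries (since $s = \mu/\lambda_0 > 0$); and (iii) $p_0 = (1, 0, \ldots, 0)$ agrees with $\nu_0$ in the first coordinate and is zero elsewhere, so $p_0 \le \nu_0$ componentwise.

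Given these, the first step is to chain inequalities: $0 \le A_d^w p_0 \le A_d^w \nu_0 = \lambda_0^w \nu_0$ componentwise. Taking the $1$-norm and using Lemma~\ref{lem:norm-eigenvector} then yields $\norm{p_w}_1 = \norm{A_d^w p_0}_1 \le \lambda_0^w \norm{\nu_0}_1 < 3\lambda_0^w$ for sufficiently large $d$.

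The second step is to convert this geometric decay into the $O(1/w)$ bound by using the choice of $d$. From $d = \lfloor \log_\sigma(w/\ln w)\rfloor - 1$ we have $\sigma^{d+1} \le w/\ln w$, hence $\mu^{d+1} \ge \ln(w)/w$. Combining with Lemma~\ref{slm:largeeigen}, which gives $\lambda_0 < 1 - \mu^{d+1}$, we get $\lambda_0 \le 1 - \ln(w)/w$, so
\begin{equation*}
\lambda_0^w \le \left(1 - \tfrac{\ln w}{w}\right)^{w} \le e^{-\ln w} = \tfrac{1}{w}.
\end{equation*}
Plugging this in yields $\norm{p_w}_1 < 3/w = O(1/w)$, as required.

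I do not expect a real obstacle here: the argument is a standard Perron-style domination once the spectral data from the preceding lemmas are in hand. The only mild care needed is in the arithmetic connecting the floor in the definition of $d$ to the inequality $\mu^{d+1} \ge \ln(w)/w$, and in verifying that $\lambda_0 > 0$ so that $s > 0$ and $\nu_0$ is indeed positive (both follow from $\lambda_0 > 1 - \mu^d$ for large $d$).
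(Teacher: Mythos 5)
Your proof is correct and takes essentially the same approach as the paper's: you dominate $p_0$ componentwise by $\nu_0$ and propagate this through the nonnegative matrix $A_d$, whereas the paper writes $p_0 = \nu_0 - \eta_0$ with $\eta_0 \ge 0$ and bounds $\norm{\lambda_0^w \nu_0 - A_d^w \eta_0}_1 \le \lambda_0^w \norm{\nu_0}_1$ using nonnegativity of $A_d^w\eta_0$ — the same Perron-style domination in slightly different notation. The remaining arithmetic ($\mu^{d+1} \ge \ln w / w$, $\lambda_0 < 1 - \mu^{d+1}$, and $1-x \le e^{-x}$) is identical.
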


\begin{proof}

  Let $\eta_0 = \nu_0 - p_0 = (0, s, s^2, \cdots, s^{d-1})$, where $s=\mu/\lambda_0$ from last lemma.
  Because $\lambda_0 > 0$, the elements of $\eta_0$ and $A_d$ are all nonnegative, then the elements of $A_d^w\eta_0$ and $\lambda_0 \eta_0$ are also nonnegative.
  Now, recall that $d=\lfloor \log_\sigma (w/\ln(w)) \rfloor-1$, which implies that $\mu^{d+1} \geq \ln(w)/w$.
  \begin{align*}
    \norm{p_w}_1 & = \norm{A_d^wp_0}_1                                                                                 \\
                 & = \norm{A_d^w(\nu_0 - \eta_0)}_1                                                                    \\
                 & = \norm{\lambda_0^w \nu_0 - A_d^w \eta_0}_1     & \text{($\nu_0$ is a eigenvector of  $A_d$)}                                                    \\
                 & \leq \lambda_0^w \norm{\nu_0}_1 & \text{(nonnegative elements)}                                        \\
                 & < 3(1-\mu^{d+1})^w           & \text{(by Lemma~\ref{slm:largeeigen} \& \ref{lem:norm-eigenvector})} \\
                 & \leq 3(1-\ln(w)/w)^w         & \text{(by definition of $d$)}                                        \\
                 & \leq 3\exp(-\ln(w))                & \text{($1-x\leq e^{-x}$)}                                                         \\
                 & = O(1/w).
  \end{align*}
\cqed \end{proof}

These lemma implies that the relative size for the set $\fbs$ is dominated by the \wmers satisfying clause~1 of Definition~\ref{df:forbidden_set} and $\fbs$ is of relative size $O(\ln(w)/w)$.
This completes the proof that $\fbs$ is a $\uhs{O(\ln(w)/w)}{w}$.

\subsection{Construction of the Mykkeltveit sets}

\begin{figure}
  \centering
  \raisebox{-.5\height}{\includegraphics{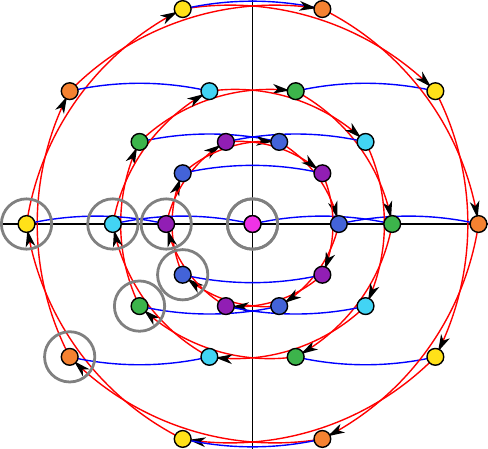}} \hfill \raisebox{-.5\height}{\includegraphics{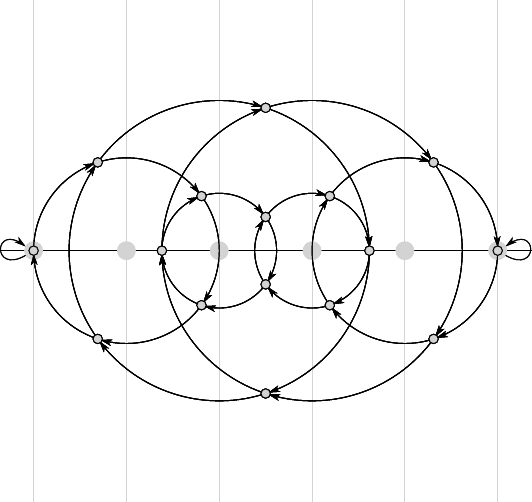}}
  \caption{(a)~Mykkeltveit embedding of the \db graph of order $5$ on the binary alphabet.
    The nodes of a conjugacy class have the same color and form a circle (there is more than one class per circle).
    The pure rotations are represented by the red edges.
    A non-pure rotation $S_a(x)$ is a red edge followed by a horizontal shift (blue edge).
    The set of nodes circled in gray is the Mykkeltveit  set.
    (b)~Weight-in embedding of the same graph.
    Multiple \wmers map to the same position in this embedding and each circle represent a conjugacy class.
    The gray dots on the horizontal axis are the $w$ centers of rotations and the vertical gray lines going through the centers separate the space in sub-regions of interest.
  }
  \label{fig:embedding}
\end{figure}

In this section, we construct the Mykkeltveit set $\mks$ and prove some important properties of the set.
We start with the definition of the Mykkeltveit embedding of the \db graph.
\begin{definition}[Modified Mykkeltveit Embedding] \label{df:modified_embeddiing}
For a \wmer $x$, its embedding in the complex plane is defined as $P(x) = \sum_{i=0}^{w-1} x_i r_w^{i+1}$, where $r_w$ is a $w^{\text{th}}$ root of unity, $r_w = e^{2\pi i/w}$.
\end{definition}

Intuitively, the position of a \wmer $x$ is defined as the following center of mass.
The $w$ roots of unity form a circle in the complex plane, and a weight equal to the value of the base $x_i$ is set at the root $r_w^{i+1}$.
The position of $x$ is the center of mass of these $w$ points and associated weights.
Originally, Mykkeltveit defined the embedding with weight $r_w^i$~\cite{mykkeltveit}.
This extra factor of $r_w$ in our modified embedding rotates the coordinate and is instrumental in the proof.

We now focus on a particular kind of cycle in the \db graph.
The \textbf{pure cycles} in the \db graph, also known as \textbf{conjugacy classes}, are the set of cycles $x_0x_1\cdots x_{w-1} \rightarrow x_1x_2\cdots x_{w-1}x_0 \rightarrow x_2x_3 \cdots x_0x_1 \rightarrow \cdots \rightarrow x_{w-1}x_0\cdots x_{w-3}x_{w-2} \rightarrow x_0x_1\cdots x_{w-1}$.
We use $[x]$ to denote the set of \wmers in the pure cycle containing $x$.
Each cycle consists of $w$ \wmers, unless $x_0x_1\cdots x_{w-1}$ is periodic, and in this case the size of cycle is equal to its shortest period.

Define the successor function $S_a(x) = x_1x_2\cdots x_{w-1} a$.
The successor function gives all the neighbors of $x$ in the \db graph.
For any \wmer $x$, we have $[S_a(x)]=[x]$ if and only if $a=x_0$.
We call this kind of moves a \textbf{pure rotation}, and use $R(x)$ to denote the resulting \wmer.
The embeddings from pure rotations satisfy a curious property:

\begin{lemma}[Rotations and Embeddings] \label{lm:myk_successor}
$P(R(x))$ on the complex plane is $P(x)$ rotated clockwise around origin by $2\pi/w$.
$P(S_a(x))$ is $P(R(x))$ shifted by $\delta = a-x_0$ on the real part, with the imaginary part unchanged.
\end{lemma}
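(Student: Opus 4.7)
The plan is to prove both parts by direct computation from the definition, using only the fact that $r_w$ is a primitive $w$-th root of unity so $r_w^w = 1$.

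First I would unfold $P(R(x))$ from the definition. Since $R(x)_i = x_{i+1}$ for $0 \le i \le w-2$ and $R(x)_{w-1} = x_0$, splitting off the last term and reindexing gives
\begin{equation*}
  P(R(x)) = \sum_{i=0}^{w-2} x_{i+1}\, r_w^{i+1} + x_0\, r_w^{w}
         = \sum_{j=1}^{w-1} x_j\, r_w^{j} + x_0\, r_w^{0}
         = \sum_{j=0}^{w-1} x_j\, r_w^{j}.
\end{equation*}
Comparing with $P(x) = \sum_{i=0}^{w-1} x_i r_w^{i+1} = r_w \sum_{i=0}^{w-1} x_i r_w^i$, I get the identity $P(x) = r_w \cdot P(R(x))$, or equivalently $P(R(x)) = r_w^{-1} P(x)$. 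Since $r_w^{-1} = e^{-2\pi i/w}$, multiplication by $r_w^{-1}$ is exactly a clockwise rotation around the origin by angle $2\pi/w$, which is the first claim.

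For the second claim I would likewise expand $P(S_a(x))$. The string $S_a(x)$ agrees with $R(x)$ in every coordinate except the last, where $R(x)_{w-1}=x_0$ while $S_a(x)_{w-1}=a$. Hence
\begin{equation*}
  P(S_a(x)) - P(R(x)) = (a - x_0)\, r_w^{w} = (a - x_0)\cdot 1 = \delta,
\end{equation*}
a real number. Therefore $P(S_a(x))$ is obtained from $P(R(x))$ by shifting the real part by $\delta$ while leaving the imaginary part untouched, which is the second claim.

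There is no real obstacle here: the whole argument reduces to careful bookkeeping of the index shift combined with the single identity $r_w^w = 1$. The only point that warrants care is ensuring the extra factor of $r_w$ introduced by the modified embedding (the exponent $i+1$ rather than $i$) is tracked correctly, since this is precisely what makes the pure-rotation step a clean multiplication by $r_w^{-1}$ and makes the extra contribution in $S_a$ land on $r_w^w = 1$, i.e., on the real axis.
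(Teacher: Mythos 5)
Your proof is correct and takes essentially the same route as the paper: both proceed by direct expansion of $P$ under the index shift and use $r_w^w = 1$, the only cosmetic difference being that you treat the pure rotation $R(x)$ first and then add the $\delta$ correction, whereas the paper expands $P(S_a(x))$ in one pass and recovers $R$ as the special case $\delta = 0$.
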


\begin{proof}
By Definition~\ref{df:modified_embeddiing} and the the definition of successor function $S_a(x)$:
\begin{align*}
P(S_a(x)) &= {\textstyle\sum}_{i=0}^{w-1} (S_a(x))_i r_w^{i+1} \\
&= {\textstyle\sum}_{i=0}^{w-2} x_{i+1}r_w^{i+1} + ar_w^{w-1+1} \\
&= r_w^{-1}{\textstyle\sum}_{i=0}^{w-1} x_{i} r_w^{i+1} + (a-x_0) \\
&= r_w^{-1}P(x)+\delta
\end{align*}
Note that for pure rotations $\delta=0$, and $r_w^{-1}P(x)$ is exactly $P(x)$ rotated clockwise by $2\pi/w$.
 \cqed \end{proof}

 The range for $\delta$ is $[-\sigma+1, \sigma-1]$.
 In particular, $\delta$ can be negative.
 This implies that in a pure cycle either all \wmer satisfy $P(x)=0$, or they lie equidistant on a circle centered at origin.
Figure~\ref{fig:embedding}(a) shows the embeddings and pure cycles of 5-mers.
It is known that we can partition the set of all \wmers into $N_{\sigma, k}$ pure cycles, and these cycles are disjoints.
This means any decycling set that breaks every cycle of the \db graph will be at least this large.
We now construct our proposed depathing set with this idea in mind.

\begin{definition}[Mykkeltveit Set] \label{df:depathing1} 
We construct the Mykkeltveit set $\mks$ as follows.
Consider each conjugacy class $[x]$, we will pick one \wmer from each of them by the following rule:
\begin{enumerate}
\item If every \wmer in the class embeds to the origin, pick an arbitrary one.
\item If there is one \wmer $x$ in the class such that $\re(P(x))<0$ and $\im(P(x))=0$, pick that one.
\item Otherwise, pick the unique \wmer $x$ such that $\im(P(x))<0$ and $\im(P(R(x)))>0$. Intuitively, this is the \wmer in the cycle right below the negative real axis.
\end{enumerate}
\end{definition}

This set breaks every pure cycle in the \db graph by its construction, with an interesting property as follows:
\begin{restatable}{lemma}{lmmykoneway} \label{lm:myk_oneway}
Let $\{x_i\}$ be a path on the \db graph that avoids $\mks$.
If $\im(P(x_i)) \leq 0$, then for all $j\geq i$, $\im(P(x_j)) \leq 0$.
\end{restatable}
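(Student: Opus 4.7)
The plan is to establish the one-step statement ``if $\im(P(x_i))\le 0$ and $x_i\notin\mks$, then $\im(P(x_{i+1}))\le 0$''; the lemma then follows by induction along the path, since every $x_j$ with $j\ge i$ lies outside $\mks$ by hypothesis.

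The key observation driving the one-step claim is that the successor map only affects the imaginary part through a rigid rotation. By Lemma~\ref{lm:myk_successor}, $P(x_{i+1}) = r_w^{-1}P(x_i) + \delta_i$ with $\delta_i\in\mathbb{R}$, so
\begin{equation*}
\im(P(x_{i+1})) \;=\; \im\!\bigl(r_w^{-1}P(x_i)\bigr).
\end{equation*}
In particular, the shift $\delta_i$ (where $S_a$ may differ from a pure rotation $R$) is invisible to the imaginary part. If $P(x_i)=0$ we are done immediately, so I would write $P(x_i)=\rho e^{i\theta}$ with $\rho>0$ and, using the hypothesis, $\theta\in[-\pi,0]$. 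Then $\im(P(x_{i+1}))=\rho\sin(\theta-2\pi/w)$, and the question reduces to: for which $\theta\in[-\pi,0]$ is $\sin(\theta-2\pi/w)>0$?

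The ``bad'' angles form exactly the arc $\theta\in[-\pi,-\pi+2\pi/w)$, i.e.\ points just below the negative real axis that the clockwise rotation by $2\pi/w$ sweeps across to the upper half plane. The crux is to show that every $x_i$ with an embedding in this arc must belong to $\mks$, contradicting the assumption that the path avoids it. I would split this along the boundary of the arc and match it to the two relevant clauses of Definition~\ref{df:depathing1}: if $\theta=-\pi$ then $P(x_i)$ is strictly negative real, so $x_i$ is chosen by rule~2; if $\theta\in(-\pi,-\pi+2\pi/w)$ then $\im(P(x_i))<0$ and $\im(P(R(x_i)))=\rho\sin(\theta-2\pi/w)>0$, so $x_i$ is exactly the ``\wmer right below the negative real axis'' selected by rule~3. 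Hence the bad arc is impossible for a point of the path, and the remaining range $\theta\in[-\pi+2\pi/w,0]$ yields $\theta-2\pi/w\in[-\pi,-2\pi/w]$, on which $\sin\le 0$; this closes the induction step.

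The main obstacle is keeping the wrap-around case analysis honest: one has to verify that the two-way split at $\theta=-\pi$ covers precisely the clauses of Definition~\ref{df:depathing1} and that nothing is lost at the boundary $\theta=-\pi+2\pi/w$ (where $\theta-2\pi/w=-\pi$ and $\sin=0$, still $\le 0$). A secondary subtlety is that successor moves are not generally pure rotations, so one must resist the temptation to argue at the level of $P(x_{i+1})$ directly and instead exploit that $\delta_i$ is real, which is the single property that lets the rotation-based analysis go through verbatim.
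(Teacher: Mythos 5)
Your proof is correct and follows essentially the same route as the paper's: both reduce to the one-step implication, both exploit that the shift $\delta$ in Lemma~\ref{lm:myk_successor} is real so $\im(P(S_a(x))) = \im(P(R(x)))$, and both then rule out the problematic transition by matching the offending embeddings to clauses~2 and~3 of Definition~\ref{df:depathing1}. The only cosmetic difference is that you parametrize by the polar angle and identify the bad arc $[-\pi, -\pi + 2\pi/w)$ explicitly, whereas the paper splits on the sign of $\im(P(x))$ and, on the boundary $\im = 0$, on the sign of $\re(P(x))$; the underlying case analysis and reliance on the structure of $\mks$ are identical.
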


\CRhide{\begin{proof} It suffices to show that in the remaining \db graph after removing $\mks$, there are no edges $x\rightarrow y$ such that $\im(P(x)) \leq 0$ and $\im(P(y)) > 0$.
  The edge $x\rightarrow y$ means that $y = S_a(x)$ for some $a$.
  By Lemma~\ref{lm:myk_successor}, $\im(P(R(x))) = \im(P(S_a(x))) = \im(P(y)) > 0$.
\begin{itemize}
\item If we have $\im(P(x)) < 0$, by clause 3 of Definition~\ref{df:depathing1}, $x\in \mks$.
\item If we have $\im(P(x)) = 0$ and $\re(P(x)) < 0$, by clause 2 of Definition~\ref{df:depathing1}, we have $x \in \mks$.
\item If we have $\im(P(x))=\re(P(x)) = 0$, we would have $\im(P(y)) = \im(P(R(x)))=0$, a contradiction.
\item If we have $\im(P(x)) = 0$ and $\re(P(x)) > 0$, $P(x)$ lies on positive half of the real axis, so rotating it clockwise by $2\pi/w$ degrees we would have $\im(P(y))=\im(P(R(x)))<0$, a contradiction.
\end{itemize}
 \cqed \end{proof}}

\subsection{Upper bounding the remaining path length in Mykkeltveit sets}

In this section, we show the remaining path after removing $\mks$ is at most $O(w^3)$ long.
This polynomial bound is a stark contrast to the number of remaining vertices after removing the Mykkeltveit set ---i.e., $\sigma^w - N_{\sigma,w} \sim (1-\frac{1}{w})\sigma^w$, which is exponential in $w$.

Our main argument involves embedding a \wmer to point in the complex plane, similar to Mykkeltveit's construction, while also tracking the weight of the \wmer, which allows us to exploit the intrinsic monotonicity of the embedding.

\CRhide{ 
  \subsubsection{From \wmers to embeddings}
  In this section, we formulate a relaxation that converts paths of \wmers to trajectories in a geometric space.
  Precisely, we model $S_a$ in Lemma~\ref{lm:myk_successor} as a rotation operating on a complex embedding with attached weights, where the weights restrict possible moves.

  Formally, given a pair $(z,t)$ where $z$ is a complex number and $t$ an integer, define the family of operations $Z_{\delta}(z,t) = (r_w^{-1}z+\delta, t+\delta)$.
  When $z=P(x)$ is the position of a \wmer $x$, $t=W(x)=\sum_{i=0}^{w-1}x_i$ is its weight, and when $0 \le \delta + x_0 < \sigma$, $Z_{\delta}(P(x), W(x)) = (P(S_{\delta+x_0}(x)), W(S_{\delta+x_0}(x)))$.
  This means $Z_{\delta}$ is equivalent to finding the position and weight of the successor $S_{\delta+x_0}$.

  We are now looking for the length of the longest path by repeated application of $Z_{\delta}$ that satisfies $0 \le t \le \wmax$, where $\wmax = (\sigma -1)w$ is the maximum weight of any \wmer.
  This is a relaxation of the original problem of finding a longest path as some choices of $\delta$ and some pairs $(z,t)$ on these paths may not correspond to actual transition or \wmer in the \db graph (when $\delta + x_0$ is negative or greater than $\sigma-1$, then it is not a valid transition). 
  In some sense, the pair $(z,t)$ is a loose representation of a \wmer where the precise sequence of the \wmer is ignored and only its weight is considered.
  On the other hand, every valid path in the \db graph corresponds to a path in this relaxation, and an upper-bound on the relaxed problem is an upper-bound of the original problem.

  \subsubsection{Weight-in embedding and relaxation}
  The \emph{weight-in embedding} maps the pair $x=(z,w)$ to the complex plane.
  This transforms the original longest remaining path problem into a geometric problem of bounding the length in the complex plane under some operation $S_{\delta}$.

  \begin{definition}[Weight-In Embedding] \label{df:weightin_embedding}
    The weight-in embedding of $x=(z,t)$ is $Q(x) = z - t$.
    Accordingly, for a \wmer $x$, its embedding is $Q(x) = Q(P(x), W(x)) = P(x)-W(x)$.
  \end{definition}

  The $Z_{\delta}$ operations in this embedding correspond to a rotation, and, maybe surprisingly, this rotation is independent of the value $\delta$.
  \begin{lemma} \label{lm:successor_weightin}
    Let $x=(z,t)$.
    For all $\delta$, the point $Q(Z_{\delta}(x))$ is the point $Q(x)$ rotated clockwise $2\pi/w$ around the point $(-t, 0)$.
  \end{lemma}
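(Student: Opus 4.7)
The plan is to verify this by direct computation, since once everything is unfolded the $\delta$ contribution cancels out and the rotation structure becomes transparent. This cancellation is really the whole content of the lemma: the weight-in embedding is designed precisely so that the horizontal shift caused by a non-pure rotation (the $+\delta$ contribution to the position $z$) is exactly compensated by the simultaneous shift in the weight coordinate (the $+\delta$ in $t$).

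Concretely, I would first compute $Q(Z_\delta(x))$ from the definitions. By Definition~\ref{df:weightin_embedding},
\begin{equation*}
Q(Z_\delta(z,t)) \;=\; Q(r_w^{-1}z+\delta,\; t+\delta) \;=\; (r_w^{-1}z+\delta) - (t+\delta) \;=\; r_w^{-1}z - t,
\end{equation*}
which is visibly independent of $\delta$. Next I would write down what it means for a point $p\in\mathbb{C}$ to be rotated clockwise by $2\pi/w$ around the center $c=-t$: the rotated point is $c + r_w^{-1}(p-c)$. Applying this to $p = Q(x) = z-t$ and $c=-t$ gives
\begin{equation*}
-t + r_w^{-1}\bigl((z-t) - (-t)\bigr) \;=\; -t + r_w^{-1} z \;=\; r_w^{-1}z - t.
\end{equation*}
The two expressions agree, which proves the claim.

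The only thing to double-check is the sign convention for the rotation: since $r_w^{-1} = e^{-2\pi i/w}$, multiplication by $r_w^{-1}$ is indeed a clockwise rotation by $2\pi/w$, consistent with Lemma~\ref{lm:myk_successor} (where the successor operation already rotated $P(x)$ clockwise by $2\pi/w$ around the origin). There is no real obstacle here; the lemma is essentially a reformulation of Lemma~\ref{lm:myk_successor} after the change of variables $Q(x) = P(x) - W(x)$, and its value lies in the geometric picture it enables: all relaxed transitions act as pure clockwise rotations, but around a center $(-t,0)$ that drifts horizontally with the weight $t$. This will be the foundation for bounding the length of any trajectory in the relaxed problem.
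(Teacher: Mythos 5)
Your proof is correct and follows essentially the same route as the paper: a direct computation showing $Q(Z_\delta(z,t)) = r_w^{-1}z - t$ (with the $\delta$ terms cancelling), combined with the rotation formula $c + e^{i\theta}(p-c)$ around $c = -t$. The paper packages the algebra slightly more tersely by rewriting $r_w^{-1}z - t$ as $r_w^{-1}(Q(z,t)+t) - t$ rather than expanding both sides separately, but the underlying argument is the same.
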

  \CRhide{\begin{proof}
      By definition of weight-in embedding and the operation $Z_{\delta}$:
      $$
      Q(Z_{\delta}(z,t)) = r_w^{-1}z + \delta - (t + \delta) = r_w^{-1}(Q(z,t)+t) - t
      $$

      In the complex plane, the rotation formula around center $c$ and of angle $\theta$ is $c + e^{i\theta}(z-c)$.
      Therefore, the operations $Z_{\delta}$ is a rotation around $c=(-t, 0)$ of angle $\theta = -2\pi/w$.
      \cqed \end{proof}}

  Figure~\ref{fig:embedding}(b) shows the weight-in embedding of a \db graph.
  The set $\mathcal{C}_{\sigma, w}=\{(-j, 0) \mid 0\leq j \leq \wmax\}$ is the set of all the possible center of rotations, and is shown by large gray dots on Figure~\ref{fig:embedding}(b).
  Because all the \wmer in a given conjugacy class have the same weight, say $t_0$, the conjugacy classes form a circle around a particular center $(-t_0,0)$.
  The image after application of $S_{\delta}$ is independent of the parameter $\delta$, but dependent on the weight $t$ of the underlying pair $(z,t)$.

  Multiple pairs of $x=(z,t)$ can share the same weight-in embedding $Q(x)$.
  As seen in Figure~\ref{fig:embedding}(b), every node belongs to two circles with different centers, meaning there are two embeddings with same $Q(x)$ but different $t$.

  Lemma~\ref{lm:myk_oneway} naturally divides any path in the \db graph avoiding $\mks$ into two parts, the first part in with $\im(P(x))>0$, and the second part with $\im(P(x))\leq 0$.
  Thanks to the symmetry of the problems, we focus on the upper halfplane, defined as the region with $\im(P(x)) \geq 0$.
  With the weight-in embedding, as long as the path is contained in the upper halfplane, it is always traveling to the right (towards large real value) or stay unmoved, as stated below:

  \begin{lemma}[Monotonicity of $\re(Q(\cdot))$] \label{lm:weightin_monotonicity}
    Assume $Q(x)$ and $Q(Z_{\delta}(x))$ are both in the upper halfplane.
    If $Q(x)$ does not coincide with its associated rotation center $(-t, 0)$, then $\re(Q(Z_{\delta}(x))) > \re(Q(x))$, otherwise $Q(Z_{\delta}(x)) = Q(x)$.
  \end{lemma}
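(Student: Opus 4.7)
The plan is to observe that by Lemma~\ref{lm:successor_weightin} the map $Z_\delta$ acts on $Q(x)$ as a clockwise rotation by angle $2\pi/w$ about the center $c = -t$ on the real axis; crucially, neither the angle nor the center depends on $\delta$, so the claim is purely a geometric statement about rotation around a point on $\mathbb{R}$. I would then write $Q(x) - c = r e^{i\phi}$ in polar form with $r \geq 0$, reducing the problem to tracking how $\re$ of a single point evolves under a fixed rigid rotation.

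First I would dispose of the degenerate case: if $r = 0$ then $Q(x) = c$ is a fixed point of the rotation, so $Q(Z_\delta(x)) = Q(x)$, which is exactly the second alternative in the statement. In the main case $r > 0$, the hypothesis that $Q(x)$ lies in the upper halfplane translates to $\phi \in [0, \pi]$, because $c$ is real and so $Q(x)$ and $Q(x) - c$ share the same sign of imaginary part. After the rotation the new argument is $\phi' \equiv \phi - 2\pi/w \pmod{2\pi}$, and applying the same halfplane reasoning to the image forces $\phi' \in [0, \pi]$ modulo $2\pi$.

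The core computation is then simply
\[
\re(Q(x)) = -t + r\cos\phi, \qquad \re(Q(Z_\delta(x))) = -t + r\cos\phi',
\]
so it suffices to show $\cos\phi' > \cos\phi$, which follows at once from strict monotonicity of $\cos$ on $[0, \pi]$ once one establishes $0 \le \phi' < \phi \le \pi$; then $r > 0$ delivers the strict inequality on the real parts.

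The one subtle step, and the place I expect the main obstacle, is ruling out wrap-around: a priori $\phi - 2\pi/w$ could be negative, in which case the actual argument of the rotated vector is $\phi - 2\pi/w + 2\pi$, and one must exclude that this lands back in $[0, \pi]$ spuriously. To handle this, I would use that $2\pi/w < \pi$ for the window sizes of interest, so if $\phi < 2\pi/w$ then the rotated point has strictly negative imaginary part, contradicting the upper halfplane hypothesis on $Q(Z_\delta(x))$. This pins $\phi' = \phi - 2\pi/w \in [0, \pi - 2\pi/w]$ with $\phi' < \phi$, closing the argument.
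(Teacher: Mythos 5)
Your proof is correct and follows essentially the same route as the paper's: the paper simply asserts, in one sentence, that a clockwise rotation about a center on the real axis, with both endpoints in the upper halfplane, must strictly increase the real part unless the point coincides with the center. You spell out that geometric fact in polar coordinates, including the wrap-around check that $\phi \geq 2\pi/w$ (which the paper leaves implicit), so this is a more detailed instance of the same argument rather than a different one.
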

  \CRhide{\begin{proof}
      The operation is a clockwise rotation where the rotation center is on the $x$-axis and the two points are on the non-negative halfplane.
      Necessarily, the real part increased, unless the point is on the fix point of the rotation (which is when $Q(x) = (-t, 0)$).
      \cqed \end{proof}}

  We further relax the problem by allowing rotations from any of the centers in $\mathcal{C}_{\sigma,w}$, not just from some $(-t, 0)$ corresponding to the weight in the weight-in embedding. 
  Lemma~\ref{lm:weightin_monotonicity} still applies in this case and the points in the upper-halfplane move from left to right.
  We are now left with a purely geometric problem involving no \wmers or weights to track:
  \begin{quote}
    What is the longest path $\{z_i\}$ possible where $z_{i+1}$ is obtained from $z_i$ by a rotation of $2\pi/w$ clockwise around a center from $\mathcal{C}_{\sigma,w}$, while staying in the upper halfplane at all times ($\im(z_i) \ge 0, \forall i$)?
  \end{quote}

  We now break the problem into smaller stages as the weight-in embedding pass through rotation centers, defined as $\mathcal{C}_{\sigma, w}=\{(-j, 0) \mid 0\leq j \leq \wmax\}$, the set of points that $Q(x)$ could possibly rotate around regardless of $t$.
  As there are $\wmax+1$ rotation centers and the maximum $\re(Q(x))=\re(P(x))$ for any \wmer is also $\wmax$, we define $2\wmax$ subregions, two between any adjacent pair.
  Formally:

  \begin{definition}[Half Subregions] a subregion is defined as the area $[-j, -j+0.5) \times [0, \wmax]$ called a left subregion or $[-j+0.5, -j+1) \times [0, \wmax]$ called a right subregion, for $0<j\leq\wmax$.
  \end{definition}

  We now define the problem of finding longest path, localized to one left subregion, as follows:

  \begin{definition}[Longest Local Trajectory Problem] \label{df:longest_local_trajectory}
    Define the feasible region $(0, 0.5) \times [0, \wmax]$, and relaxed rotation centers $\mathcal{C}' = \{(j, 0)\mid -\wmax \leq j \leq \wmax\}$. 
    A feasible trajectory is a list of points $\{z_i\}$ such that each point is in the feasible region, and $z_i$ can be obtained by rotating $z_{i-1}$ around $c\in \mathcal{C}'$  clockwise by $2\pi/w$ degrees.
    The solution is the longest feasible trajectory.
  \end{definition}

  Again, note that this new definition is a purely geometric problem involving no \wmers and no weights $W(x)$ to track.
  $z_i$ might stagnate if it coincides with one of the rotation centers, so we do not allow $\re(z_i)=-j$ in this geometric problem.
  Still, it suffices to solve this simpler problem, as indicated by the following lemma:

  \begin{restatable}{lemma}{lmrealrelaxation} \label{lm:longestlocalrelaxation}
    For fixed $w$ and $\sigma$, if the solution to the problem in Definition~\ref{df:longest_local_trajectory} is $L$, the longest path in the \db graph avoiding $\mks$ is upper bounded by $4\wmax L + O(w^2)=O(wL+w^2)$.
  \end{restatable}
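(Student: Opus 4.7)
The plan is to combine Lemma~\ref{lm:myk_oneway} with the monotonicity of $\re(Q(\cdot))$ from Lemma~\ref{lm:weightin_monotonicity} to cut an arbitrary remaining path into $O(w)$ sub-trajectories, each of which embeds into an instance of the local trajectory problem of Definition~\ref{df:longest_local_trajectory} and is therefore of length at most $L$. Given a path $\{x_i\}$ avoiding $\mks$, I would set $q_i = Q(x_i)$ and split the sequence at the first index $i_0$ with $\im(P(x_{i_0})) \le 0$. Lemma~\ref{lm:myk_oneway} then writes the path as the concatenation of an upper-halfplane prefix and a lower-halfplane suffix. A combined time-reversal of the trajectory together with a complex conjugation maps clockwise rotations in the lower halfplane into clockwise rotations in the upper halfplane and preserves the admissible center set $\mathcal{C}'$, so the two halves obey analogous bounds and it suffices to carry out the argument in the upper halfplane.

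In the strict upper halfplane Lemma~\ref{lm:weightin_monotonicity} makes $\re(q_i)$ strictly increasing, so no stagnation can occur there. Since $\re(Q(x)) = \re(P(x)) - W(x)$ with $\re(P(x)) \in [-W(x), W(x)]$, the embedding stays inside a horizontal strip of width at most $2\wmax$, which decomposes into at most $4\wmax$ half-subregions of width $1/2$. For each left-subregion an integer horizontal translation aligns it with the canonical feasible region $(0, 1/2) \times [0, \wmax]$, and because $\mathcal{C}'$ is translation-invariant on the relevant range, the restricted sub-trajectory is feasible for the local problem and hence of length at most $L$; right-subregions are handled by the complex-conjugation mirror analogue of the local problem, which also has maximum length $L$. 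Summing over the half-subregions bounds the interior step count, and applying the same argument to the lower halfplane and combining with the fact that $\re(q_i)$ is monotone (increasing in the upper part and decreasing in the lower part) on essentially disjoint pieces of the strip, the two halves together contribute at most $4\wmax L$ interior steps.

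The $O(w^2)$ additive term absorbs two kinds of exceptional steps. Steps whose image lies exactly on a vertical separator $\re(z) = -j$ contribute only $O(w)$ in total, because strict monotonicity of $\re(q_i)$ in the strict halfplanes allows each separator to be visited $O(1)$ times. Stagnation steps occur on the real axis when $q_i = (-W(x_i), 0)$, equivalently when $P(x_i) = 0$; a stagnation run stays in a single conjugacy class whose embedding is at the origin, and Definition~\ref{df:depathing1} puts a representative of every such class in $\mks$, so each run has length at most $w-1$. Because $W(x_i)$ (the location of the stagnation center) can take only $\wmax + 1 = O(w)$ values along one monotone trajectory, at most $O(w)$ distinct stagnation classes are visited, giving $O(w^2)$ stagnation steps in total. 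The main obstacle I expect is justifying that the upper and lower halfplane contributions sum to $4\wmax L$ rather than $8\wmax L$, which requires a careful bookkeeping of how the two halfplanes share half-subregions once the trajectory has been unfolded; a secondary concern is the clean reduction of right-subregion instances to the canonical left-subregion problem through complex conjugation without altering $L$, since the clockwise orientation of the rotation is directionally sensitive and must be correctly matched by the conjugated mirror instance.
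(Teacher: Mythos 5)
Your overall plan is the same one the paper uses: split via Lemma~\ref{lm:myk_oneway} into an upper-halfplane prefix and a lower-halfplane suffix, use monotonicity of $\re(Q(\cdot))$ to chop each part into sub-trajectories confined to width-$1/2$ subregions, translate left subregions onto the canonical region, mirror right subregions, and absorb boundary and stagnation steps into an $O(w^2)$ additive term. Your stagnation accounting (each stagnation run sits at an origin-embedding conjugacy class, is cut off at length $w-1$ by clause~1 of Definition~\ref{df:depathing1}, and monotonicity gives $O(w)$ distinct runs) is essentially the paper's count of $(\wmax+1)w$, and your conjugation/mirroring trick for right subregions matches the reflection $j-\overline{z_{-i}}$ used in the paper.

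The genuine gap is exactly the factor-of-two issue you flag, and the resolution is not the ``half-subregion sharing'' bookkeeping you anticipate. You estimate $\re(Q(x)) = \re(P(x)) - W(x) \in [-2W(x),0]$ and thus take the strip to be $[-2\wmax,0]$ of width $2\wmax$, giving $4\wmax$ half-subregions per halfplane and hence a trajectory bound of $4\wmax L$ per halfplane, i.e.\ $8\wmax L$ total. The paper instead decomposes only the core strip $[-\wmax,0]$ into $2\wmax$ half-subregions, yielding $2\wmax L + O(w^2)$ per halfplane and $4\wmax L + O(w^2)$ total. The missing ingredient is that the excursions with $\re(Q(x)) > 0$ or $\re(Q(x)) < -\wmax$ contribute only $O(w)$ steps each: in those regions every admissible center lies strictly on one side of $Q(x)$, so the polar angle measured from $(0,0)$ (resp.\ $(-\wmax,0)$) decreases by at least $2\pi/w$ at every step, and in the upper halfplane it can only sweep through $\pi/2$, bounding such runs by $w/4$ steps. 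Without this separate bound for the outer strip you cannot reduce the subregion count from $4\wmax$ to $2\wmax$ per halfplane, so the claimed constant $4$ in $4\wmax L$ does not follow from your argument as written.
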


  \CR{We prove this in the full version of the paper.}
  {We prove this lemma in Supplementary Section~\ref{supp:tightness}.}

  \subsubsection{Backtracking, heights and local potentials} \label{ssc:full_backtracking}
  In this section, we prove $L=O(w^2)$.
  We will frequently switch between polar and Cartesian coordinates in this section and the next section.
  For simplicity, let $r(z)$ and $\phi(z)$ denote the radius and the polar angle of $z$ written in polar coordinate.
  \begin{lemma} \label{lm:naivetrajectory}
    Any feasible trajectory within the region $(0, d] \times [0, \wmax]$ for $d\leq 0.5$ is at most $O(dw^3)$ long.
  \end{lemma}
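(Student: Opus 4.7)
The driving observation is that along any feasible trajectory, the real part $\re(z_i)$ strictly increases with $i$. Writing $z_i - c = re^{i\alpha}$ for the center $c$ used at step $i$, feasibility of the step ($\im(z_{i+1})\ge 0$ together with $\im(z_i)\ge 0$) forces $\alpha\in[\theta,\pi]$, where $\theta=2\pi/w$. A direct trigonometric identity then gives
\[
\Delta x_i = \re(z_{i+1})-\re(z_i) = r\bigl[\cos(\alpha-\theta)-\cos\alpha\bigr] = 2r\sin(\alpha-\theta/2)\sin(\theta/2) \;\ge\; 2r\sin^2(\theta/2),
\]
where the last inequality uses that $\sin$ attains its minimum on $[\theta/2,\pi-\theta/2]$ at the endpoints. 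Because $\sum_i \Delta x_i \le d$ (the strip has width $d$), a uniform lower bound on each $\Delta x_i$ immediately bounds the total number of steps.

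The plan is to split steps into those with $c\ne 0$ and those with $c=0$. If $c\ne 0$, then since $c$ is an integer with $|c|\ge 1$ and $x\le d\le 1/2$, we have $r\ge\min(x+|c|,\,c-x)\ge 1-d\ge 1/2$, which yields $\Delta x_i\ge\sin^2(\theta/2)=\Omega(1/w^2)$. Summing and using $\sum\Delta x_i\le d$, the number $N_{\ne 0}$ of such steps is at most $O(dw^2)$.

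The main obstacle is the $c=0$ case, where the per-step bound degenerates: $r$ can be as small as $x$, which may be tiny. The resolution is a counting argument on \emph{maximal runs} of consecutive $c=0$ rotations. Since rotation around the origin preserves $|z|$, the points in such a run all lie on a common circle around $0$; the feasibility constraints $\im(z)\ge 0$ and $\re(z)>0$ confine the trajectory to the arc with angular extent at most $\pi/2$, so each $c=0$ run contains at most $\pi/(2\theta)=w/4$ rotations. Consecutive $c=0$ runs are separated by at least one $c\ne 0$ step, so the number of $c=0$ runs is at most $N_{\ne 0}+1=O(dw^2)$.

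Combining the two parts, the total number of steps is
\[
N \;\le\; (\text{\# of $c=0$ runs})\cdot\frac{w}{4}+N_{\ne 0} \;\le\; O(dw^2)\cdot\frac{w}{4}+O(dw^2) \;=\; O(dw^3),
\]
which is the claimed bound. The only delicate step is the $c=0$ counting trick; the rest is routine trigonometric bookkeeping built on the monotonicity of $\re(z_i)$.
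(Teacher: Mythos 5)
Your proof is correct and takes essentially the same route as the paper's: each non-origin rotation advances $\re(z)$ by $\Omega(1/w^2)$ (so there are $O(dw^2)$ of them), and each maximal run of origin rotations has $O(w)$ steps, giving $O(dw^3)$ total. The only cosmetic difference is your $w/4$ bound on origin-rotation runs (via confinement to the first quadrant) versus the paper's $w/2$ (via the upper half-plane); both yield the same asymptotic.
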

  \begin{proof}
    The key observation is if a rotation is not around the origin, $\re(Q(x))$ increases by $\Omega(1/w^2)$.

    To see this, assume $(d, \theta)$ is the polar coordinate of $\re(Q(x))$ with respect to the rotation center.
    The polar coordinate for $\re(Q(S_a(x)))$ is then $(d, \theta - 2\pi/w)$.
    We note that $d\geq 0.5$ as $Q(x)$ satisfies $0<\re(Q(x))<0.5$ and is at least 0.5 away from any other rotation centers.
    The difference in real coordinate is $d(\cos(\theta-2\pi/w)-\cos(\theta))=2d\sin(\theta-\pi/w)\sin(\pi/w)$.
    Now, we require $\theta \in [0, \pi]$ and $\theta-2\pi/w \in [0, \pi]$, so $\sin(\theta-\pi/w) \geq \sin(\pi/w)$ and the whole term is lower bounded by $2d\sin^2(\pi/w) = \Omega(d/w^2)=\Omega(1/w^2)$.

    Only $O(dw^2)$ rotations not around origin is possible in the defined region, otherwise $\re(Q(x))$ would increase by $\Omega(dw^2)\Omega(1/w^2) = \Omega(d)$ already.
    Between two rotations not around the origin, only $w/2$ rotations around the origin can happen, or the point would have rotated $\pi$ degrees and can't stay in the upper halfplane.
    This means the possible number of pure rotations is $O(dw^3)$, which is also the asymptotic upper bound of path length.
    \cqed \end{proof}

  This lemma is sufficient to prove $L=O(w^3)$.
  To obtain $L=O(w^2)$, we need a potential based argument.
  Define $u=1-r_w^{-1}$, and let $s(z)$ be the lowest point above the real axis of form $z+ju$ where $j\in \mathbb{Z}$.
  We can show a potential function of form $E(z)=-wr(s(z))/\pi+\phi(s(z))$ is guaranteed to decrease by at least $2\pi/w$ every rotation, and it can only decrease by $O(w)$ total inside the feasible region, which would complete the proof.
  \CR{We present this proof in the full version of the paper.}{This proof can be found in Supplementary Section~\ref{supp:potential_argument}.}
} 

\subsection{Lower bounding the remaining path length in Mykkeltveit sets}

We provide here a constructive proof of the existence of a $\Omega(w^2)$ long path in the \db graph after removing $\mks$.
Since all \wmers in $\mks$ satisfy $\im(P(x)) \leq 0$, a path satisfying $\im(P(x))>0$ at every step is guaranteed to avoid $\mks$ and our construction will satisfy this criteria.
It suffices to prove the theorem for binary alphabet as the path constructed will also be a valid path in a graph with larger alphabet.
We present the constructions for even $w$ here.


We need an alternative view of \wmers in this section, close to a shift register.
Imagine a paper ring with $w$ slots, labelled tag 0 to tag $w-1$ with content $y=y_0y_1\cdots y_{w-1}$, and a pointer initially at 0.
The \wmer from the ring is $y_jy_{j+1}\cdots y_{w-1}y_0\cdots y_{j-1}=y[j,w-j]\cdot y[0,j]$, assuming pointer is at tag $j$.
A pure rotation $R(x)$ on the ring is simply moving the pointer one base forward, and an impure one $S_a(x)$ is to write $a$ to $y_j$ before moving the pointer forward.

Let $w=2m$.
We create $\lceil w/8 \rceil$ ordered quadruples of tags taken modulo $w$: $Q_j = \{a-j, a+j, b-j, b+j\}$ where $j\in [1, \lceil w/8 \rceil]$, $a=m-1$, and $b=w-1$.
In each quadruple $Q_j$, the set of associated root of unity $r_w^{i+1}$ for the $4$ tags are of form $\{-e^{-i\theta}, -e^{i\theta},e^{-i\theta}, e^{i\theta}\}$, adding up to $0$.
Consequently, changing $y_k$ for each $k$ in $Q_j$ from 1 to 0 does not change the resulting embedding.
The strategy consists of creating ``pseudo-loops'': start from a \wmer, rotate it a certain number of times and switch the bit of the \wmer corresponding to the index in a quadruple to $0$ to return to almost the starting position (the same position in the plane but a different \wmer with lower weight).

More precisely, the initial \wmer $x$ is all ones but $x_{w-1}$ set to zero, with paper ring content $y=x$ and pointer at tag 0.
The resulting \wmer satisfies $P(x) = -1$.
The sequence of operations is as follows.
First, do a pure rotation on $x$.
Then, for each quadruple $Q_j$ from $j=1$ to $j=\lceil w/8 \rceil$, we perform the following actions on $x$: pure rotations until the pointer is at tag $a-j$, impure rotation $S_0$, pure rotations until the pointer is at tag $a+j$, impure rotation $S_0$, pure rotations until pointer is at tag $b-j$, impure $S_0$, pure rotations until pointer is at tag $b+j$, impure $S_0$.

Each round involves exactly $w+1$ rotations since the last step is to an impure rotation $S_0$ at tag $b+j$ which increases by one between quadruple $Q_j$ and $Q_{j+1}$.
The total length of the path over all $Q_i$ is at least $cw^2$ for some constant $c$.
Figure~\ref{fig:longpath} shows an example of quadruples and a generated long path that fits in the upper halfplane.

 \CR{The correctness proof, alongside with the construction for odd $w$ (by approximating movements from $w'=2w$) can be found in the full paper}{The correctness proof for the construction is presented in Supplementary Section~\ref{supp:even_correctness} and the construction for odd $w$ is presented in Supplementary Section~\ref{supp:odd_construction}}.

\begin{figure}
  \centering
  (a)~\raisebox{-.5\height}{\includegraphics{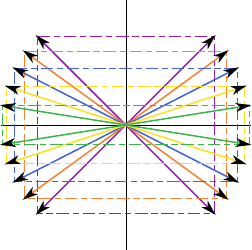}} \hspace{.5in}  (b)~\raisebox{-.5\height}{\includegraphics{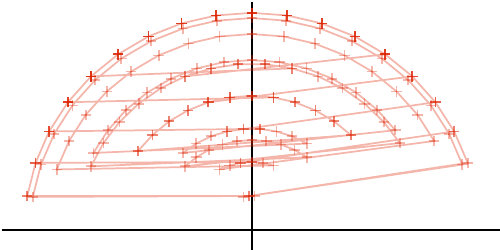}}
  \caption{
    (a)~For $w=40$, each set of $4$ arrows of the same color represent a quadruple set of root of unity.
    There are a total of $5$ sets.
    They were crafted so that the $4$ vector in each set cancel out.
    (b)~The path generated by these quadruple sets.
    The top circle of radius $1$ is traveled many times (between tags $r_1$ and $r_2$ in each quadruple), as after setting the $4$ bits to $0$, the \wmer has the same norm as the starting point.
  }
  \label{fig:longpath}
\end{figure}


\section{Discussion}\label{sec:discussion}

\paragraph{Relationship of UHS and selection schemes.}
Our construction of a \uhs{O(\ln(w)/w)}{w} also implies existence of a forward selection scheme with density $O(\ln(w)/w)$, only a $\ln(w)$ factor away from the lower bound on density achievable by forward scheme and local schemes.

Unfortunately this construction does not apply for arbitrary UHS.
In general, given a UHS with relative size $d$ and remaining path length $w$, it is still unknown how to construct a forward or local scheme with density $O(d)$.
As described in Section~\ref{sec:uhs-from-selection}, we can construct a UHS from a scheme by taking the set $\mathcal{C}_f$ of contexts that yields new selections.
But it is not always possible to go the other way: there are universal hitting sets that are not equal to a set of contexts $\mathcal{C}_f$ for any function $f$.

We are thus interested in the following questions. 
Given a UHS $U$ with relative size $d$, is it possible to create another UHS $U'$ from $U$ that has the same relative size $d$ and correspond to a local scheme (i.e., there exists $f$ such that $U' = \mathcal{C}_f$)?
If not, what is the smallest price to pay (extra density compared to relative size of the UHS) to derive a local scheme from UHS $U$?


\paragraph{Existence of ``perfect'' selection schemes.}
One of the goal in this research is to confirm or deny the existence of asymptotically ``perfect'' selection schemes with density of $1/w$, or at least $O(1/w)$.
Study of UHS might shed light on this problem.
If such perfect selection scheme exists, asymptotic perfect UHS defined as \uhs{O(1/w)}{w} would exist.
On the other hand, if we denied existence of an asymptotic perfect UHS, this would imply nonexistence of ``perfect'' forward selection scheme with density $O(1/w)$.

\paragraph{Remaining path length of Minimum Decycling Sets.}
There is more than one decycling set of minimum size (MDS) for given $w$.
The Mykkeltveit~\cite{mykkeltveit} set is one possible construction, and a construction based on very different ideas is given in Champarnaud \emph{et al.}~\cite{Champarnaud}.
The number of MDS is much larger than the two sets obtained by these two methods.
Empirically, for small values of $w$, we can exhaustively search all the MDS: for $2\le w \le 7$ the number of MDS is respectively \num{2}, \num{4}, \num{30}, \num{28}, \num{68288} and \num{18432}.

While experiments suggest the longest remaining path in a Mykkeltveit depathing set defined in the original paper is around $\Theta(w^3)$, matching our upper bound, we do not know if such bound is tight across all possible minimal decycling sets.
The Champarnaud set seems to have a longer remaining path than the Mykkeltveit set, although it is unknown if it is within a constant factor, bounded by a polynomial of $w$ of different degree, or is exponential.
More generally, we would like to know what is the range of possible remaining path lengths as a function of $w$ over the set of all MDSs.

%
%

\paragraph{Funding:}
This work was partially supported in part by the Gordon and Betty Moore Foundation's Data-Driven Discovery Initiative through Grant GBMF4554 to C.K., by the US National Science Foundation (CCF-1256087, CCF-1319998) and by the US National Institutes of Health (R01GM122935).

\paragraph{Conflict of interests:}
C.K.\ is a co-founder of Ocean Genomics, Inc.
G.M.\ is V.P.\ of software development at Ocean Genomics, Inc.

\bibliographystyle{splncs04}
\bibliography{main}



\newpage
\setcounter{page}{1}
\renewcommand{\thepage}{S\arabic{page}}
\setcounter{table}{0}
\renewcommand{\thetable}{S\arabic{table}}
\setcounter{figure}{0}
\renewcommand{\thefigure}{S\arabic{figure}}
\setcounter{section}{0}
\renewcommand{\thesection}{S\arabic{section}}
\renewcommand{\thetheorem}{S\arabic{theorem}}
\renewcommand{\thedefinition}{S\arabic{definition}}
\renewcommand{\thelemma}{S\arabic{lemma}}
\renewcommand{\thecorollary}{S\arabic{corollary}}

\section{Alternative Proof of Lemma~\ref{lm:uhslocal}}
\label{supp:uhslocal}
\begin{proof}
Assume there exists a path $\{c_0, c_1, \cdots, c_{w-1}\}$ in the \db graph of order $2w-1$ that avoids $\mathcal{C}$.
As $c_{w-1} \notin \mathcal{C}$, there exists some indices $v$ such that $f(c_{w-1}[w-1,w])+(w-1) = f(c_{w-1}[v, w])+v$.
Let $i$ be the smallest $v$ satisfying this property.

As $c_i \notin \mathcal{C}$, there exists some indices $j$ such that $f(c_i[w-1, w]) + (w-1) = f(c_i[j, w])+j$.
Now, note that $c_i[w-1, w] = c_{w-1}[i, w]$, so they have the same value of $f$.
These two terms are underlined below.
\begin{align*}
f(c_{w-1}[w-1, w])+(w-1) & = \underline{f(c_{w-1}[i, w])}+i \\
\underline{f(c_i[w-1, w])} + (w-1) &= f(c_i[j, w])+j \\
2(w-1)-i-j &= f(c_i[j, w]) - f(c_{w-1}[w-1, w]) \leq w-1 \\
i+j-w+1 &\geq 0
\end{align*}

Let $t=i+j-w+1$.
Since $i, j \in [0, w-2]$, $t=i+j-w+1 < w-1$, so it is a valid index between $0$ and $w-2$.
We now note that:
\begin{align*}
f(c_{w-1}[w-1, w])+(w-1) &= f(c_{w-1}[i, w])+i \\
&= f(c_{i}[w-1, w])+(w-1)+i-w+1 \\
&= f(c_i[j, w]) +j + i-w+1 \\
&= f(c_{w-1}[t, w]) + t
\end{align*}
Since $j<w-1$, we have $t<i$, contradicting with the fact that $i$ is the smallest index satisfying $f(c_{w-1}[w-1,w])+(w-1) = f(c_{w-1}[v, w])+v$.
 \end{proof}

\section{Tightness of local trajectory problem}
\label{supp:tightness}
In this section, we prove Lemma~\ref{lm:longestlocalrelaxation} by resolving every difference between Definition~\ref{df:longest_local_trajectory} and the original problem of finding longest path avoiding $\mks$.
We start from the other type of subregions.

\begin{lemma} \label{slm:localrelaxstep1}
For any $0\leq j< \wmax$, any path in the \db graph avoiding $\mks$ has at most $2L+2$ steps satisfying $Q(x) \in (-j-1, -j) \times [0, \wmax]$.
\end{lemma}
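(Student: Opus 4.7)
My plan is to split the strip $(-j-1, -j) \times [0, \wmax]$ along the vertical midline $\re = -j-0.5$ into two half subregions: the left half $H_L = [-j-1, -j-0.5) \times [0, \wmax]$ (a ``left subregion'' in the sense of the paper, for the integer $j+1$) and the right half $H_R = [-j-0.5, -j) \times [0, \wmax]$ (the corresponding ``right subregion''). By Lemma~\ref{lm:myk_oneway} it suffices to analyze the portion of the $\mks$-avoiding path lying in the upper halfplane $\im(P(x)) \ge 0$. Working in the relaxation that permits rotations around any center in $\mathcal{C}_{\sigma, w}$, Lemma~\ref{lm:weightin_monotonicity} implies that $\re(Q(x))$ strictly increases at every step inside the open strip: the centers of $\mathcal{C}_{\sigma, w}$ all have integer real part and so none of them lies inside $(-j-1, -j)$, ruling out stagnation. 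In particular, the path's trace inside the strip is a (possibly empty) contiguous block in $H_L$ followed by a (possibly empty) contiguous block in $H_R$, with no re-entries.

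Next I will bound each block by $L$ by exhibiting a length-preserving embedding into the Longest Local Trajectory Problem of Definition~\ref{df:longest_local_trajectory}. For $H_L$, translation by $+(j+1)$ sends its interior to $(0, 0.5) \times [0, \wmax]$ and maps each center $(-k, 0) \in \mathcal{C}_{\sigma, w}$ to the integer $(j+1-k, 0) \in \mathcal{C}'$; translation commutes with the clockwise $2\pi/w$ rotation, so the restricted path becomes a feasible trajectory of length at most $L$. For $H_R$ I plan to use an analogous embedding based on the symmetry between the two halves of a strip lying between adjacent integer centers: applying the involution $z \mapsto -\bar z$ together with time reversal turns a clockwise trajectory in $H_R$ of length $\ell$ into a clockwise trajectory in a translated copy of the feasible region of the local problem, so that $\ell \le L$ as well.

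Summing the two contributions yields at most $2L$ steps strictly inside the two open halves, and the additional slack of $+2$ absorbs the two kinds of boundary behavior: a step whose image lies exactly on the midline $\re = -j-0.5$ (which could otherwise be double-counted) and a step entering the strip whose image lies on the line $\re = -j-1$ from outside. The step I expect to be the main obstacle is the symmetry argument for $H_R$: the Longest Local Trajectory Problem is stated for a feasible region to the right of an integer center, while $H_R$ sits to the left of one, and a naive reflection would turn clockwise rotations into counterclockwise ones. Combining the reflection $z \mapsto -\bar z$ with time reversal should restore the direction of rotation while preserving the length of the trajectory, but this is the piece I would want to write out most carefully to make sure no trajectory in $H_R$ exceeds length $L$.
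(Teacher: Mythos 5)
Your proposal is correct and takes essentially the same route as the paper: split the strip at $\re = -j-0.5$, use monotonicity of $\re(Q(\cdot))$ to get two contiguous blocks, translate the left half by $+(j+1)$ into the feasible region of Definition~\ref{df:longest_local_trajectory}, and handle the right half by composing conjugation with time-reversal and a translation (the paper writes this combined map as $z \mapsto -j - \bar{z}$ applied to the reversed sequence), giving $L + L + 2$. The only thing worth writing out carefully, as you flag, is the verification that the reflected, time-reversed trajectory's rotation centers land in $\mathcal{C}'$: if $z_{i+1}$ is $z_i$ rotated clockwise about $(-t,0)$, then $-j - \bar z_i$ is $-j - \bar z_{i+1}$ rotated clockwise about $(t-j,0) \in \mathcal{C}'$, which is exactly what the paper checks.
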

\begin{proof}
As the defined region does not contain a rotation center, every move strictly increases $\re(Q(x))$.
We similarly define the left and right subregion as the region with $\re(Q(x)) < -j-0.5$ and $\re(Q(x)) > -j-0.5$.
There is at most one move that goes from the left subregion to the right one, or two moves if there is one \wmer with $\re(Q(x)) = -j-0.5$, and all other moves are contained within either subregion.

For the left subregion, the longest path within it is upper bounded by $L$.
Intuitively, we only need to shift the coordinate to coincide with Definition~\ref{df:longest_local_trajectory}.
Formally, let $\{z_i\}$ be the weight-in embedding of any path strictly within the left subregion.
Then $\{z_{i}+(j+1)\}$  becomes a feasible trajectory under Definition~\ref{df:longest_local_trajectory}, as all points are within the feasible region $(0, 0.5) \times [0, \wmax]$, and each center of rotation, which after shift is $(-W(x)+(j+1), 0) \in \mathcal{C}'$ as $0\leq j< W(x)$.

For the right subregion, we have the same conclusion using a mirroring argument.
To see this, again let $\{z_i\}$ be the weight-in embedding of any path strictly within second subregion.
Then $\{j-\overline{z_{-i}}\}$ ($z_{-i}$ is the $i^\text{th}$ element in $z$ counted backwards, and $\bar{z}$ is the conjugate of $z$) becomes a feasible trajectory.
This is because all points are in the feasible region $(0, 0.5) \times [0, \wmax]$, and assuming $z_{i+1}$ is $z_{i}$ rotated $2\pi/w$ clockwise around $(-t, 0)$, we know $j-\overline{z_{i}}$ is $j-\overline{z_{i+1}}$ rotated $2\pi/w$ clockwise around $(t-j, 0) \in \mathcal{C}'$.
 \cqed \end{proof}

Next, we bound the path length outside any subregions.

\begin{lemma}
Any path in the \db graph satisfying $\re(Q(x)) > 0$ and $\im(Q(x)) \geq 0$ has at most $w/4$ steps.
\end{lemma}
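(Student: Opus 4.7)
The plan is to track the argument (angle from origin) $\theta(Q(x))$ of the weight-in embedding along the path and show that each step rotates it clockwise by at least $2\pi/w$. Since the path stays in the open first quadrant (including the positive real axis), $\theta(Q(x))$ lives in $[0,\pi/2)$, so the total angular budget of $\pi/2$ caps the number of steps at $w/4$.

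First I would set up: by Lemma~\ref{lm:successor_weightin} each edge sends $Q(x)$ to a point obtained by clockwise rotation of $\alpha=2\pi/w$ around the center $(-W(x),0)$, and since $W(x)\ge 0$ the rotation center sits on the closed non-positive real axis. For every vertex on the path, $Q(x)$ lies in the set $\{z:\re(z)>0,\ \im(z)\ge 0\}$, so $\theta(Q(x))\in[0,\pi/2)$ is well defined.

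The main step is showing the uniform angular decrease. Writing the rotation as $z'=-t+e^{-i\alpha}(z+t)$ with $t\ge 0$, I would rearrange to
\[
z'e^{i\alpha}=z+t(1-e^{i\alpha}),\qquad 1-e^{i\alpha}=-2i\sin(\alpha/2)\,e^{i\alpha/2},
\]
so the perturbation $t(1-e^{i\alpha})$ has argument $-\pi/2+\alpha/2\in(-\pi/2,0)$ and thus lies in the closed fourth quadrant. Since $z$ lies in the closed first quadrant, the two vectors make an angle strictly less than $\pi$, and the standard fact that $\arg(u+v)$ lies between $\arg(u)$ and $\arg(v)$ gives $\arg(z'e^{i\alpha})\le \arg(z)$. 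Hence $\arg(z')\le \arg(z)-\alpha=\theta(Q(x))-2\pi/w$.

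To conclude, for a path with $n$ edges starting at $\theta_0<\pi/2$ and ending at $\theta_n\ge 0$, summing the per-step inequalities yields $n\cdot(2\pi/w)\le \theta_0-\theta_n<\pi/2$, hence $n\le w/4$. The one delicate point (the main obstacle) is the angular-addition inequality used in the middle step: I would justify it by noting that both $z$ and $t(1-e^{i\alpha})$ lie in the closed right half-plane, so their sum's argument is a weighted combination of the two arguments and is therefore bounded above by $\arg(z)$; the case $t=0$ (rotation around the origin) gives exact equality, and any $t>0$ only pushes the argument of $z+t(1-e^{i\alpha})$ further toward the fourth quadrant.
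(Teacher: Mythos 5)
Your proposal is correct and follows essentially the same approach as the paper's proof: track the polar angle $\theta$ of $Q(x)$, observe it lives in $[0,\pi/2)$, and show each step decreases it by at least $2\pi/w$. The only difference is that the paper simply asserts that rotating around a center $(-t,0)$ with $t>0$ decreases $\theta$ by \emph{more} than $2\pi/w$, whereas you give an explicit algebraic justification via the identity $z'e^{i\alpha}=z+t(1-e^{i\alpha})$ and the sign of the perturbation's argument; this is a nice, rigorous filling-in of a step the paper treats as geometrically obvious.
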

\begin{proof}
We let $\theta$ denote the polar angle of $Q(x)$.
As $Q(x)$ is in first quadrant, $0\leq \theta < \pi/2$.
Next, observe that every rotation around the origin decreases $\theta$ by $2\pi/w$, and every rotation not around origin but some $(-i, 0)$ with $i>0$ will decrease $\theta$ by a greater amount.
This means the path is at most $w/4$ steps long, because in $w/4$ steps $\theta$ would have decreased by at least $\pi/2$, leading to a contradiction.
 \cqed \end{proof}

We can similarly bound the path length left of all regions by looking at the polar angle of $Q(x)$ when origin is at $(-\wmax, 0)$, which leads to the following lemma:

\begin{lemma}
Any path in the \db graph satisfying $\re(Q(x)) < -\wmax$ and $\im(Q(x)) \geq 0$ has at most $w/4$ steps.
\end{lemma}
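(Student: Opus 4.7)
The plan is to mirror the argument of the previous lemma by shifting the reference point from the rightmost rotation center (the origin) to the leftmost one, $(-\wmax, 0)$. Introduce the shifted coordinate $z' = Q(x) + \wmax$. The hypothesis $\re(Q(x)) < -\wmax$ together with $\im(Q(x)) \geq 0$ places $z'$ in the open left half-plane with non-negative imaginary part, so $\arg(z') \in (\pi/2, \pi]$; set $\phi = \pi - \arg(z')$, which lies in $[0, \pi/2)$. The rotation centers $(-t, 0)$ with $0 \leq t \leq \wmax$ become $(\wmax - t, 0)$ in the shifted frame, all on the non-negative real axis (so they all sit to the right of the new origin, just as in the previous lemma they all sat to the left of the origin).

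The key claim is that $\phi$ increases by at least $2\pi/w$ at every step. For a rotation around the shifted origin itself (i.e., around $(-\wmax, 0)$ in original coordinates), $\arg(z')$ decreases by exactly $2\pi/w$ by definition of the clockwise rotation, so $\phi$ grows by exactly $2\pi/w$. For a rotation around any center strictly to the right of the shifted origin, the same first-order calculation used implicitly by the previous lemma --- now applied in a mirror-reflected frame --- shows that $\arg(z')$ decreases by strictly more than $2\pi/w$, so $\phi$ grows by strictly more than $2\pi/w$. Since the hypotheses hold at every vertex of the path, $\phi$ stays in $[0, \pi/2)$ throughout, which forces the path length to be at most $w/4$.

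The main obstacle is discharging the ``mirror-reflected'' claim cleanly without repeating the computation. The tidiest route is the orientation-reversing isometry $z \mapsto -\bar z - \wmax$: it maps the hypothesis region $\{\re(Q) < -\wmax,\ \im(Q) \geq 0\}$ bijectively onto the previous lemma's region $\{\re > 0,\ \im \geq 0\}$, sends each rotation center $(-t, 0)$ to $(t - \wmax, 0)$ on the non-positive real axis, and converts each clockwise rotation by $2\pi/w$ into a counterclockwise rotation by $2\pi/w$. Under this isometry, $\phi$ corresponds exactly to the polar angle $\theta$ of the previous lemma, and its per-step decrease by $\geq 2\pi/w$ there translates to the per-step increase of $\phi$ by $\geq 2\pi/w$ here, delivering the $w/4$ bound verbatim.
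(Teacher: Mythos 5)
Correct, and it is essentially the same argument the paper leaves to a one-line remark (``look at the polar angle of $Q(x)$ when the origin is at $(-\wmax,0)$''): your $\phi = \pi - \arg(Q(x)+\wmax)$ lies in $[0,\pi/2)$ and grows by at least $2\pi/w$ per step. The reflection $z\mapsto -\bar z - \wmax$ is a clean way to reuse the monotonicity computation; just make explicit that, because the map reverses orientation, the image path is traced counterclockwise, so the previous lemma is really being applied to the time-reversed image path (equivalently, inverting a clockwise step that decreases $\theta$ by $\geq 2\pi/w$ gives a counterclockwise step that increases $\theta$ by $\geq 2\pi/w$).
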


We are now prove the bound over the entire upper halfplane by bounding path length on the boundaries of subregions.

\begin{lemma} \label{slm:localrelaxstep2a}
Any path in the \db graph satisfying $\im(Q(x)) \geq 0$ has at most $2\wmax L +O(w^2)$ steps.
\end{lemma}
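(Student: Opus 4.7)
The strategy is to partition the upper halfplane into vertical slices, use monotonicity to force any path to visit each slice in a single contiguous block, and then apply the previously established per-slice bounds. By Lemma \ref{lm:weightin_monotonicity}, along any path with $\im(Q(x)) \geq 0$ the quantity $\re(Q(x))$ is non-decreasing, and the only way it stays constant across a step is for the embedded point to coincide with its associated rotation center $(-t, 0)$. Hence the sequence of embeddings $Q(x_i)$ sweeps from left to right across the strip $[-\wmax, 0] \times [0, \wmax]$ exactly once, crossing each vertical slice in a contiguous block of path indices.

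Next I would decompose the path into four kinds of contributions and sum them: the ``left tail'' with $\re(Q(x)) < -\wmax$ and the ``right tail'' with $\re(Q(x)) > 0$, each bounded by $w/4$ by the two lemmas immediately preceding; the $\wmax$ open strips $(-j-1,-j) \times [0, \wmax]$ for $0 \leq j < \wmax$, each bounded by $2L + 2$ by Lemma \ref{slm:localrelaxstep1}, contributing $2\wmax L + 2\wmax$ in total; and the $\wmax + 1$ vertical boundary lines $\re(Q(x)) = -j$. For each boundary line the monotonicity lemma forces consecutive steps with the same $\re(Q(x))$ to have identical embeddings, which can happen only at the rotation center $(-j, 0)$, i.e., with $P(x) = 0$ and $W(x) = j$.

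The main step is controlling these stagnation segments. The key observation is that if two successive \wmers $x$ and $S_a(x)$ both satisfy $P = 0$, then by Lemma \ref{lm:myk_successor} the shift $\delta = a - x_0 = P(S_a(x)) - r_w^{-1}P(x) = 0$, so the transition is a pure rotation $R(x)$ and both \wmers lie in the same conjugacy class. Since $\mks$ is a decycling set that contains at least one representative from every conjugacy class, such a stagnation segment has length at most $w - 1$. Summing over the at most $\wmax + 1$ rotation centers gives an overall boundary contribution of at most $(\wmax + 1)(w-1) = O(w^2)$.

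Combining all contributions yields
\begin{equation*}
  \underbrace{\tfrac{w}{4} + \tfrac{w}{4}}_{\text{tails}} \;+\; \underbrace{\wmax (2L + 2)}_{\text{open strips}} \;+\; \underbrace{(\wmax + 1)(w - 1)}_{\text{boundary lines}} \;=\; 2\wmax L + O(w^2),
\end{equation*}
as required. The main obstacle is not the arithmetic but identifying the right treatment of boundary lines: it is there that the purely geometric monotonicity argument has to be paired with a combinatorial fact about $\mks$ (that it meets every conjugacy class) in order to convert what would otherwise be an uncontrolled stagnation into an $O(w)$ bound per line.
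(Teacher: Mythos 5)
Your proposal is correct and follows essentially the same decomposition as the paper: split the upper halfplane by real coordinate into the two tails, the $\wmax$ open vertical strips handled by Lemma~\ref{slm:localrelaxstep1}, and the $\wmax+1$ boundary lines where stagnation at a rotation center corresponds to a pure-rotation cycle broken by $\mks$. The paper's treatment of the boundary lines appeals directly to clause~1 of Definition~\ref{df:depathing1}, while you rederive the same fact (stagnation $\Rightarrow$ $P(x)=0$ $\Rightarrow$ the step is a pure rotation $\Rightarrow$ same conjugacy class $\Rightarrow$ at most $w-1$ steps) from Lemma~\ref{lm:myk_successor}; this is a presentational difference only, and the final counts agree up to the $O(w^2)$ slack.
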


\begin{proof}
We again start by taking $\{z_i\}$ to be the weight-in embedding of any path within the upper halfplane.
We categorize $\{z_i\}$ using their real coordinates.

\begin{itemize}
\item If $\re(z_i) < -\wmax$ or $\re(z_i) > 0$, by previous two lemmas, we know there are at most $w/2$ of them.
\item Else, if $\re(z_i)$ is not an integer, it falls in one of the regions defined by Lemma~\ref{slm:localrelaxstep1}.
As there are $\wmax$ regions total under that definition, and the point can never reenter a region, the point belongs to a path contained within the region of at most $2L+2$ length, and there are at most $\wmax(2L+2)$ points in this category.
\item The last category is when $\re(z_i)$ is an integer.
If $c_i$ satisfies $\im(z_i) > 0$, it will be the only one with this real coordinate as $\re(z_{i+1})>\re(z_i)$.
Otherwise, $c_i$ coincides with one of rotation centers $(-j, 0)$.
It could stay at the same location by doing a rotation around itself, which corresponds to a pure rotation of a \wmer when that \wmer embeds to origin.
By construction of $\mks$ (clause 1 of Definition~\ref{df:depathing1} ), there can only be $w-1$ consecutive moves this way, so at most $w$ elements in $\{z_i\}$ have this real coordinate.
There are $\wmax+1$ possible real coordinates, and each one of them might contain $w$ points, so total number of points in this category is $(\wmax+1)w$.
\end{itemize}

Summing these $3$ categories, we get a bound of $2\wmax L + (\wmax+1)w + w/2 = 2\wmax L + O(w^2)$ for a path in the upper-half plane.
 \cqed \end{proof}

Finally, we look at the path in the lower halfplane with the concept of \wmer complements:

\begin{definition}[Complements of \wmer] 
For $a\in \Sigma$, its complement is defined as $\bar{a}=\sigma-a$.
For a \wmer $x$, its complement $\bar{x}$ is the \wmer $\overline{x_0}\overline{x_1}\cdots \overline{x_{w-1}}$. 
The following property holds:
\begin{itemize}
\item For any $x$, $P(x)=-P(\bar{x})$.
\item For any $x$ and $a$, $P(S_a(x)) = -P(S_{\bar{a}}(\bar{x}))$.
\item If there is an edge $x\rightarrow y$ in the \db graph, there is also an edge $\bar{x} \rightarrow \bar{y}$ in the \db graph.
\end{itemize}
\end{definition}

\begin{lemma} \label{slm:localrelaxstep2b}
Any path in the \db graph satisfying $\im(Q(x)) \leq 0$ has at most $2\wmax L +O(w^2)$ steps.
\end{lemma}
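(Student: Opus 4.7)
My plan is to reduce the lower halfplane case to the upper halfplane case already established in Lemma~\ref{slm:localrelaxstep2a}, using the complement operation on \wmers just introduced. Given any path $\{x_0, \ldots, x_n\}$ in the \db graph avoiding $\mks$ with $\im(Q(x_i)) \leq 0$ for all $i$, I will pass to the complement path $\{\bar{x}_0, \ldots, \bar{x}_n\}$ and show it satisfies the hypothesis of the previous lemma, which immediately yields the same bound on the original path's length.

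First I would check that the complement path is valid in the \db graph: by the third listed property of complements, if $x_i \to x_{i+1}$ is an edge then so is $\bar{x}_i \to \bar{x}_{i+1}$, so the complement sequence has the same length $n$. Next I would verify that it lies in the upper halfplane in the weight-in embedding: since weights are real and $P(\bar{x}) = -P(x)$, we get $\im(Q(\bar{x}_i)) = \im(P(\bar{x}_i)) = -\im(P(x_i)) = -\im(Q(x_i)) \geq 0$.

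The one genuine subtlety is that while the original path avoids $\mks$, the complement path a priori only avoids $\overline{\mks} := \{\bar{s} : s \in \mks\}$, which need not equal $\mks$. To address this, I would inspect the proof of Lemma~\ref{slm:localrelaxstep2a} and note that the only role $\mks$ plays there is through clause~1 of Definition~\ref{df:depathing1}: capping at $w-1$ the number of consecutive pure rotations inside a conjugacy class whose \wmers all embed to the origin. Because $P(\bar{x}) = -P(x)$, a conjugacy class embeds to the origin iff its complement class does, and the map $x \mapsto \bar{x}$ bijects conjugacy classes, so $\overline{\mks}$ still contains exactly one \wmer per origin-embedding class. The previous lemma's proof therefore goes through verbatim with $\overline{\mks}$ in place of $\mks$, giving the bound $2\wmax L + O(w^2)$ on the complement path and hence on the original one.

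The main obstacle is really this last bookkeeping step — verifying that nothing in the proof of Lemma~\ref{slm:localrelaxstep2a} depends on $\mks$ beyond its intersection with origin-embedding conjugacy classes, so that $\overline{\mks}$ is an adequate substitute. Once this is granted, the result follows directly from the symmetry furnished by complementation.
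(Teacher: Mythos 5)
Your proposal is correct and uses the same core idea as the paper: complement every \wmer in the path so that $\im(Q(\bar x)) = -\im(Q(x)) \ge 0$, verify the complement path is still a valid \db-graph path, and invoke Lemma~\ref{slm:localrelaxstep2a}. The paper's own proof is exactly this, stated in two sentences, and does \emph{not} address the point you flag.

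The subtlety you raise is genuine, and your resolution is the right one. Although the statement of Lemma~\ref{slm:localrelaxstep2a} reads ``any path satisfying $\im(Q(x))\ge 0$,'' its proof silently assumes the path avoids $\mks$ (the third bullet caps consecutive stays at a rotation center by invoking clause~1 of Definition~\ref{df:depathing1}); without that assumption a path could circulate indefinitely around an origin-embedding conjugacy class, so the literal statement would be false. The complement of a path avoiding $\mks$ a priori only avoids $\overline{\mks}$, and $\overline{\mks}\ne\mks$ in general. You correctly observe that the only property of $\mks$ actually used in the proof of Lemma~\ref{slm:localrelaxstep2a} is that it contains one \wmer from each conjugacy class embedding to the origin, and that complementation preserves this: $\overline{R(x)}=R(\bar x)$ so complementation bijects conjugacy classes, and $P(\bar x)=-P(x)$ so it preserves the origin-embedding property, whence $\overline{\mks}$ also contains exactly one \wmer per such class. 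So the argument for Lemma~\ref{slm:localrelaxstep2a} goes through verbatim with $\overline{\mks}$ in place of $\mks$. Your write-up is thus a more careful version of the paper's proof, filling a small gap the authors left implicit.

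(A minor remark orthogonal to your argument: the paper's definition $\bar a = \sigma - a$ is an off-by-one typo and should read $\bar a = \sigma - 1 - a$ so that $\bar a\in\Sigma$ and $P(\bar x)=-P(x)$ actually holds; you implicitly used the corrected version.)
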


\begin{proof}
For any path satisfying the condition, the path formed by taking complement of every \wmer is a path satisfying $\im(Q(x)) \geq 0$.
By Lemma~\ref{slm:localrelaxstep2a}, the length is also upper bounded by $2\wmax L + O(w^2)$.
 \cqed \end{proof}

We are now ready to prove the original statement as follows:

\lmrealrelaxation*

\begin{proof}
As seen in Lemma~\ref{lm:myk_oneway}, we can bound the path length in two parts.
For the first part with $\im(P(x))>0$, the length is upper bounded by $2\wmax L +O(w^2)$ because we prove a strictly stronger statement in Lemma~\ref{slm:localrelaxstep2a} by also allowing points with $\im(P(x)) = 0$.
For the second part with $\im(P(x))\leq 0$, we also proved a strictly stronger statement in Lemma~\ref{slm:localrelaxstep2b} by also allowing points in $\mks$.
The path length for the original problem is upper bounded by the sum of two upper bounds, which is $4\wmax L + O(w^2)$.
 \cqed \end{proof}

\section{Full Argument for $O(w^3)$ Path Length Upper Bound}
\label{supp:potential_argument}
As mentioned in the main text, Lemma~\ref{lm:naivetrajectory} does not solve our problems because setting $d=0.5$ yields $O(w^3)$ long trajectories.
We can however set $d=10/w$ and now focus on the trajectory in the region $(10/w, 0.5) \times [0, \wmax]$, which we denote as $\mathcal{R}$ for the rest of this section.

We aim to prove $L=O(w^2)$ by proving the near-optimality of a greedy approach: The sequence of rotation such that $z_i$ only rotate around $(1, 0)$ when necessary and otherwise rotate around $(0, 0)$.
Intuitively, if $z_i$ is rotated from further rotation centers, it will move a greater distance towards $\re(z_i) = 0.5$.
For trajectories that include moves that deviate from the greedy trajectory, we want to show we can always backtrack to the move, make corrections and yield a longer trajectory.
We now introduce the tools to formalize this idea.

We focus on the idea of backtracking moves.
Recall the formula $z \leftarrow c + r_w^{-1}(z-c)$ for rotating $z$ around $c$ clockwise by $2\pi/w$ degrees.
Note that this is a linear function of $c$, so if we change $c$ by $(1, 0)$, $z$ will change by $u=1-r_w^{-1}$.
In other words, if $z'$ is the result from rotating $z$ around some centers, to change the rotation centers retroactively, we can simply move $z'$ by a multiple of $u$.
This leads to the following definition.

\begin{definition}[Equivalence Classes and Heights]
Let $u=1-r_w^{-1}$ and recall $\mathcal{R} = (10/w, 0.5) \times [0, \wmax]$.
For any point $z \in \mathcal{R}$, its equivalent set $S(z) = \{z+ju \mid j\in \mathbb{Z}, \im(z+ju) >0\}$.
The point with smallest $j$ in the set is called representative of the set, denoted $s(z)$.
The height of a point is defined as the nonnegative integer $j$ such that $s(z) = z - ju$, which is zero if and only if $z$ is a representative itself.
\end{definition}

Now we can define the potential function.
Loosely speaking, this potential function measures how many steps are left in the trajectory if we strictly follow the greedy approach, backtracking one step if necessary.

\begin{definition}[Local Potential Function]
Let $P(z)=-wr/\pi+\phi$, assuming $z=(r, \phi)$ in polar coordinate.
The potential function of a point is $E(z)=P(s(z))$, where $s(z)$ is representative of $z$.
\end{definition}

For $z\in\mathcal{R}$, it is not guaranteed the representative $s(z)$ is in the same region.
However, we have the following lemma:

\begin{lemma}
If $z\in \mathcal{R}$, $z'$ is obtained by rotating $z$ one step according to the longest trajectory problem, then as long as $z' \in \mathcal{R}$, $s(z') \in \mathcal{R}$.
\end{lemma}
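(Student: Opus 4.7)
The plan is to derive an explicit closed form for $\re(s(z'))$ in terms of $\re(z)$ and $\im(z)$ alone (independent of both the rotation center $c$ and the heights $h_z, h_{z'}$), and then read off the required bounds directly. I would begin by writing $s(z') = z' - h_{z'}\,u$ with $h_{z'} \ge 0$, and let $\alpha = \im(s(z')) \in (0, \im(u)]$, which is the defining property of the representative. Solving $\im(z') - h_{z'}\im(u) = \alpha$ for $h_{z'}$, substituting into $\re(s(z')) = \re(z') - h_{z'}\re(u)$, and using $\re(u)/\im(u) = \tan(\pi/w)$, gives
$$\re(s(z')) = \re(z') - \tan(\pi/w)\,\im(z') + \tan(\pi/w)\,\alpha,$$
where $\tan(\pi/w)\,\alpha \in (0, \re(u)]$, since $\tan(\pi/w)\im(u) = 2\sin^2(\pi/w) = \re(u)$.

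Next, I would substitute the rotation formulas $\re(z') = \re(z) - \re(u)(\re(z)-c) + \im(u)\im(z)$ and $\im(z') = \im(z)\cos(2\pi/w) - \im(u)(\re(z)-c)$ into this expression. The coefficient of $(\re(z)-c)$ becomes $-\re(u) + \tan(\pi/w)\im(u) = 0$, so the dependence on the rotation center cancels completely. The coefficient of $\im(z)$ simplifies by the identity
$$\im(u) - \tan(\pi/w)\cos(2\pi/w) = \frac{\sin(2\pi/w)\cos(\pi/w) - \sin(\pi/w)\cos(2\pi/w)}{\cos(\pi/w)} = \frac{\sin(\pi/w)}{\cos(\pi/w)} = \tan(\pi/w),$$
yielding the clean formula
$$\re(s(z')) = \re(z) + \tan(\pi/w)\,\im(z) + \eta, \qquad \eta \in (0, \re(u)].$$

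From this identity, the containment $s(z') \in \mathcal{R}$ follows in three lines. Since $\re(z) > 10/w$, $\tan(\pi/w) > 0$, $\im(z) \ge 0$, and $\eta > 0$, the lower bound $\re(s(z')) > 10/w$ is immediate. Since $h_{z'} \ge 0$ and $\re(u) > 0$, we have $\re(s(z')) \le \re(z') < 0.5$. Finally, $\im(s(z')) = \alpha \in (0, \im(u)] \subseteq [0, \wmax]$ for $w$ large enough that $\sin(2\pi/w) \le \wmax$.

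The main obstacle I anticipate is spotting the two trigonometric cancellations that collapse the expression, namely $\tan(\pi/w)\im(u) = \re(u)$ (eliminating the entire $c$-dependence) and $\im(u) - \tan(\pi/w)\cos(2\pi/w) = \tan(\pi/w)$. Without them, one is left with an intermediate formula that contains both $h_z$ and $c$ through terms of size on the order of $\pi(\sigma-1)$, and the geometric reason why $\re(s(z'))$ always lies between $\re(z)$ and $\re(z')$ is not at all apparent; indeed, $\re(s(z))$ itself can fall well outside $\mathcal{R}$, as the authors warn just before the lemma, so any argument that tries to propagate a pointwise invariant on $s$ along the trajectory (rather than deriving $\re(s(z'))$ directly from $z$) will fail.
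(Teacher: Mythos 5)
Your proof is correct, and the explicit closed-form $\re(s(z')) = \re(z) + \tan(\pi/w)\,\im(z) + \eta$ with $\eta\in(0,\re(u)]$ is a sharper statement than what the paper proves. However, the paper's route is shorter and more conceptual than what you anticipate: rather than propagating any invariant or doing the trigonometric bookkeeping, the paper simply observes that $s(z') = z' - h_{z'}u = (c-h_{z'}) + r_w^{-1}\bigl(z - (c-h_{z'})\bigr)$, i.e.\ $s(z')$ is itself the image of $z$ under a clockwise rotation by $2\pi/w$ around the shifted center $c - h_{z'}$, which is still on the real axis. Since $z$ and $s(z')$ are both in the closed upper half-plane, the already-proven monotonicity lemma (Lemma~\ref{lm:weightin_monotonicity}) directly yields $\re(s(z')) > \re(z) > 10/w$; the upper bounds $\re(s(z')) \le \re(z') < 0.5$ and $\im(s(z')) \le \im(z') \le \wmax$ then follow from $h_{z'}\ge 0$ and $\re(u),\im(u) > 0$ exactly as you argue. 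So the ``geometric reason why $\re(s(z'))$ lies between $\re(z)$ and $\re(z')$'' that you said was hard to see is in fact visible without computation once one notices that subtracting $h_{z'}u$ from $z'$ is the same as translating the rotation center by $-h_{z'}$ along the real axis. Your algebraic route buys an explicit formula in which $c$ and the heights vanish, which is a nice bonus; the paper's buys economy by reusing the monotonicity lemma it has already established. Both are valid.
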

\begin{proof}
By definition of the representative, $\im(s(z')) \geq 0$.
Also by definition of the equivalent set, $s(z')$ is also obtained by rotating $z$ by some points on the real axis, clockwise by $2\pi/w$ degrees.
As we have shown before, such move is guaranteed to increase $\re(z)$, so $\re(s(z')) > \re(z) > 10/w$.
To show $s(z') \in \mathcal{R}$, we only need $\re(s(z')) < 0.5$ and $\im(s(z')) \leq \wmax$.
However, since $\re(s(z')) \leq \re(s(z))$ and $\im(s(z')) \leq \im(s(z))$, $s(z') \notin \mathcal{R}$ would imply $z' \notin \mathcal{R}$.
 \cqed \end{proof}

This means after one rotation in $\mathcal{R}$, or the first step in the trajectory, $s(z)$ is guaranteed to be in the same region and would stay in the region unless $z'$ is already out of the region, indicating end of trajectory.
For the rest of our proofs, we assume $s(z) \in \mathcal{R}$ for the whole trajectory.

Our goal from now on is to prove that $E(z)$ reduces by some amount each rotation.
Assume a rotation brings $z$ to $z'$.
Note that if we only care about $E(z)$, the rotation center is irrelevant as $s(z')$ is the same, so we can assume every move is a rotation around origin and $z'=r_w^{-1}z$, possibly followed by a shift in multiples of $u$ (which does not change $s(z')$).

If $z$ and $z'$ are both of height 0, $s(z)=z$, $s(z')=z'$ and if $z=(r, \phi)$ in polar coordinate, $z'=(r, \phi-2\pi/w)$.
This means $E(z)-E(z') = 2\pi/w$ and the potential drops by $2\pi/w$, a constant value.
If they are both of height $j$, $s(z)=z-ju$, $s(z')=z'-ju$, and it is no longer clear how much $\phi$ changes other than that it decreases a bit.
However, $z'$ is further to the origin, and as we prove below, the change in $r$ is enough for our proofs.
We prove a stronger lemma:

\begin{lemma} \label{slm:potential_general}
If $z'=e^{-i\theta}z$ is $z$ rotated clockwise by $\theta\leq 2\pi/w$ degrees, and they are of the same height $j\geq 1$, then $E(z)-E(z') \geq 1.9\theta$.
\end{lemma}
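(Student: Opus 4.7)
The plan is to interpolate between $s(z)$ and $s(z')$ via the smooth path $\tilde v(t) = e^{-it}z - ju$ for $t \in [0,\theta]$, which satisfies $\tilde v(0) = s(z)$ and $\tilde v(\theta) = s(z')$. The identity $\tilde v(t) + ju = e^{-it}z$ shows that $\tilde v$ traces a circular arc of radius $|z|$ centered at $c = -ju$, sweeping clockwise at unit angular speed. Writing
\[ E(z) - E(z') = -\int_0^\theta \frac{d}{dt}P(\tilde v(t))\,dt, \]
it suffices to prove that $-\frac{d}{dt}P(\tilde v(t)) \ge 1.9$ pointwise along the arc.

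Using polar form $\tilde v = Re^{i\Phi}$ and $u = Ue^{i\alpha}$ with $U = 2\sin(\pi/w)$ and $\alpha = \pi/2 - \pi/w$, together with the equation of motion $\dot{\tilde v} = -i(\tilde v + ju)$, a chain-rule calculation on $P(x) = -w|x|/\pi + \arg(x)$---simplified via $\sin(\Phi - \alpha) = -\cos(\Phi + \pi/w)$ and $\cos(\Phi - \alpha) = \sin(\Phi + \pi/w)$---yields
\[ -\frac{dP(\tilde v)}{dt} \;=\; 1 + \frac{wjU}{\pi}\cos(\Phi + \pi/w) + \frac{jU}{R}\sin(\Phi + \pi/w). \]
For $j \ge 1$ and $\Phi + \pi/w \in [0, \pi/2]$, all three terms are nonnegative, so the task reduces to controlling $\Phi(t)$ along the arc.

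The geometric control comes from the Cartesian rates $\frac{d}{dt}\re(\tilde v) = \im(\tilde v) + j\sin(2\pi/w)$ and $\frac{d}{dt}\im(\tilde v) = -\re(\tilde v) - j(1-\cos(2\pi/w))$, both read off from $\dot{\tilde v} = -i(\tilde v + ju)$. Since $v_0, v_1 \in \mathcal{R}$ have $\re > 0$ and $\im \ge 0$, and since $\frac{d}{dt}\im(\tilde v) < 0$ whenever $\re(\tilde v) > 0$ while $\frac{d}{dt}\re(\tilde v) > 0$ whenever $\im(\tilde v) \ge 0$ and $j \ge 1$, an invariance argument shows that $\re(\tilde v(t))$ is strictly increasing and $\im(\tilde v(t))$ is strictly decreasing on $[0,\theta]$. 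Hence $\im(\tilde v(t)) \in [\im(v_1), \im(v_0)] \subseteq [0, \sin(2\pi/w)]$ and $\re(\tilde v(t)) \ge \re(v_0) > 10/w$, so $\Phi(t) \le \arctan(w\sin(2\pi/w)/10) \le \arctan(\pi/5) < \pi/3$ for large $w$.

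Putting the pieces together, for sufficiently large $w$ the middle term dominates: $wU/\pi = 2w\sin(\pi/w)/\pi \to 2$ and $\cos(\Phi + \pi/w) \ge \cos(\arctan(\pi/5) + \pi/w) \to 5/\sqrt{25 + \pi^2} \approx 0.847$, so $\frac{wjU}{\pi}\cos(\Phi + \pi/w) \ge 0.9$ for $j \ge 1$, giving $-\frac{dP(\tilde v)}{dt} \ge 1.9$. Integrating over $[0,\theta]$ then yields $E(z) - E(z') \ge 1.9\theta$. The main obstacle is the Cartesian monotonicity step: the polar rate $d\Phi/dt = -1 - (jU/R)\sin(\Phi + \pi/w)$ is not manifestly sign-definite for large $j$, so the argument must route through the Cartesian coordinates, exploiting that $\mathcal{R}$ is a thin horizontal strip along the positive real axis.
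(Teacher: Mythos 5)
Your proof is correct, and it takes a genuinely different route from the paper's. The paper works directly with the chord: it writes $s(z')$ in terms of $s(z)$ and a vector $y = r'^{-1}(r'-1)u$, bounds the angle between $s(z)$ and $y$, applies the law of cosines to lower-bound $r(s(z')) - r(s(z))$, and then discards the (nonnegative) angular contribution $\phi(s(z)) - \phi(s(z'))$ entirely. You instead interpolate continuously along the arc $\tilde v(t) = e^{-it}z - ju$ and lower-bound $-dP/dt$ pointwise, keeping all three contributions $1 + \frac{wjU}{\pi}\cos(\Phi+\pi/w) + \frac{jU}{R}\sin(\Phi+\pi/w)$. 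Your chain-rule calculation is correct (the constants $U = 2\sin(\pi/w)$, $\alpha = \pi/2-\pi/w$, and the identities $\sin(\Phi-\alpha) = -\cos(\Phi+\pi/w)$, $\cos(\Phi-\alpha) = \sin(\Phi+\pi/w)$ all check out), and the reduction to controlling $\Phi(t)$ is right.

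Two remarks on the monotonicity step, which you correctly flag as the crux. First, the cleanest justification bypasses the invariance argument: writing $\im(\tilde v(t)) = |z|\sin(\arg z - t) - j\sin(2\pi/w)$, one sees directly that $\im(\tilde v)$ is strictly decreasing because $\arg z \in (0,\pi/2)$ and $t \in [0,2\pi/w]$ keep $\arg z - t$ inside $(-\pi/2,\pi/2)$; since $\im(\tilde v(\theta)) = \im(s(z')) > 0$, the arc never dips below the real axis. For $\re(\tilde v)$, one needs $\arg z > \theta$, which holds for $j\geq 1$ because $\im(z) \geq j\sin(2\pi/w) \geq \sin(2\pi/w)$ while $\re(z) < 0.5$, so $\arg z > \arctan(2\sin(2\pi/w)) > 2\pi/w$ for large $w$. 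This makes your claim airtight without circular dependence on the as-yet-unproved Cartesian signs.

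Second, your route is arguably more robust on the constants. The paper's bound $\im(s(z)) \leq \sin(\pi/w)$ should read $\sin(2\pi/w)$ (the height criterion is $\im(s(z)-u)\leq 0$ and $\im(u)=\sin(2\pi/w)$), which pushes $\phi(s(z))$ up to roughly $\arctan(\pi/5) \approx 0.56$ rather than the stated $0.291$, and $\cos\psi$ down to about $0.847$ rather than $0.955$, so the radial term alone only yields about $1.69\theta$. Because you retain the unit contribution from the pure-rotation angular rate, your estimate gives roughly $1 + 1.69j \geq 2.69$ for $j\geq 1$, comfortably above the claimed $1.9$, and your conservative simplification to $1 + 0.9$ is still valid. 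In short: your argument is correct, structurally different, and, as written, more forgiving of the numerical details than the paper's version.
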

\begin{proof}
We note that the $z'_0 = z'-ju$ and $z_0 = z-ju$ are the representatives of $z'$ and $z$, and $\phi(z'_0) < \phi(z_0)$.
Let $r'=e^{-i\theta}$.
\begin{align*}
|z'_0| &= |z'-ju| \\
&= |r'z  - ju| \\
&= |r'(z_0+ju) - ju| \\
&= |r'z_0 + j(r'-1)u| \\
&= |z_0 + jr'^{-1}(r'-1)u|
\end{align*}
We let $y = r'^{-1}(r'-1)u$.
Written in polar coordinate, $r'^{-1} = (1, \theta), (r'-1) = (2\sin(\theta/2), 3\pi/2-\theta/2), u=(2\sin(\pi/w), \pi/2-\pi/w)$, so $y=(4\sin(\theta/2)\sin(\pi/w), \theta/2-\pi/w)$.
Since $0<\theta \leq 2\pi/w$, the polar angle of $y$ is between $0$ and $-\pi/w$, which becomes 0 as $w$ grows, meaning $y$ is almost parallel to real axis.

We next bound the polar angle of $z_0$.
Since $z_0$ is in first quadrant, $\phi(z_0) \geq 0$.
Next, since $z_0 \in \mathcal{R}$ and it is a representative, $\re(z_0) \geq 10/w$ and $\im(z_0) \leq \sin(\pi/w)$ (otherwise, $z_0 - t$ still satisfies $\im(\cdot) \geq 0$ and would be the representative instead).
For sufficiently large $w$, we have $\im(z_0) \leq 3/w$ and $\tan(\phi(z_0)) = \im(z_0) / \re(z_0) \leq 0.3$, which yields $\phi(z_0) \leq 0.291$.

Let the angle between $z_0$ and $y$ be $\psi$, we have $|\psi| \leq (\phi(z_0) + 2\pi/w)$.
For large enough $w$, $|\psi| \leq 0.3$ and $\cos(\psi) \geq 0.955$.
Apply the rule of cosines on vector additions:
\begin{align*}
|z'_0|^2 &= |z_0 + jy|^2 \\
&= |z_0|^2 + j^2|y|^2 + 2j\cos(\psi)|z_0||y| \\
& \geq (z_0 + 0.955j|y|)^2
\end{align*}

So $r(z'_0) - r(z_0) \geq 0.955j|y| = 3.82j\sin(\pi/w)\sin(\theta/2) \geq 1.9\pi\theta/w$ for large enough $w$.
We now plug this back to the formula for potential energy:
\begin{align*}
E(z) - E(z') &= P(z_0) - P(z'_0) \\
&= -w(r(z_0) - r(z'_0))/\pi + (\phi(z_0) - \phi(z'_0)) \\
&\geq w(1.9\pi\theta/w)/\pi \\
&= 1.9\theta
\end{align*}

This finishes the proof. 
\cqed \end{proof}

Plugging in $\theta=2\pi/w$, we have the following:

\begin{restatable}{lemma}{lmpotentiala} If $z'=r_w^{-1}z$ with same height $h>0$, $E(z)-E(z') \geq 3.8\pi/w$ for sufficiently large $w$.
\end{restatable}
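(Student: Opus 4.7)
The plan is essentially to invoke Lemma~\ref{slm:potential_general} as a black box, since the statement in question is literally the boundary case $\theta = 2\pi/w$ of that lemma. First, I would observe that $r_w = e^{2\pi i/w}$, so $r_w^{-1} z = e^{-2\pi i/w} z$; this means $z' = r_w^{-1} z$ is precisely $z$ rotated clockwise around the origin by angle $\theta = 2\pi/w$. This choice of $\theta$ saturates the constraint $\theta \leq 2\pi/w$ in the hypothesis of Lemma~\ref{slm:potential_general} with equality, and the equal-height assumption $h > 0$ matches the earlier hypothesis verbatim. No additional geometric setup is required beyond this identification.

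Having verified the hypotheses, the only remaining step is to substitute $\theta = 2\pi/w$ into the conclusion $E(z) - E(z') \geq 1.9\theta$ of the earlier lemma, obtaining $E(z) - E(z') \geq 1.9 \cdot (2\pi/w) = 3.8\pi/w$, which is exactly the desired bound. The ``sufficiently large $w$'' qualifier is inherited directly from Lemma~\ref{slm:potential_general}, where it was used to ensure that the angle between the representative $z_0$ and the auxiliary vector $y = r'^{-1}(r'-1)u$ stays small enough for the cosine estimate $\cos(\psi) \geq 0.955$ and the bound $\tan(\phi(z_0)) \leq 0.3$ to hold. There is no real obstacle here: all of the geometric content --- the polar-coordinate estimate of $|y|$, the law of cosines argument for $|z_0'|$, and the conversion back to the potential $P$ --- is already carried out inside the proof of Lemma~\ref{slm:potential_general}. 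The present lemma simply packages the full single-step rotation $z \mapsto r_w^{-1} z$ as a named corollary, which is the form in which it will be consumed when bounding the total number of trajectory steps through the per-step potential drop.
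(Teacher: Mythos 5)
Your proposal is correct and matches the paper exactly: the paper introduces this lemma with the line ``Plugging in $\theta=2\pi/w$, we have the following:'' immediately after Lemma~\ref{slm:potential_general}, which is precisely the specialization you describe. Your verification of the hypotheses ($r_w^{-1}z = e^{-2\pi i/w}z$, $h>0$ matching $j\geq 1$) and the arithmetic $1.9\cdot 2\pi/w = 3.8\pi/w$ is all that is needed.
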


The last case is when $z$ and $z'$ are of different height.
We need to account for the sudden change of height during rotation.
Intuitively, changing height from $j+1$ to $j$ while making a small movement costs $2\pi/w$ potential, as follows:

\begin{lemma}\label{slm:potential_jump}
For sufficiently large $w$ and a real number $10/w < d < 0.5$, we have $P(d+u) - P(d) = -2\pi/w + O(1/w^2)$.

\end{lemma}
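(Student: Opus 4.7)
The plan is to split $P(d+u) - P(d) = -w(|d+u|-d)/\pi + \arg(d+u)$ into a radial and an angular contribution, Taylor-expand each in $\epsilon := 2\pi/w$, and observe that the subleading $O(1/(dw))$ corrections from the two pieces cancel, leaving only the desired $-2\pi/w + O(1/w^2)$.

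For the radial piece, I would use the Cartesian form $u = (1 - \cos\epsilon) + i\sin\epsilon$ to write the exact identity $|d+u|^2 = d^2 + 4(d+1)\sin^2(\pi/w)$. Substituting $\sin^2(\pi/w) = \epsilon^2/4 + O(\epsilon^4)$ and expanding the square root via $\sqrt{1+x} = 1 + x/2 + O(x^2)$ gives $|d+u| - d = (d+1)\epsilon^2/(2d) + O(\epsilon^4/d)$. Multiplying by $-w/\pi$ and using $w\epsilon^2/\pi = 2\epsilon$, this becomes
\[
-\frac{w(|d+u|-d)}{\pi} = -\epsilon \cdot \frac{d+1}{d} + O(1/w^3) = -\frac{2\pi}{w} - \frac{2\pi}{dw} + O(1/w^3).
\]

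For the angular piece, $\arg(d+u) = \arctan\!\bigl(\sin\epsilon/(d + 1 - \cos\epsilon)\bigr)$. Since $1 - \cos\epsilon = O(\epsilon^2)$ is negligible against $d$, the argument of $\arctan$ simplifies to $\epsilon/d + O(\epsilon^3/d^2)$. The lower bound $d > 10/w$ ensures $\epsilon/d \leq \pi/5$, comfortably inside the region where $\arctan(x) = x + O(x^3)$ is applicable, so $\arg(d+u) = \epsilon/d + O(1/w^3) = 2\pi/(dw) + O(1/w^3)$. Adding the two expansions, the $\pm 2\pi/(dw)$ terms cancel exactly, yielding $P(d+u) - P(d) = -2\pi/w + O(1/w^2)$.

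The main obstacle is ensuring the remainder is truly $O(1/w^2)$ uniformly in $d$ across the entire admissible range. Several of the higher-order terms in both expansions carry factors of $1/d^k$, which can be as large as $(w/10)^k$ near the lower endpoint $d = 10/w$. To make the argument rigorous, one must retain at least one additional order in each Taylor series and verify that the pattern of cancellation observed at the $1/(dw)$ level propagates to the next order, so that the potentially $1/d$-divergent error terms in the radial and angular parts combine to give only $O(1/w^2)$ remainders.
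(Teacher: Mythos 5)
Your concern at the end is not merely a technical gap to be filled: the hoped-for cancellation at the next order does not occur, and the lemma as stated is in fact false near the lower endpoint $d\approx 10/w$. Set $\alpha = 2\pi/(dw)$, which for $d$ near $10/w$ is the $\Theta(1)$ quantity $\approx\pi/5$. Your radial error estimate $O(1/w^3)$ is wrong in this regime: the dropped $O(x^2)$ term in $\sqrt{1+x}=1+x/2+O(x^2)$, after multiplying by $d$ and then by $-w/\pi$, contributes $+\tfrac{(d+1)^2}{4}\alpha^3$, while the $\arctan$ cubic term contributes $-\tfrac{1}{3}\alpha^3$; these sum to roughly $-\tfrac{1}{12}\alpha^3$, a constant when $\alpha=\Theta(1)$, not $O(1/w^2)$. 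More sharply, your exact identity $|d+u|^2=d^2+4(d+1)\sin^2(\pi/w)$ shows that with $dw$ held fixed as $w\to\infty$,
\[
P(d+u)-P(d)\longrightarrow \arctan\alpha - \frac{2\bigl(\sqrt{1+\alpha^2}-1\bigr)}{\alpha},
\]
a strictly negative constant (writing $s=\sqrt{1+\alpha^2}$, its derivative in $\alpha$ equals $-\alpha^2/(s^2(1+s)^2)<0$, so it decreases from $0$); for $\alpha=\pi/5$ the limit is about $-0.015$. So not only does the cancellation fail to propagate, the left-hand side does not even tend to zero.

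You should know the paper's own proof in Supplementary Section~S3 commits exactly the same error, treating $\arctan(2\pi/(dw)+O(w^{-2}))$ as $2\pi/(dw)+O(w^{-2})$ when the argument of $\arctan$ can be $\Theta(1)$; the lemma statement is incorrect as written, with a flawed proof. What saves the downstream argument (the lower bound of $1.9\pi/w$ on the potential drop at a height change) is that the deficit $G(\alpha)=\arctan\alpha-\tfrac{2(\sqrt{1+\alpha^2}-1)}{\alpha}$ is always nonpositive, so it only makes $P(d+u)-P(d)$ more negative. The statement the argument actually needs is a one-sided bound of the form $P(d+u)-P(d)\le -c\cdot 2\pi/w$ for some constant $c$ close to $1$, uniformly over $10/w<d<0.5$, and that does hold; the asserted two-sided $=-2\pi/w+O(1/w^2)$ does not.
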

\begin{proof}
We will calculate the difference in $\phi$ and $r$ separately.
Recall that $d=\Omega(1/w)$.
For now, we let $u=a+bi$, that is, $a=2\sin^2(\pi/w),b=2\sin(\pi/w)\cos(\pi/w)=\sin (2\pi/w)$.
For $\phi$, we have:
\begin{align*}
\phi(d+u)-\phi(d) =& \text{arctan}(b/(d+a)) \\
=& \text{arctan}((2\pi/w+O(w^{-3}))/(d+O(w^{-2}))) \\
=& \text{arctan}(2\pi/dw + O(w^{-2})) & d=\Omega(w^{-1})\\ 
=& 2\pi/dw + O(w^{-2}) & \text{arctan}(x) = x+O(x^{-3})
\end{align*}

For $r$, we have:
\begin{align*}
r(d+u) - r(d) =& \sqrt{(d+a)^2+b^2}-d \\
&= \sqrt{d^2+(a+b)^2+2ad}-d \\
&= \sqrt{d^2+4\sin^2(\pi/w)(1+d)}-d \\
&= \frac{4\sin^2(\pi/w) (1+d)}{\sqrt{d^2+4\sin^2(\pi/w)(1+d)}+d} \\
&= \frac{4\pi^2(1+d)/w^2 + O(w^{-4})}{2d+O(w^{-2})} & \sin^2(x)=x^2+O(x^4)=O(x^2)\\
&= \frac{2\pi^2(1+d)}{dw^2}+O(w^{-3}) & d=\Omega(1/w), 1/(d+O(w^{-2}))=1/d+O(w^{-1})
\end{align*}

Merging the two terms we have:
\begin{align*}
P(d+u) - P(d) &= -w(r(d+u)-r(d))/\pi+\phi(d+u)-\phi(d) \\
&= -(2\pi/w) - (2\pi/dw) + (2\pi/dw) + O(w^{-2}) \\
&= -2\pi/w + O(w^{-2})
\end{align*}

This finishes the proof.
 \cqed \end{proof}

Combining previous two lemmas, we can analyze the potential drop for all possible moves and prove the upper bound.

\begin{restatable}{lemma}{lmpotentialb} If $z'=r_w^{-1}z$ with different height, $E(z)-E(z') \geq 3.9\pi/w$ for sufficiently large $w$.
\end{restatable}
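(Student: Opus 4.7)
The plan is to reduce to the same-height case (Lemma~\ref{lm:potentiala} and the trivial $h=0$ computation that precedes it) by continuously interpolating the rotation, and to pick up an extra $2\pi/w$ contribution at each height transition via Lemma~\ref{slm:potential_jump}.

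The first observation is that on $\mathcal{R}$ a pure rotation strictly decreases the imaginary part: from $\im(z') = \cos(2\pi/w)\im(z) - \sin(2\pi/w)\re(z)$ and $\re(z) > 10/w > 0$ we get $\im(z') < \im(z)$, so the height can only weakly decrease and the hypothesis ``different height'' forces $h(z') = h(z) - k$ for some integer $k \geq 1$. Because $\im(z) - \im(z') = O(1/w)$ and $\im(u) = \Theta(1/w)$, $k$ is bounded by a constant depending only on $\sigma$.

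I would then consider the continuous rotation $z(\theta) = e^{-i\theta} z$ for $\theta \in [0, 2\pi/w]$. Since $\im(z(\theta))$ strictly decreases, the height function $\theta \mapsto h(z(\theta))$ is piecewise constant on this interval, dropping by exactly one at $k$ transition angles $0 < \theta_1 < \cdots < \theta_k < 2\pi/w$. Within each constant-height sub-arc the representative $s(z(\theta))$ varies smoothly, while at each $\theta_i$ it jumps by exactly $+u$ (since $h$ drops by one and $z(\theta_i)$ itself is continuous). For each sub-arc of constant height at least one, Lemma~\ref{slm:potential_general} applied to one-sided limits at its endpoints yields a smooth potential drop of at least $1.9\,\Delta\theta$; for a sub-arc at height zero, the explicit calculation for same-height zero preceding Lemma~\ref{slm:potential_general} gives an exact drop of $\Delta\theta$. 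Summing, the total smooth drop is at least $2\pi/w$ in the worst case (attained when $h(z')=0$ and the height-zero sub-arc dominates). Each of the $k$ transition jumps contributes an additional drop of $2\pi/w - O(1/w^2)$ by a mild extension of Lemma~\ref{slm:potential_jump}, so the cumulative jump drop is at least $2\pi/w - O(1/w^2)$ since $k \geq 1$. Adding the two,
\begin{equation*}
E(z) - E(z') \;\geq\; 2\pi/w + 2\pi/w - O(1/w^2) \;=\; 4\pi/w - O(1/w^2) \;\geq\; 3.9\pi/w
\end{equation*}
for sufficiently large $w$.

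The main obstacle is the extension of Lemma~\ref{slm:potential_jump} from a purely real base point $d$ to the pre-jump representative, whose real part lies in $(10/w,0.5)$ as required but whose imaginary part is only $O(1/w)$ rather than zero. Because the proof of Lemma~\ref{slm:potential_jump} is a Taylor expansion in $1/w$ whose leading $-2\pi/w$ term arises from the first-order behaviour of $\arctan(b/(d+a))$, an $O(1/w)$ perturbation of the imaginary direction perturbs the result by only $O(1/w^2)$, which is absorbed into the existing error term; I expect this to be the only place in the argument requiring care. Verifying this uniformly, together with the limiting arguments needed to invoke Lemma~\ref{slm:potential_general} at the transition angles where the height is ambiguous, is the only routine technical work.
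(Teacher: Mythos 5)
Your proposal is correct and follows essentially the same argument as the paper: split the rotation arc at the height-transition angle(s), bound the constant-height sub-arcs via Lemma~\ref{slm:potential_general} (or the exact computation at height zero), bound each height jump via Lemma~\ref{slm:potential_jump}, and add. One small note: the worry in your last paragraph is a non-issue — at the transition angle $\theta_i$ the representative $z(\theta_i)-h u$ lies exactly on the real axis (that is precisely what triggers the height change), so Lemma~\ref{slm:potential_jump} applies with a genuinely real $d$ and no perturbation argument is needed.
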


\begin{proof}

First of all, the height will only decrease since $z'$ is generated by rotating $z$ around origin in the first quadrant, and $\im(z') < \im(z)$.
For now, assume the height of $z$ is $h$ and height of $z'$ is $h-1$.
We consider the movement of $s(z)$ while rotating from $z$ to $z'$.
There exists one point $z''$ on the arc from $z$ to $z'$ such that from $z$ to $z''$ the height of the point is $h$, and from $z''$ to $z'$ the height is $h-1$.
We can now divide the movement into three parts:
$$
E(z) - E(z') = (E(z) - E(z'')) + (E(z'') - E(z'' + \delta)) + (E(z'' + \delta) - E(z'))
$$
where $\delta$ is an infinitesimal value such that height of $z''$ is $h$ and height of $z''+\delta$ is $h-1$.
The first and last term correspond to the rotation process with constant height.
If the height is 0, the change in potential is exactly the degree rotated.
Otherwise, as shown in Lemma~\ref{slm:potential_general}, the change in potential is at least 1.9 times degree rotated.
Since the total rotated degrees is $2\pi/w$, these two terms add up to at least $2\pi/w$.
The second term corresponds to the change of height as described in Lemma~\ref{slm:potential_jump}, and for sufficiently large $w$, it is at least $1.9\pi/w$.
Adding both terms up, we get $3.9\pi/w$ as desired. 
We can use the same technique if the height drops more than 1 and yield at least the same bounds.
 \cqed \end{proof}

\begin{lemma} $L=O(w^2)$ as in Definition~\ref{df:longest_local_trajectory}.
\end{lemma}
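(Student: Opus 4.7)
The plan is to combine the per-step potential-drop guarantees already established in this section with a bound on the total range of $E$ over the feasible region. First I would use the monotonicity of $\re(z_i)$ from Lemma~\ref{lm:weightin_monotonicity} to split the trajectory into two consecutive portions based on whether $\re(z_i) \leq 10/w$ or $\re(z_i) > 10/w$. For the initial portion inside the narrow strip $(0, 10/w] \times [0, \wmax]$, Lemma~\ref{lm:naivetrajectory} applied with $d = 10/w$ immediately gives a length bound of $O(w^2)$.

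For the remaining portion inside $\mathcal{R} = (10/w, 0.5) \times [0, \wmax]$, I would apply a standard amortized argument with the potential $E(z) = P(s(z))$. The two potential-drop lemmas just proved (for same-height and different-height rotations), together with the trivial zero-height case which gives a drop of exactly $2\pi/w$, guarantee that $E$ decreases by at least $2\pi/w$ per rotation inside $\mathcal{R}$. Hence the length of this portion is at most the total range of $E$ over $\mathcal{R}$ divided by $2\pi/w$, and it only remains to show that this range is $O(w)$.

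To bound the range, the key observation is that each representative $s(z) = z - ju$ sits just above the real axis, so the integer height satisfies $j \leq \im(z)/\im(u) \leq \wmax/\sin(2\pi/w) = O(w^2)$. Since $\re(u) = 1 - \cos(2\pi/w) = \Theta(1/w^2)$, the product $j \cdot \re(u)$ is bounded by a constant, so both $\re(s(z)) = \re(z) - j\re(u)$ and $r(s(z))$ are $O(1)$. Therefore $|E(z)| = |{-w\, r(s(z))/\pi + \phi(s(z))}| = O(w)$, and dividing by the $\Omega(1/w)$ per-step drop yields $O(w^2)$ steps inside $\mathcal{R}$. Combined with the strip bound, this gives $L = O(w^2)$.

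The main obstacle was already overcome in establishing the potential-drop lemmas themselves, which required a careful analysis of rotations (possibly with discrete changes in height) around arbitrary centers. The aggregation I am sketching is essentially arithmetic; the one subtlety worth flagging is that monotonicity of $\re(z_i)$ is precisely what lets me cleanly separate the trajectory's time in the strip versus in $\mathcal{R}$ without having to account for re-entries, and that replacing $\mathcal{R}$ by the full feasible region would be illegitimate since the potential-drop lemmas implicitly need $r(s(z)) = \Omega(1/w)$ to avoid stagnation.
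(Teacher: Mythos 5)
Your proposal is correct and follows essentially the same route as the paper: split the trajectory at $\re(z)=10/w$, bound the strip portion via Lemma~\ref{lm:naivetrajectory}, and bound the portion in $\mathcal{R}$ by dividing the $O(w)$ range of the potential $E$ by its $\Omega(1/w)$ per-step drop. Your derivation of $r(s(z))=O(1)$ via the explicit product bound $j\cdot\re(u)=O(1)$ is a slightly more computational route to the same fact the paper gets from $\re(s(z))<0.5$ and $\im(s(z))<1/w$, but the structure of the argument is identical.
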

\begin{proof}
We can divide the trajectory into two parts.
The trajectory in the region $(0, 10/w) \times [0, \wmax]$ is $O(w^2)$ long as seen in Lemma~\ref{lm:naivetrajectory}.
The potential function $E(z)$ has a maximum value of $\pi/2$ and minimum value of $O(w)$.
The minimum holds because $r(s(z)) = \sqrt{\re(s(z))^2+\im(s(z))^2}$ is upper bounded by a constant for $\re(s(z)) < 0.5$ and $\im(s(z)) < 1/w$ (otherwise $s(z)-ju$ for $j>0$ would be the representative).
As shown in previous lemmas, each move decreases $E(z)$ by $\Omega(1/w)$, so at most $O(w^2)$ steps are possible in the region $(10/w, 0.5) \times [0, \wmax]$.  \cqed \end{proof}

With Lemma~\ref{lm:longestlocalrelaxation}, we conclude $\mks$ is a UHS with remaining path length $O(w^3)$.

\section{Correctness Proof for even $w$}
\label{supp:even_correctness}

The following lemma implies the created path avoids the Mykkeltveit set.

\begin{lemma} \label{lm:evenbound}
  The sequence generated from the algorithm satisfies $\im(P(x)) > 0$ at every step.
\end{lemma}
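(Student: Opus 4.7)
The plan is to factor the embedding so that pure and impure rotations act transparently on separate components. Encoding the state at time $T$ as ring contents $y(T) \in \{0,1\}^w$ together with pointer $p(T) \in \{0,\ldots,w-1\}$, the \wmer being read embeds as
\begin{equation*}
  P(x_T) \;=\; r_w^{\,1-p(T)}\,Y(T), \qquad Y(T) \;=\; \sum_{t=0}^{w-1} y_t(T)\,r_w^{t}.
\end{equation*}
A pure rotation advances $p$ by $1$ and leaves $Y$ unchanged, so $P$ rotates clockwise about the origin by $2\pi/w$. An impure rotation $S_0$ at tag $t$, applied only when $y_t = 1$ (easy to verify from the schedule, since the only initially-zero tag is $w{-}1$ and no $Q_j$ revisits it), advances $p$ and replaces $Y$ by $Y - r_w^{t}$. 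The trajectory of $P$ inside round $Q_j$ therefore lives on five circles centered at the origin, one per distinct value of $Y$, with discrete jumps at the four impure operations.

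With $w = 2m$, $a = m-1$, $b = w-1$ one has $r_w^{a} = -r_w^{-1}$ and $r_w^{b} = r_w^{-1}$, so the quadruple sum $r_w^{a-j} + r_w^{a+j} + r_w^{b-j} + r_w^{b+j}$ telescopes to $0$, confirming that $Y$ returns to its starting value $Y_0 = -r_w^{-1}$ after the four impures of $Q_j$. A direct calculation gives the intermediate states $Y_1 = r_w^{-1}(r_w^{-j}-1)$, $Y_2 = r_w^{-1}(c_j - 1)$ with $c_j := 2\cos(2\pi j/w)$, and $Y_3 = r_w^{-1}(c_j - 1 - r_w^{-j})$. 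For $j \le \lceil w/8 \rceil$ we have $c_j > 1$, and factoring through the identity $e^{i\theta} - 1 = 2i\sin(\theta/2)e^{i\theta/2}$ yields $\arg Y_0 = \pi - 2\pi/w$, $\arg Y_1 = -\pi/2 - \pi(j+2)/w$, $\arg Y_2 = -2\pi/w$, and $\arg Y_3 = \pi/2 + \pi(j-2)/w$.

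The condition $\im(P(x_T)) > 0$ is equivalent to $\arg Y(T) + (1 - p(T))\cdot 2\pi/w \in (0,\pi) \pmod{2\pi}$, which pins down a ``safe'' interval of pointer values for each phase. Phase $0$ uses $p \in \{j{-}1,\ldots,m{-}1{-}j\}$ and must sit in $(0,m)$; the initial pure rotation is precisely what pushes the $Q_1$ range up to start at $p = 1$, avoiding the zero at $p = 0$. Phase $1$ uses $\{m{-}j,\ldots,m{-}1{+}j\}$ and must sit in $(m/2 - j/2,\,3m/2 - j/2)$. Phase $2$ uses $\{m{+}j,\ldots,w{-}1{-}j\}$ and must sit in $(m,w)$. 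Phase $3$ wraps as $\{w{-}j,\ldots,w{-}1,0,\ldots,j{-}1\}$ and must sit in $(3m/2 + j/2,\,w) \cup [0,\,m/2 + j/2) \pmod w$. Each inclusion reduces to an elementary inequality on $j$; the tightest is the upper end of phase $1$, $m{-}1{+}j < 3m/2 - j/2$, which forces $j < (m+2)/3$. This is comfortably satisfied by the cap $j \le \lceil w/8 \rceil \approx m/4$ once $w$ is large enough (concretely $m > 4$).

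The main obstacle is the bookkeeping: one must track five values of $Y$ per round, four pointer ranges (plus the wrap-around in phase $3$), and the boundary between successive rounds, where the last impure of $Q_j$ restores $Y_0$ and flows directly into phase $0$ of $Q_{j+1}$. The initial pure rotation before $Q_1$ is what strictly lifts the trajectory off the real axis, while the cap $\lceil w/8 \rceil$ is exactly what guarantees both $c_j > 1$ (needed for the sign of $\arg Y_2$) and the tight phase-$1$ inclusion. Once the factorization $P = r_w^{1-p}\,Y$ is in place, no further machinery is required: the quadruple cancellation supplied by the construction does the geometric work, and every phase is verified by a single sine inequality.
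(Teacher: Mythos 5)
Your proof is correct and takes essentially the same route as the paper's. Both arguments factor the embedding into a ring-contents component that changes only at impure rotations and a pointer-dependent rotation factor --- the paper calls the former the ``absolute embedding'' $P(y)$, which is $r_w Y$ in your notation, so the decomposition $P(x) = r_w^{1-p} Y = r_w^{-p} P(y)$ is the same object --- and then verify that the polar angle of the actual embedding stays in $(0,\pi)$ by tracking $\arg$ through the four phases of a round. Your intermediate computations of $Y_0,\ldots,Y_3$ and their arguments agree with the entries in the paper's phase table, and the reduction of the condition $\im(P(x))>0$ to a safe pointer interval per phase is a slightly more explicit packaging of the paper's ``starting phase / ending phase'' bookkeeping.

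Two small inaccuracies in the prose, neither of which breaks the argument. First, within a round $Y$ takes four distinct values $Y_0,Y_1,Y_2,Y_3$ (returning to $Y_0$), so ``five circles'' should be four. Second, the binding constraint is not the upper end of phase $1$: the lower end of phase $3$ requires $w-j > (3m+j)/2$, i.e.\ $j < m/3$, which is strictly tighter than phase $1$'s $j < (m+2)/3$ and is exactly the paper's condition $\theta < \pi/3$, i.e.\ $j < w/6$. Your claim that $m > 4$ suffices is therefore a bit optimistic (e.g.\ $m = 5, 6, 9$ fail $\lceil m/4\rceil < m/3$); one needs $w$ somewhat larger for the cap $\lceil w/8\rceil$ to fall below $w/6$, but this is subsumed by the ``sufficiently large $w$'' hypothesis so the conclusion is unaffected.
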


\begin{proof}
We define the \emph{absolute embedding} as $P(y)$ on the paper ring model.
This embedding does not change during pure rotations, and if $P(y) = (r, \phi)$ in polar coordinate, the real embedding is $r^{-j}_wP(y)$ or $(r, \phi-2\pi j/w)$ in polar coordinate with pointer at tag $j$.

For each quadruple, we let $v_i = r_w^{r_i+1}$ denote the weight in the absolute embedding for tag $r_i$, where $1\leq i\leq 4$.
We also let $\theta$ be the polar angle of $v_4$.
As seen before, the absolute embeddings are $v_1 = -e^{-i\theta}, v_2 = -e^{i\theta}, v_3 = e^{-i\theta}, v_4 = e^{i\theta}$, and the corresponding polar angles are $\phi(v_1) = \pi-\theta, \phi(v_2) = \pi+\theta, \phi(v_3) = 2\pi-\theta, \phi(v_4) = \theta$.

We now compute the change of $P(y)$, the absolute embedding, and polar angle of $P(x)$ (we use the phrase ``phase" for it) throughout a round as in the following table:

\begin{table}[h]
\centering
\begin{tabular}{|l|l|l|l|l|}
\hline
Stage                      & $P(y)$ Composition      & $P(y)$ Polar Coordinate                   & Starting Phase      & Ending Phase       \\ \hline
Rotation to Tag $r_1$ & $-1$                    & $(1, \pi)$                           & $\pi-\theta$        & $\theta$           \\ \hline
Rotation to Tag $r_2$ & $-1-v_1$                & $(2\sin(\theta/2), 3\pi/2-\theta/2)$ & $(\pi+\theta)/2$    & $(\pi-3\theta)/2$  \\ \hline
Rotation to Tag $r_3$ & $-1-v_1-v_2$            & $(2\cos(\theta)-1, 0)$               & $\pi-\theta$        & $\theta$           \\ \hline
Rotation to Tag $r_4$ & $-1-v_1-v_2-v_3$ & $(2\sin(\theta/2), \pi/2+\theta/2)$  & $(\pi+3\theta)/2$   & $(\pi-\theta)/2$   \\ \hline
End of round               & $-1$                    & $(1, \pi)$                           & \multicolumn{2}{l|}{$\pi-\theta-2\pi/w$} \\ \hline
\end{tabular}
\end{table}

Since $\theta$ increases by $2\pi/w$ each round, the ending condition for one round matches the starting condition for next round.
Before the first round (as we do one pure rotation before first quadruple in the sequence of rotations), the polar angle is at $\pi-2\pi/w$, which matches the starting condition for round 1 with $\theta=2\pi/w$.

As long as $\theta < \pi/3$ (or in our constructions $i<w/6$), during all rotations the polar angle stays between 0 and $\pi$, meaning it stays strictly above the real line and thus avoids $\mks$.
\cqed \end{proof}

\section{Construction for odd $w$}
\label{supp:odd_construction}

We focus on a particular portion of the path from last section with the property that all \wmers in the path are well above the real axis.
We also define the set of \emph{critical embeddings} for a round as the set of embeddings right before or after an impure rotation (or a write in the tape model).
For a quadruple, given $\theta$, the absolute embedding (defined in the proof of Lemma~\ref{lm:evenbound}) and the polar angle of all critical embeddings can be read from the table above.

\begin{lemma} For sufficiently large $w=2m$, the movement sequences defined above from $j=w/20$ to $j=w/10$ satisfies $\im(P(x)) > 0.05$ at every step.
\end{lemma}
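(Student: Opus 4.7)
The plan is to refine the analysis of Lemma~\ref{lm:evenbound} so as to obtain a quantitative lower bound on $\im(P(x))$, restricted to the middle range of rounds $j \in [w/20, w/10]$, equivalently $\theta = 2\pi j/w \in [\pi/10, \pi/5]$. The structural fact I would reuse is that within each of the four stages of a round, pure rotations act on $P(x)$ by multiplication by $r_w^{-1}$, so $|P(x)| = |P(y)|$ stays constant over a stage while $\phi(P(x))$ sweeps monotonically clockwise, in steps of $2\pi/w$, through the interval between the Starting and Ending Phases recorded in the table of Lemma~\ref{lm:evenbound}. Therefore $\im(P(x)) = |P(y)|\,\sin(\phi(P(x)))$ over any one stage attains its minimum at whichever endpoint of the phase interval lies further from $\pi/2$, and it suffices to check each of the four resulting endpoint expressions.

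Reading the table and applying this principle, the four bounds are: in Stage~1, $\im(P(x)) \geq \sin\theta$; in Stage~2, $\im(P(x)) \geq 2\sin(\theta/2)\cos(3\theta/2)$; in Stage~3, $\im(P(x)) \geq (2\cos\theta - 1)\sin\theta$; and in Stage~4, $\im(P(x)) \geq 2\sin(\theta/2)\cos(3\theta/2)$ by the same symmetry that already aligns Stages~2 and~4 in the proof of Lemma~\ref{lm:evenbound}. Each of these is a continuous function of $\theta$ on $[\pi/10, \pi/5]$ whose minimum is readily bounded below by a constant that is easily verified to exceed $0.05$, for instance by evaluating at $\theta = \pi/10$ (where $\sin\theta$ and $(2\cos\theta-1)\sin\theta$ are smallest) and at $\theta = \pi/5$, and checking that the product $2\sin(\theta/2)\cos(3\theta/2)$ has no smaller value in between (in fact $\geq 0.27$ throughout).

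The only obstacle, which is really just bookkeeping, is to confirm that $\phi(P(x))$ never crosses $0$ or $\pi$ during any stage, so the minimum of $\sin(\phi)$ really is attained at an endpoint of the phase interval rather than requiring a sign analysis. This amounts to the condition $\theta < \pi/3$ that was already crucial in Lemma~\ref{lm:evenbound}, and it holds with comfortable slack on the sub-range $\theta \leq \pi/5$. No new geometric idea beyond the round-by-round decomposition of the previous proof is required; the point of restricting $j$ to $[w/20, w/10]$ is simply that the resulting $0.05$ margin is what will absorb the perturbations introduced when the odd-$w$ path is approximated from the even construction at $w' = 2w$ in the remainder of this section.
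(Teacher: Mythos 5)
Your proof is correct and takes essentially the same route as the paper. Both arguments reduce the lemma to the round-by-round table established for Lemma~\ref{lm:evenbound}, observe that within each stage $|P(x)|$ is constant and the phase sweeps monotonically between the tabulated endpoints so that $\im(P(x))$ is minimized at a critical embedding, and then bound $\im(P(x)) = r\sin\phi$ there. The only difference is cosmetic: the paper multiplies a single uniform lower bound for $r$ ($2\sin(\theta/2) > 0.3$) by a single uniform lower bound for $\sin\phi$ ($\sin\theta > 0.3$) over all critical embeddings at once, giving $> 0.09$, whereas you evaluate the product $r\sin\phi$ stage by stage, obtaining the sharper $\geq 0.27$. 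Your version is tighter but buys nothing the paper needs; both clear the $0.05$ threshold comfortably, and your closing remark about why the slack matters (to absorb the $O(1/w)$ perturbations in the odd-$w$ approximation) is exactly the role the lemma plays in Section~\ref{supp:odd_construction}.
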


\begin{proof} Note that the choice of $j$ means $\theta$ is between $2\pi/10$ and $2\pi/20$.
As pure rotations are arcs over upper halfplane, $\im(P(x))$ is the lowest at the endpoints of pure rotation, or as we defined above, the critical embeddings.
However, at these points, the shortest embedding is $2\sin(\theta/2) > 0.3$, and the smallest polar angle is $\theta$ with $\sin(\theta) > 0.3$.
We then have $\im(P(x)) = r(P(x))\sin(\phi(P(x))) > 0.3 \times 0.3 > 0.05$.
 \cqed \end{proof}

Now let $w=2m+1$ and again assume binary alphabet $\sigma=2$.
Let $a_0 = m-1, a_1 = m, b = w-1 = 2m$. 
The corresponding roots of unity (weights for tags in the absolute embedding) are $r_w^{b+1} = 1$ for $b$, and $r_w^{a_0+1}$ and $r_w^{a_1+1}$ are the roots of unity directly above and below the vector $-1$.
The starting \wmer is full 1 except $a_0, a_1$ and $b$ set to zero. 
The resulting absolute embedding is on the real axis with value $2\cos(\pi/w)-1 = -1 + 4\sin^2(\pi/2w) = -1+O(w^{-2})$.
We now construct a sequence of quadruples such that at the end of every quadruple the absolute embedding is still on the real axis with value close to $-1$.

We let $j$ range from $w/20$ to $w/10$ as described before, but increment it by 2 every step.

\begin{definition}[Imperfect Quadruples]
For each $j$, we construct two candidate quadruples: $Q_j^+ = \{a_0-j, a_1+j, b-j, b+j\}, Q_j^- = \{a_0-j+1, a_1+j-1, b-j, b+j\}$.
Both quadruples satisfy that sum of their corresponding roots of unity is on real axis, which we denote as $q_j^+$ and $q_j^-$.
We have $|q_j| = 2(\cos(\theta+\pi/w)-\cos(\theta)) = O(1/w)$ where $\theta$ is either $2\pi j/w$ or $2\pi j/w + \pi/w$, and are of opposite sign: $q_j^+ >0, q_j^- < 0$.

We define $l_{j}$ to be the embedding by setting all bases in quadruples $\{Q_k \mid k\leq j\}$ to zero from the initial \wmer.
Our construction of the imperfect quadruples ensures $l_j$ is a real number.

We decide the imperfect quadruple to use depending on the sign of $l_{j-2} + 1$.
If $l_{j-2}$ is smaller than $-1$, we pick $Q_j^-$ and we have $l_j = l_{j-2} - q_j^-$.
Otherwise, we pick $Q_j^+$ and we have $l_j = l_{j-2} - q_j^+$.
In both cases, we assured $|l_j +1 | \leq \max(|q_j|, |l_{j-2}+1|)$, which is $O(1/w)$ by induction on $j$.
\end{definition}

The sequence of rotations is defined in exactly the same way as before.
The analyses are similar, as there are between $w+1$ and $w+3$ moves every round and $O(w^2)$ total steps, no two quadruples share tags, and we finish the proof with the following lemma:

\begin{lemma}
For every round of moves using imperfect quadruples, the embedding satisfies $\im(P(x))>0$ at all times.
\end{lemma}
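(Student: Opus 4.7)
The plan is to reduce the odd-$w$ argument to the even case treated in the previous lemma by viewing imperfect quadruples as $O(1/w)$ perturbations of perfect ones. Recall that in the even case with $w/20 \le j \le w/10$, the critical embeddings $P(x)$ at the endpoints of pure rotations had magnitude $\ge 2\sin(\theta/2) > 0.3$ and polar angle $\phi$ with $\sin(\phi) > 0.3$, giving the constant lower bound $\im(P(x)) > 0.05$. The imaginary part on a pure rotation arc (a circular arc about the origin) is minimized at one of the endpoints, so this constant bound holds along the whole trajectory. My goal is to show that each critical embedding in the odd-$w$ construction differs from the corresponding even-case critical embedding by only $O(1/w)$, after which $\im(P(x)) > 0.05 - O(1/w) > 0$ for sufficiently large $w$.

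First, I would bound the absolute embedding $P(y)$ at the start of each round. The definition of imperfect quadruples maintains $|l_j + 1| = O(1/w)$ by induction (each $q_j^\pm$ is $O(1/w)$, and the sign choice ensures $|l_j+1| \le \max(|q_j|, |l_{j-2}+1|)$). So at the start of processing $Q_j^\pm$, the paper-ring embedding $P(y)$ is within $O(1/w)$ of $-1$. Next, inside a single round, as we successively zero out the four bases of $Q_j^+$ or $Q_j^-$, the partial absolute embeddings change by the same roots of unity as in the even case up to $O(1/w)$: $Q_j^+$ uses the exact tags of the even construction, and $Q_j^-$ shifts the tags $a_0-j, a_1+j$ by one position, which perturbs the associated roots of unity $r_w^{a_0-j+1}, r_w^{a_1+j+1}$ by $O(1/w)$ each.

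Combining these, every intermediate partial sum $P(y)$ during a round (not just at round boundaries) lies within $O(1/w)$ of the corresponding partial sum in the even-$w$ construction. Since during pure rotations $P(x) = r_w^{-j} P(y)$ is only a rotation of $P(y)$ by a real angle, the same $O(1/w)$ bound passes to the critical embeddings $P(x)$, and by the arc-endpoint argument, to the entire trajectory. Therefore $\im(P(x)) > 0.05 - O(1/w)$, which is strictly positive for $w$ large enough.

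The main obstacle is bookkeeping: the $O(1/w)$ perturbation must be shown uniformly over all constantly many critical embeddings within a round, and for both choices $Q_j^+$ and $Q_j^-$. One has to verify that each of the four partial sums in the even-case table (the entries such as $-1$, $-1-v_1$, $-1-v_1-v_2$, $-1-v_1-v_2-v_3$) is perturbed by at most $O(1/w)$, so that its polar coordinates — and hence the lower bound on $|P(y)|$ and on $\sin(\phi(P(x)))$ — degrade by at most $O(1/w)$. Since the bounds $0.3$ in the even case are strict constants while the perturbation vanishes with $w$, the conclusion $\im(P(x)) > 0$ follows.
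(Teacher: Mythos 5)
Your strategy matches the paper's: treat the imperfect quadruples as $O(1/w)$ perturbations of a reference construction known to have $\im(P(x)) > 0.05$ at the critical embeddings, then conclude positivity for large $w$. Two points are glossed over that the paper treats explicitly.

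First, for odd $w$ there is no $w$-mer ``even construction'' to serve as reference, and the statement that $Q_j^+$ ``uses the exact tags of the even construction'' is not correct: both $Q_j^+$ and $Q_j^-$ are imperfect quadruples (their associated root-of-unity sums $q_j^\pm$ are nonzero, one positive and one negative). The paper fixes the reference by mapping the tags of $Q_j^\pm$ onto $2w$-mers (keeping the roots of unity invariant via $r_w^{k+1} = r_{2w}^{2(k+1)}$), obtaining $Q_j^{'\pm}$, and comparing those to the genuinely perfect $2w$-mer quadruple $Q'_j$, to which the refined even-$w$ lemma applies with $w' = 2w$, $j' = 2j$. Second, when you write ``the same $O(1/w)$ bound passes to the critical embeddings $P(x)$,'' you are implicitly assuming that the rotation angle applied to $P(y)$ is identical between the imperfect and reference constructions. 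It is not: the tags at which the impure rotations occur differ by one position (e.g.\ $w-2-2j$ versus $w-1-2j$), which introduces an extra rotation of $\pi/w$. Since $|P(y)| = O(1)$ this only contributes another $O(1/w)$ displacement, so your conclusion survives, but this is exactly why the paper separates ``the absolute embedding differs by $O(1/w)$'' from ``the polar angle differs by at most $\pi/w$'' as two distinct contributions rather than treating the rotation as harmless.
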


\begin{proof}
Similar to our previous argument, we only need to show $\im(P(x))>0$ at the critical embeddings.
We start by constructing the following (perfect) quadruple for $2w$-mers: $Q'_j = \{w-1-2j, w-1+2j, 2w-1-2j, 2w-1+2j\}$.
As seen in last lemma, the sequence generated by this quadruple satisfies $\im(P(x)) > 0.05$ at all times, so it also holds at the critical embeddings.
We can also map the tags in $Q_j^\pm$ onto $2w$-mers by keeping the corresponding roots of unity the same: $Q_j^{'+}  = \{w-2-2j, w+2j, 2w-1-2j, 2w-1+2j\}$ and $Q_j^{'-}  = \{w-2j, w-2+2j, 2w-1-2j, 2w-1+2j\}$.

Now we fix one embedding in the critical set.
For example, at the end of writing 0 to tag $r_3$, the absolute embedding is $-1-v_1-v_2-v_3=-1+v_4$ and the polar angle is $(\pi+3\theta)/2$ for the perfect quadruple.
We will prove that for the imperfect quadruple, the embedding at this moment is similar.

The absolute embedding is a combination of $v_0$ (which is $-1$ for the perfect quadruple, and $l_{j-2}=-1+O(1/w)$ for the imperfect one) and $\{v_i\}$s.
$v_3$ and $v_4$ are the same for the two quadruples, while $v_1$ and $v_2$ are off by $\pi/w$ degrees, translating to $O(1/w)$ distance on the complex plane.
This means the absolute embedding differs by $O(1/w)$.

The polar angle relative to the absolute embedding is simply one of the $\phi(v_i)$, which is off by at most $\pi/w$.
This corresponds to an extra $\pi/w$ rotation in either direction, and since the length of the embedding is $O(1)$, it moves by $O(1/w)$ on top of the previous argument.

Combining both arguments, we show that if $z$ is the embedding for the perfect quadruple at this moment, the embedding for the imperfect quadruple $z'$ satisfies $|z'-z|=O(1/w)$.
However, $\im(z)>0.05$, so $\im(z')>0$ holds for sufficiently large $w$.
This proof works for all critical embeddings, and since $\im(z')$ is the lowest at critical embeddings, $\im(z')>0$ also holds for the whole round.

 \cqed \end{proof}

\end{document}